\colorlet{shadecolor}{gray!20}
\definecolor{ShadowColor}{RGB}{30,150,190}
\newcommand\Cshadowbox{\VerbBox\@Cshadowbox}
\def\@Cshadowbox#1{%
  \setbox\@fancybox\hbox{\fbox{#1}}%
  \leavevmode\vbox{%
    \offinterlineskip
    \dimen@=\shadowsize
    \advance\dimen@ .5\fboxrule
    \hbox{\copy\@fancybox\kern.5\fboxrule\lower\shadowsize\hbox{%
      \color{ShadowColor}\vrule \@height\ht\@fancybox \@depth\dp\@fancybox \@width\dimen@}}%
    \vskip\dimexpr-\dimen@+0.5\fboxrule\relax
    \moveright\shadowsize\vbox{%
      \color{ShadowColor}\hrule \@width\wd\@fancybox \@height\dimen@}}}
\newcommand{\B}{\ensuremath{\mathcal{B}}}
\newcommand{\F}{\ensuremath{\mathcal{F}}}
\renewcommand{\Pr}[1]{\ensuremath{\mathbf{Pr}\left[#1\right]}}
\newcommand{\mypara}[1]{\subparagraph*{#1}}
\newcommand{\COMMENTED}[1]{{}}
\newcommand{\Reals}{\mathbb{R}}
\newcommand{\Nats}{\mathbb{N}}
\newcommand{\Ex}[1]{\ensuremath{\mathbf{E}[#1]}}
\newcommand{\Integers}{\mathbb{Z}}
\renewcommand{\geq}{\geqslant}
\renewcommand{\leq}{\leqslant}
\renewcommand{\ge}{\geqslant}
\renewcommand{\le}{\leqslant}
\newcommand{\radius}{\mathrm{radius}}
\newcommand{\dist}{\mathrm{dist}}
\newcommand{\weight}{w}
\newcommand{\ceil}[1]{\left\lceil #1 \right\rceil}
\newcommand{\floor}[1]{\left\lfloor #1 \right\rfloor}
\newcommand{\eps}{\varepsilon}
\newcommand{\etal}{\emph{et al.}\xspace}
\newcommand{\nw}[1]{\textcolor{black}{#1}}
\newcommand{\mdb}[1]{}
\newcommand{\mm}[1]{}
\newcommand{\myomit}[1]{}
\newcommand{\lb}[1]{}
\newcommand{\opt}{\mbox{{\sc opt}}\xspace}
\newcommand{\optkz}{\opt_{k,z}}
\newcommand{\et}{t_{\mathrm{exp}}}
\newcommand{\at}{t_{\mathrm{arr}}}
\newcommand{\myopt}{\mathrm{opt}}
\newcommand{\Bopt}{\B_{\myopt}}
\newcommand{\Pout}{P_{\mathrm{out}}}
\newcommand{\zeronorm}[1]{\| #1 \|_0}
\newcommand{\rrounds}{R}
\newcommand{\feasibleRadius}{\hat{r}}
\newcommand{\ball}{b}
\newcommand{\greedy}{\textsc{Greedy}\xspace}
\newcommand{\computeCoreset}{\textsc{MBCConstruction}\xspace}
\newcommand{\outliersBound}{\frac{6z}{m}+3\log{n}}
\newcommand{\peps}{\gamma}
\newcommand{\streaming}{\textsc{InsertionOnlyStreaming}\xspace}
\newcommand{\UpdateCoreset}{\textsc{UpdateCoreset}\xspace}
\newenvironment{proofof}[1]{\begin{trivlist} \item {\bf Proof
#1:~~}}
  {\qed\end{trivlist}}
\newcommand\Tstrut{\rule{0pt}{3.5mm}}         
\newenvironment{myquote}%
  {\list{}{\leftmargin=5mm \rightmargin=5mm}\item[]}%
  {\endlist}
\newenvironment{claiminproof}{\begin{myquote}\noindent\emph{Claim.}}{\end{myquote}}
\newenvironment{proofinproof}{\begin{myquote}\noindent\emph{Proof.}}{\hfill $\lhd$ \end{myquote}}
\title{
$k$-Center Clustering with Outliers in the MPC and Streaming Model} 
\titlerunning{$k$-Center Clustering with Outliers in the MPC and Streaming Model} 
\author{Mark de Berg}{Department of Computer Science, TU Eindhoven, the Netherlands}{m.t.d.berg@tue.nl}{}{}
\author{Leyla Biabani}{Department of Computer Science, TU Eindhoven, the Netherlands}{l.biabani@tue.nl}{}{}
\author{Morteza Monemizadeh}{Department of Computer Science, TU Eindhoven, the Netherlands}{m.monemizadeh@tue.nl}{}{}
\authorrunning{M.~de Berg and L.~Biabani and M.~Monemizadeh} 
\authorrunning{L.~Biabani and M.~de Berg and M.~Monemizadeh} 
\keywords{$k$-center problem, outliers, massively parallel computing, streaming} 
\begin{document}
\thispagestyle{empty}
\maketitle

\begin{abstract}
Given a point set $P \subseteq X$ of size $n$ in a metric space $(X,\dist)$ of doubling dimension~$d$ 
and two parameters $k \in \mathbb{N}$ and $z \in \mathbb{N}$, 
the $k$-center problem with $z$ outliers asks to return a set $\mathcal{C}^* = \{ c^*_1,\cdots,c^*_k\} \subseteq X$ of $k$ centers 
such that the maximum distance of all but $z$ points of $P$ to their nearest center in $\mathcal{C}^*$ is minimized. 
An $(\eps,k,z)$-coreset for this problem is a weighted point set $P^*$ such that 
an optimal solution for the $k$-center problem with $z$ outliers on $P^*$ gives a
$(1\pm\eps)$-approximation for the $k$-center problem with $z$ outliers on~$P$. 
We study the construction of such coresets in the Massively Parallel Computing (MPC) model, 
and in the insertion-only as well as the fully dynamic streaming model. 
We obtain the following results, for any given error parameter $0 < \eps \le 1$: 
In all cases, the size of the computed coreset is $O(k/\eps^d+z)$.
\begin{itemize}
\item \textbf{Algorithms for the MPC model.} 
      In this model, the data are distributed over $m$ machines. One of these machines
      is the coordinator machine, which will contain the final answer, the
      other machines are worker machines. 
      \begin{itemize}
      \item We present a deterministic $2$-round algorithm that uses $O(\sqrt{n})$ machines,
            where the worker machines have $O(\sqrt{nk/\eps^d} + \sqrt{n}\cdot\log(z+1))$ local memory,
            and the coordinator has $O(\sqrt{nk/\eps^d}+\sqrt{n}\cdot\log(z+1) + z)$ local memory. 
            The algorithm can handle point sets~$P$ that 
            are distributed arbitrarily (possibly adversarially) over the machines.
     \item We present a randomized algorithm that uses only a single round, under
           the assumption that the input set~$P$ is initially distributed randomly over the machines.
           This algorithm also uses $O(\sqrt{n})$ machines, where the worker machines have
           $O(\sqrt{nk/\eps^d})$ local memory and the coordinator has
           $O(\sqrt{nk/\eps^d}+ \sqrt{n}\cdot\min(\log{n}, z)+z)$ local memory. 
    \item  We present a deterministic algorithm that obtains a trade-off between the number of rounds, $R$, 
           and the storage per machine. 
      \end{itemize}
\item \textbf{Algorithms and lower bounds in the streaming model.} 
      In this model, we have a single machine, with limited storage, and the point set $P$ 
      is revealed in a streaming fashion.
      \begin{itemize}
      \item We present the first lower bound for the (insertion-only) streaming model, 
            where the points arrive one by one and no points are deleted. 
            We show that any deterministic algorithm that maintains an $(\eps,k,z)$-coreset 
            must use $\Omega(k/\eps^d+ z)$ space. We complement this with a deterministic streaming algorithm  
            using $O(k/\eps^d+ z)$ space, which is thus optimal. 
      \item We study the problem in fully dynamic data streams, where points
            can be inserted as well as deleted. Our algorithm works for point sets
            from a $d$-dimensional discrete Euclidean space $[\Delta]^d$,
            where $\Delta\in \Nats$ indicates the size of the universe from which the coordinates are taken.
            We present the first algorithm for this setting. It
            constructs an $(\eps,k,z)$-coreset. Our (randomized) algorithm uses only $O((k/\eps^d+z)\log^4 (k\Delta/\eps \delta))$ space. 
            We also present an $\Omega((k/\eps^d)\log{\Delta}+z)$ lower bound for deterministic fully dynamic streaming algorithms.
      \item Finally, for the sliding-window model, where we are interested in maintaining 
            an $(\eps,k,z)$-coreset for the last $W$ points in the stream, 
            we show that any deterministic streaming algorithm that guarantees a $(1+\eps)$-approximation 
            for the $k$-center problem with outliers in~$\Reals^d$ must use $\Omega((kz/\eps^d)\log \sigma)$ space, 
            where $\sigma$ is the ratio of the largest and smallest distance between any two points in the stream.
            This improves a recent lower bound of De~Berg, Monemizadeh, and Zhong~\cite{DBLP:conf/esa/BergMZ21,DBLP:journals/corr/abs-2109-11853} 
            and shows the space usage of their algorithm is optimal. Thus, our lower bound gives a (negative) answer to 
            a question posed by De~Berg~\etal~\cite{DBLP:conf/esa/BergMZ21,DBLP:journals/corr/abs-2109-11853}.
      \end{itemize}
\end{itemize}
\end{abstract}



\setcounter{page}{1}

\section{Introduction}
Clustering is a classic topic in computer science and machine learning with applications 
in pattern recognition~\cite{DBLP:books/crc/aggarwal2013}, image processing~\cite{DHANACHANDRA2015764}, 
data compression~\cite{DBLP:journals/sj/PaekK17,DBLP:conf/ipcv/SchmiederCL09},   
healthcare~\cite{10.1007/978-981-16-4641-6_21,Priebe5995}, and more. 
Centroid-based clustering~\cite{DBLP:books/lib/Bishop07,everitt2011cluster} is a well-studied
type of clustering, due to its simple formulation and many applications. 
An important centroid-based clustering variant 
is \emph{$k$-center clustering}: given a point set $P$ in a metric space,
find $k$ congruent (that is, equal-sized) balls of minimum radius that together cover~$P$. 
The $k$-center problem has been studied extensively; see for 
example~\cite{DBLP:journals/siamcomp/CharikarCFM04,DBLP:conf/stoc/FederG88,DBLP:journals/tcs/Gonzalez85,DBLP:conf/stoc/Har-PeledM04}.
The resulting algorithms are relatively easy to implement, but often
cannot be used in practice because of noise or anomalies in the data.
One may try to first clean the data, but
with the huge size of modern data sets, this would take a data scientist 
an enormous amount of time. A better approach is to take noise and anomalies
into account when defining the clustering problem.
This leads to the $k$-center problem with outliers, where we allow a certain number
of points from the input set~$P$ to remain uncovered. 
Formally, the problem is defined as
follows. 

Let $P$ be a point set in a metric space $(X,\dist)$. Let $k \in \Nats$ and $z\in \Nats$ be two parameters. 
In the \emph{$k$-center problem with $z$ outliers} we want to compute a set of $k$ congruent balls 
that together cover all points from $P$, except for at most $z$ outliers.
We denote by $\optkz(P)$ the radius of the balls in an optimal solution.
In the weighted version of the problem, we are also given a 
function $w: P \rightarrow \mathbb{Z}^+$ 
that assigns a positive integer weight to the points in~$P$.
The problem is now defined as before, except that the total weight of the outliers 
(instead of their number) is at most~$z$.
We consider the setting where the metric space~$(X,\dist)$
has a constant doubling dimension, defined as follows.
For a point $p \in X$ and a radius~$r \ge 0$, 
let $\ball(p,r) = \{q \in X: \dist(p,q) \le r\}$ be the ball of radius $r$ around~$p$. 
The \emph{doubling dimension} of $(X,\dist)$ is the smallest~$d$ such 
that any ball~$b$ can be covered by $2^d$ balls of radius $\radius(b)/2$. 

In this paper, we study the $k$-center problem with outliers for data sets that are
too large to be stored and handled on a single machine. We consider two models.

In the first model, the data are distributed over many parallel machines~\cite{DBLP:conf/kdd/EneIM11}.
We will present algorithms for the \emph{Massively Parallel Computing (MPC) model},
introduced by Karloff, Suri, and Vassilvitskii~\cite{DBLP:conf/soda/KarloffSV10} and
later extended by Beame~\etal~\cite{DBLP:journals/jacm/BeameKS17} and Goodrich~\etal~\cite{DBLP:conf/isaac/GoodrichSZ11}. 
In this model, the input of size $n$ is initially distributed among $m$ machines, each
with a local storage of size~$s$. Computation proceeds in synchronous rounds, in which each machine 
performs an arbitrary local computation on its data and sends messages to the other machines.
The local storage should be sublinear space (that is, we require $s = o(n)$); 
 the goal is to minimize the space per machine by employing a large number
of machines in an effective manner. 
We also want to use only a few rounds of communication.
We assume that there is one designated
machine, the \emph{coordinator}, which at the end of the computation must contain the 
answer to the problem being solved; the other machines are called \emph{worker machines}.
We make this distinction because MPC clusters often have 
different CPUs and GPUs running in parallel, some of which are more powerful and have
more storage than others.

In the second model, the data arrives in a streaming fashion. 
We consider the classical, insertion-only streaming setting~\cite{DBLP:journals/jcss/AlonMS99}, and  
the dynamic streaming setting~\cite{DBLP:conf/stoc/Indyk04} where the stream contains insertions
as well as deletions of points. The goal of the streaming model is to approximately
solve a problem at hand---in our case, the $k$-center problem with outliers---using 
space that is sublinear in $n$. 
Note that there are no assumptions on the order of arrivals, that is, the stream can be
adversarially ordered.

We study the $k$-center problem with outliers in the MPC model and in the streaming model.
We will not solve the problems directly, but rather compute
a coreset for it: a small subset of the points that can be used to approximate
the solution on the actual point set. Since their introduction by Agarwal, Her-Peled
and Varadarajan~\cite{DBLP:journals/jacm/AgarwalHV04} coresets have been instrumental
in designing efficient approximation algorithms, in streaming and other models. 
For the (weighted) $k$-center problem with outliers, coresets can be defined as follows.
\begin{definition}[$(\eps,k,z)$-coreset]
\label{def:coreset}
Let $0 < \eps \le 1$ be a parameter. 
Let $P$ be a weighted point set with positive integer weights in  a metric space $(X, \dist)$ and
let $P^*$ be a subset
of $P$, where the points in $P^*$ may have different weights than the corresponding points in $P$.
Then, $P^*$ is an $(\eps,k,z)$-coreset of $P$ if the following hold:
\begin{enumerate}
    \item $(1-\eps)\cdot \optkz(P) \leq \optkz(P^*) \leq (1+\eps)\cdot \optkz(P)$.
    \item Let $B = \{\ball(c_1,r),\cdots,\ball(c_k,r)\}$ be any set of congruent balls 
    in the space $(X,\dist)$ such that the total weight of the points in $P^*$ that are not covered by $B$ is at most $z$.  
    Let $r' := r + \eps \cdot \optkz(P)$. 
    Then, the total weight of the points in $P$ that are not covered by the $k$ expanded congruent balls 
    $B' = \{ \ball(c_1,r'),\cdots ,\ball (c_k,r')\}$ is at most $z$.
\end{enumerate}
\end{definition}
Table~\ref{table:summary} lists our algorithmic results for computing small coresets,
in the models discussed above, as well as existing results.
Before we discuss our results in more detail, we make two remarks about two of the 
quality measures in the table.
\begin{table*}[t]
\begin{center}
\begin{tabular}{c|ccccc}
model & setting  &  approx. & storage  & deterministic & ref. \\
\hline
MPC        & $1$-round &  $1+\eps$ & $\sqrt{nk/\eps^{2d}} + \sqrt{n\log{n}}/\eps^{d} + z/\eps^{d}$ \Tstrut &  no & \cite{DBLP:journals/pvldb/CeccarelloPP19} \\ 
 & $1$-round &   $1+\eps$ & $\sqrt{nk/\eps^d}+ \sqrt{n}\cdot \min(\log{n}, z)+z$  &  no & here \\
\cline{2-6}
        & $1$-round &  $1+\eps$ & $\sqrt{nk/\eps^{2d}} + \sqrt{nz/\eps^{2d}}$ \Tstrut &  yes & \cite{DBLP:journals/pvldb/CeccarelloPP19} \\ 
        & $2$-round &    $1+\eps$ & $\sqrt{nk/\eps^d}
+\sqrt{n}\cdot \log (z+1) + z$  &  yes & here \\     
\cline{2-6}
        & $\rrounds$-round &   $(1+\eps)^\rrounds$ & $n^{1/(\rrounds+1)}(k/\eps^{d} +z)^{\rrounds/(\rrounds+1)}$  &  yes & here \\    
\hline        
streaming 
        & insertion-only &   $1+\eps$ & $k/\eps^d + z/\eps^{d}$  &  yes & \cite{DBLP:journals/pvldb/CeccarelloPP19} \\
        & insertion-only &   $1+\eps$ & $k/\eps^d +z$  &  yes & here \\
        & insertion-only &   $1+\eps$ & $\boldsymbol{\Omega(k/\eps^d +z)}$  &  yes & here \\
\cline{2-6}
        & sliding-window &   $1+\eps$ & $(kz/\eps^d)\log \sigma$  &  yes & \cite{DBLP:conf/esa/BergMZ21} \\
        & sliding-window &   $1+\eps$ & $\boldsymbol{\Omega((kz/\eps)\log \sigma)}$  &  yes & \cite{DBLP:conf/esa/BergMZ21} \\
        & sliding-window &   $1+\eps$ & $\boldsymbol{\Omega((kz/\eps^d)\log \sigma)}$  &  yes & here \\
\cline{2-6}
        & fully dynamic &   $1+\eps$ & $(k/\eps^d+z)\log^4 (k\Delta/\eps \delta)$  &  no & here \\
        & fully dynamic &   $1+\eps$ & $\boldsymbol{\Omega((k/\eps^d)\log \Delta + z)}$  &  yes & here \\

\hline
\end{tabular}
\end{center}
\caption{Comparison of our results to previous work.
Storage bounds are asymptotic. 
The lower bounds are shown with $\Omega(\cdot)$ notations in bold. 
All results are for a metric space of doubling dimension~$d$,
except for the dynamic streaming algorithm which is for a discrete (Euclidean) space $[\Delta]^d$;
in both cases, the dimension~$d$ is considered to be a constant.
The size of the coreset computed by our algorithms is always~$O(k/\eps^d+z)$.
}
\label{table:summary}
\end{table*}
%
\mypara{{\rm \emph{About the approximation factor.}}} 
Recall that our algorithms compute an $(\eps,k,z)$-coreset. 
To obtain an actual solution, one can run an offline algorithm for $k$-center
with outliers on the coreset. The final approximation factor then depends on the 
approximation ratio of the algorithm: running an optimal but slow 
algorithm on the coreset, gives a $(1+\eps)$-approximation; and 
running a fast $3$-approximation algorithm, for instance, gives
a $3(1+\eps)$-approximation. To make a fair comparison with the result of
Ceccarello~\etal~\cite{DBLP:journals/pvldb/CeccarelloPP19}, we list the
approximation ratio of their coreset in Table \ref{table:summary}, rather than their
final approximation ratio.
%
\mypara{{\rm \emph{About the number of rounds of the MPC algorithms.}}}
The original MPC model~\cite{DBLP:conf/soda/KarloffSV10} 
considers the number of communication rounds between the machines as a performance measure.
A few follow-up works count the number of computation rounds instead.
As each communication round happens in between two computation rounds, we need
to subtract~1 from the bounds reported in~\cite{DBLP:journals/pvldb/CeccarelloPP19}.
In Table~\ref{table:summary} we made this adjustment, in order to get a fair comparison. 
%

\mypara{Our results for the MPC model and relation to previous work.}
Before our work, the best-known deterministic MPC algorithm was due to 
Ceccarello, Pietracaprina and Pucci~\cite{DBLP:journals/pvldb/CeccarelloPP19}, 
who showed how to compute an $(\eps,k,z)$-coreset in one round of communication
using $O(\sqrt{nk/\eps^{2d}} + \sqrt{nz}/\eps^{d})$ 
local memory per machine, and $O(\sqrt{n/(k+z)})$ machines.
(They also gave a slightly more efficient algorithm for the problem without outliers.)
Our main result for the MPC model is a 2-round algorithm for computing
a coreset of size $O(k/\eps^d+z)$, using $O(\sqrt{n\eps^d/k})$ machines having
$O(\sqrt{nk/\eps^d} +\sqrt{n}\cdot\log(z+1) + z)$ local memory. This is a significant
improvement for a large range of values of~$z$ and $k$. For
example, if $z=\sqrt{n}$ and $k=\log n$ then Ceccarello~\etal use $O(n^{0.75}/\eps^d)$
local memory, while we use $O(\sqrt{(n/\eps^d)\log n})$ local memory.
In fact, the local storage stated above for our solution is only needed
for the coordinator machine; the worker machines just need 
$O(\sqrt{nk/\eps^d} + \sqrt{n}\cdot\log(z+1))$ local storage, which 
is interesting as it avoids the $+z$ term.


To avoid the term $O(\sqrt{nz}/\eps^d)$ in the storage of the previous work, we must control the number of outliers sent to the coordinator. 
However, it seems hard to determine for a worker machine how many outliers it has. Ceccarello~\etal~\cite{DBLP:journals/pvldb/CeccarelloPP19} 
assume that the points are distributed randomly over the machines, so each machine has only ``few'' outliers in expectation. But in an adversarial setting, the outliers can be distributed very unevenly over the machines. We develop a mechanism that allows each machine, in one round of communication, to obtain a good estimate of the number of outliers it has. 
Guha, Li and Zhang \cite{DBLP:journals/topc/GuhaLZ19} present a similar method to 
determine the number of outliers in each machine, but using their method 
the storage of each  worker machine will be $O(\sqrt{nk/\eps^d} +\sqrt{n}\cdot z)$.
Our refined mechanism reduces the dependency on $z$ from linear to logarithmic, 
which is a significant improvement.

We also present a 1-round randomized algorithm, and a deterministic $R$-round algorithm
giving a trade-off between the number of communication
rounds and the local storage; see Table~\ref{table:summary} for the bounds,
and a comparison with a 1-round algorithm of Ceccarello~\etal

\vspace*{-4mm}
\mypara{Our results for the streaming model and relation to previous work.}
Early work focused on the problem without outliers in the insertion-only
model~\cite{DBLP:journals/siamcomp/CharikarCFM04,DBLP:conf/approx/McCutchenK08}.
McCutchen and Khuller~\cite{DBLP:conf/approx/McCutchenK08} also studied 
the problem with outliers, in general metric spaces, obtaining 
$(4 + \eps)$-approximation using $O(kz/\eps)$ space. 
More recently, Ceccarello~\etal~\cite{DBLP:journals/pvldb/CeccarelloPP19}  presented an algorithm for the $k$-center problem with outliers in spaces
of bounded doubling dimension,
which computes an $(\eps,k,z)$-coreset using $O(k/\eps^d+z/\eps^d)$ storage. 
We improve the result of Ceccarello~\etal by presenting a deterministic
algorithm that uses only $O(k/\eps^d+z)$ space. 
Interestingly, we will give a lower bound showing our algorithm is optimal. 

We next study the problem in the fully dynamic case, where the stream
may contain insertions as well as deletions. The $k$-center problem
with outliers has, to the best of our knowledge, not been studied in
this model. Our results are for the setting where the points in the
stream come from a $d$-dimensional discrete Euclidean space~$[\Delta]^d$. 
We present a randomized dynamic streaming algorithm for this setting
that constructs an $(\eps,k,z)$-coreset using $O((k/\eps^d + z)\log^4(k\Delta/(\eps\delta)) )$ space. 
The idea of our algorithm is as follows. We construct a number 
of grids on the underlying space~$[\Delta]^d$, of exponentially increasing granularity.
For each of these grids, we maintain a coreset on the non-empty cells using 
an $s$-sample recovery sketch~\cite{DBLP:journals/tcs/BarkayPS15}, for a suitable
parameter $s=\Theta(k/\eps^d+z)$. We then use an $\zeronorm{F}$-estimator~\cite{DBLP:conf/pods/KaneNW10} 
to determine the finest grid that has at most $O(s)$ non-zero cells, and we prove that  
its corresponding coreset is an $(\eps,k,z)$-coreset with high probability.

Note that our dynamic streaming algorithm is randomized only because the subroutines 
providing an $\zeronorm{F}$-estimator and an $s$-sample recovery sketch are randomized. 
If both of these subroutines can be made deterministic, 
then our algorithm would also be deterministic, with bounds that 
are optimal up to polylogarithmic factors (See the lower bound 
that we obtain for the dynamic model in this paper). 
Interestingly, we can make the $s$-sample recovery 
sketch deterministic by using the Vandermonde matrix~\cite{DBLP:journals/tit/CandesRT06,DBLP:conf/icip/CandesR06,DBLP:conf/focs/PriceW11,DBLP:conf/approx/NelsonNW12}. 
Such a deterministic recovery scheme can be used to return all non-zero cells of a grid 
with the exact number of points in each cell if the number of 
non-empty cells of that grid is at most $O(s)$. 
To this end, we can use linear programming 
techniques 
to retrieve the non-empty cells of that grid with their exact number of points. 
However, we do not know how to check deterministically if a grid has at most $O(s)$ non-zero cells at the moment.  

Note that our dynamic streaming algorithm immediately gives a fully dynamic algorithm for the $k$-center problem with outliers that has a fast update
time in the standard (non-streaming) model. Indeed, after each update we can simply run a greedy algorithm, say the one in~\cite{DBLP:conf/esa/DingYW19}, 
on our coreset. This gives a dynamic  $(3+\epsilon)$-approximation algorithm  
with $O((k/\eps^d + z)\log^4(k\Delta/(\eps\delta)) )$ update time. 
Interestingly, to the best of our knowledge a dynamic algorithm with fast update time was not known so far
for the $k$-center problem with outliers, even in the standard setting where we
can store all the points. For the problem without outliers, there are some recent results in the fully dynamic model~\cite{DBLP:journals/tkde/ChanGHS22,DBLP:conf/alenex/GoranciHLSS21,DBLP:journals/corr/abs-2112-07050}. 
In particular, Goranci~\etal~\cite{DBLP:conf/alenex/GoranciHLSS21} 
developed a $(2+\epsilon)$-approximate dynamic algorithm 
for metric spaces of a bounded doubling dimension $d$. 
The update time of their algorithm is $O((\frac{2}{\epsilon})^{O(d)}\cdot \log\rho\log\log\rho)$ 
where $\rho$ is the spread ratio of the underlying space. 
Furthermore, Bateni~\etal~\cite{DBLP:journals/corr/abs-2112-07050} 
gave a $(2+\epsilon)$-approximate dynamic algorithm for any metric space
using an amortized update time of $O(k\,\textrm{polylog}\,(n,\rho))$. 
Both dynamic algorithms need $\Omega(n)$ space.
Our streaming algorithm needs much less space (independent of $n$), and can 
even deal with outliers. On the other hand, our algorithm works for discrete Euclidean spaces. 

The fully dynamic version of the problem is related to the 
\emph{sliding-window model}, where we are given window length~$W$, 
and we are interested in maintaining an $(\eps,k,z)$-coreset for
the last $W$ points in the stream. On the one hand, the fully-dynamic setting is
more difficult than the sliding-window setting, since any of the current points can be deleted. On the other hand, it is easier
since we are explicitly notified when a point is deleted, while 
in the sliding-window setting the expiration of a point may go unnoticed.
In fact, it is a long-standing open problem to see how 
different streaming models relate to each other.\footnote{\url{https://sublinear.info/index.php?title=Open_Problems:20}}

The sliding-window version of the $k$-center problem (without outliers) 
was studied by Cohen{-}Addad, Schwiegelshohn, and Sohler~\cite{DBLP:conf/icalp/Cohen-AddadSS16}
for general metric spaces. 
Recently, De~Berg, Monemizadeh, and Zhong~\cite{DBLP:conf/esa/BergMZ21} 
studied the $k$-center problem with outliers 
for spaces of bounded doubling dimension. The space usage of the latter
algorithm is $O((kz/\eps^d)\log \sigma)$, where $\sigma$ is the ratio of the largest
and the smallest distance between any two points in the stream.


\vspace*{-4mm}
\mypara{Lower bounds for the streaming model.}
The only lower bound that we are aware of for the $k$-center problem 
with $z$ outliers in different streaming settings is the one that 
De Berg, Monemizadeh, and Zhong~\cite{DBLP:conf/esa/BergMZ21,DBLP:journals/corr/abs-2109-11853} 
proved for the sliding-window model. 
In particular, they proved that any deterministic sliding-window algorithm that guarantees 
a $(1+\eps)$-approximation for the $k$-center problem
with outliers in~$\Reals^1$ must use $\Omega((kz/\eps)\log \sigma)$ space. 
However, this lower bound works for one-dimensional Euclidean space and in particular, 
it shows a gap between the space complexity of their algorithm 
which is $O((kz/\eps^d)\log \sigma)$ and their lower bound.
De~Berg~\etal~\cite{DBLP:conf/esa/BergMZ21,DBLP:journals/corr/abs-2109-11853} 
raised the following open question: 
\emph{``It would be interesting to investigate the dependency on the parameter $\eps$ in more detail
and close the gap between our upper and lower bounds. The main question here is whether it
is possible to develop a sketch whose storage is only polynomially dependent on the doubling
dimension $d$.'' 
}
We give a (negative) answer to this question, by proving
an $\Omega((kz/\eps^d)\log \sigma)$ lower bound for the 
sliding-window setting in $\Reals^d$ under the $L_{\infty}$-metric, thus improving the lower bound of
De~Berg, Monemizadeh, and Zhong and showing the optimality of their algorithm.
Our lower bound for the sliding-window model 
works in the same general setting as the lower bound of De~Berg~\etal~\cite{DBLP:conf/esa/BergMZ21,DBLP:journals/corr/abs-2109-11853}. 
Essentially, the only restriction on the algorithm is that it can only update the solution when 
a new point arrives or at an explicitly stored expiration time of an input point. 
This is a very powerful model since it does not make any assumptions about how the algorithm maintains a solution. 
It can maintain a coreset, but it can also maintain something completely different. 

The lower bound of De~Berg~\etal~\cite{DBLP:conf/esa/BergMZ21,DBLP:journals/corr/abs-2109-11853} 
inherently only works in the sliding-window model. 
Indeed, their lower bound is based on the expiration times of input points. 
However, in the insertion-only model, points never expire. Even in the fully dynamic model, 
a deletion always happens through an explicit update, and so no expiration times need to be stored. 
We are not aware of any lower bounds on the space usage of insertion-only
streaming algorithms or fully dynamic streaming model for the problem.  
We give the first lower bound for the insertion-only model, and show that any
deterministic algorithm that maintains an $(\eps,k,z)$-coreset in $\Reals^d$
must use $\Omega(k/\eps^d+z)$ space, thus proving the space complexity of our algorithm 
which is $O(k/\eps^d+z)$ is in fact, optimal.

Finally, we prove an $\Omega((kz/\eps^d)\log \Delta+z)$  lower bound for the fully dynamic streaming model, 
thus showing that the logarithmic dependency on $\Delta$ in the space bound of our algorithm is unavoidable.
Our lower bounds for the insertion-only and fully dynamic streaming models work for algorithms 
that maintain a coreset.


\section{Mini-ball coverings provide $(\eps,k,z)$-coresets}
\label{sec:mini-balls}
The algorithms that we will develop are based on so-called mini-ball coverings.
A similar concept has been used implicitly before, see for example~\cite{DBLP:conf/esa/BergMZ21}. 
Here, we formalize the concept and prove several useful properties.
The idea behind the mini-balls covering is simple but powerful:
using mini-ball coverings we are able to improve the existing results
on the $k$-center problem with outliers both in the MPC model and in
the streaming models.
%
\begin{definition}[$(\eps,k,z)$-mini-ball covering]
\label{def:miniball:covering}
Let $P$ be a weighted point set in  a metric space $(X, \dist)$
and let $k,z \in \mathbb{N}$, and $\eps\geq 0$. 
A weighted point set\footnote{Note that the weights of a point in $P^*$ can be different from its weight in $P$.}
$P^*=\{q_1,\ldots,q_{f}\} \subseteq P$ is an \emph{$(\eps,k,z)$-mini-ball covering} of $P$
if $P$ can be partitioned into pairwise disjoint subsets $Q_1,\ldots,Q_f$
with the following properties:
\begin{enumerate}
    \item \textbf{Weight property:} $\weight(q_i) =\sum_{p\in Q_i}\weight(p)$.
     and, hence, $\sum_{q\in P^*} \weight(q) = \sum_{p\in P}\weight(p)$.
    \item \textbf{Covering property:} $\dist(p,q_i)\leq \eps \cdot \optkz(P)$ for all $p\in Q_i$.
    In other words, $Q_i$ is contained in a ball of radius $\eps \cdot \optkz(P)$ around~$q_i$.
\end{enumerate}
For each $p\in Q_i$, we refer to $q_i$ as the \emph{representative} point of $p_i$.
\end{definition}
\begin{figure}[t]
\begin{center}
\includegraphics[scale=0.30]{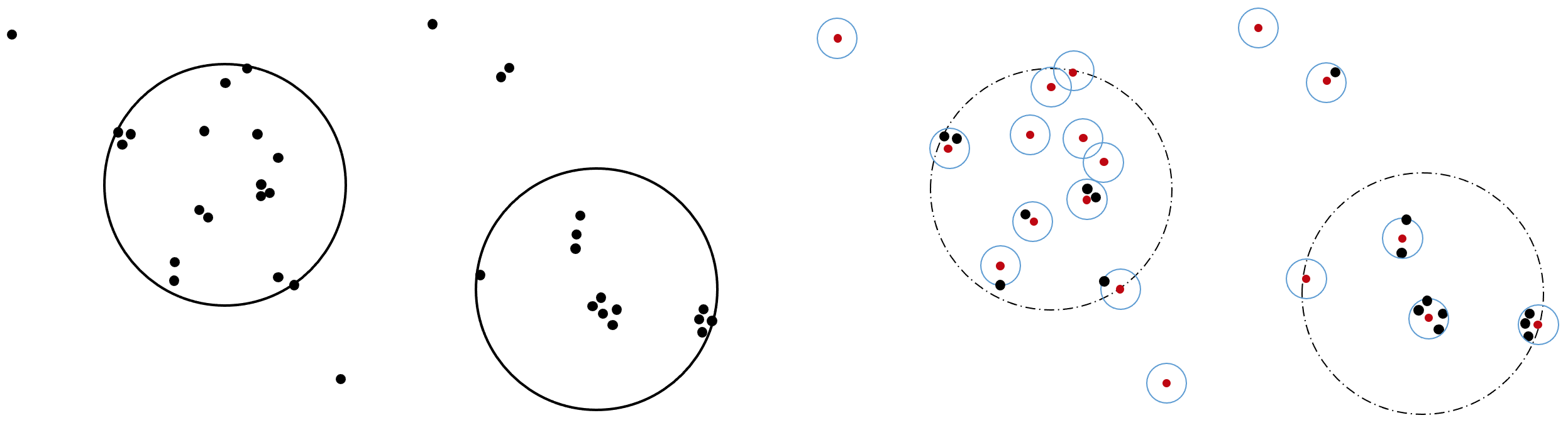}
\end{center}
\vspace{-0.3cm}
\caption{
Left: A set of points that are covered by $k=2$ balls with $z=5$ outliers.
Right: A mini-ball covering of the same point set.
The red points are the representative points. The weight of each mini-ball is the total weight of the points inside it.
}
\label{fig:miniball:covering}
\end{figure}
See Figure \ref{fig:miniball:covering} for an example of mini-ball covering.
Next, we show that an $(\eps,k,z)$-mini-ball covering of a point set $P$ is an $(\eps,k,z)$-coreset of $P$, 
and therefore $(1\pm\eps)$-approximates the optimal solution for the $k$-center problem with $z$ outliers.
The proof of this lemma, as well as missing proofs of other lemmas, can be found in the appendix.
\begin{lemma}
\label{lem:MBC:is:coreset}
Let $P$ be a weighted point set in  a metric space $(X, \dist)$ and
let $P^*$ be an $(\eps, k, z)$-mini-ball covering of $P$. Then, $P^*$ is an $(\eps, k, z)$-coreset of $P$.
\end{lemma}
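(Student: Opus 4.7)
The strategy is to leverage two simple observations about mini-ball coverings. First, the covering property says each point $p \in Q_i$ lies within distance $\eps \cdot \optkz(P)$ of its representative $q_i$, so the triangle inequality lets me transfer coverage from $P$ to $P^*$ (and vice versa) at a cost of an additive $\eps \cdot \optkz(P)$ in the radius. Second, the weight property says $\weight(q_i) = \sum_{p \in Q_i} \weight(p)$, so the total uncovered weight with respect to any ball configuration is identical whether measured on $P$ or on $P^*$, as long as each mini-ball is treated atomically (i.e., the representative is either covered or not).

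For the first coreset condition, I would prove both inequalities separately. For the upper bound $\optkz(P^*) \leq (1+\eps)\cdot \optkz(P)$, take an optimal $k$-center-with-outliers solution for $P$ with balls $B = \{\ball(c^*_j, \optkz(P))\}_{j=1}^k$, and expand each ball to radius $(1+\eps)\cdot \optkz(P)$. Consider each representative $q_i$: if at least one point of $Q_i$ is covered by some $\ball(c^*_j, \optkz(P))$, then by triangle inequality $q_i$ lies within $(1+\eps)\cdot \optkz(P)$ of $c^*_j$; otherwise, every point in $Q_i$ is an outlier of the optimal $P$-solution, so by the weight property $\weight(q_i)$ is fully accounted for by outlier weight in $P$. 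Summing over uncovered $q_i$'s, the total uncovered weight in $P^*$ is at most $z$. The lower bound $(1-\eps)\cdot \optkz(P) \leq \optkz(P^*)$ is symmetric: take an optimal solution for $P^*$ with radius $\optkz(P^*)$, expand by $\eps \cdot \optkz(P)$, and observe that if $q_i$ is covered then all of $Q_i$ is covered by the expanded ball (by triangle inequality), while every uncovered $q_i$ contributes $\weight(q_i)$ to both the $P^*$-outlier weight and the $P$-outlier weight of the expanded configuration. This yields a valid $k$-center-with-outliers solution for $P$ of radius $\optkz(P^*) + \eps \cdot \optkz(P)$, hence $\optkz(P) \leq \optkz(P^*) + \eps \cdot \optkz(P)$, which rearranges to the claimed bound.

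For the second coreset condition, given balls $B = \{\ball(c_j, r)\}_{j=1}^k$ whose uncovered weight in $P^*$ is at most $z$, I expand to $B' = \{\ball(c_j, r')\}$ with $r' = r + \eps \cdot \optkz(P)$. The same triangle-inequality argument shows that whenever $q_i$ lies in some $\ball(c_j, r)$, the entire $Q_i$ lies in $\ball(c_j, r')$. For uncovered $q_i$'s, the weight property transfers their weight one-for-one between $P^*$-outliers and $P$-outliers, so the total uncovered weight in $P$ under $B'$ is bounded by the total uncovered weight in $P^*$ under $B$, i.e., at most $z$.

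The proof is essentially bookkeeping and presents no substantial obstacle; the only subtlety to watch is that the weight property identifies outlier weight in $P$ with outlier weight in $P^*$ only when the representative's status (covered/uncovered) is used as a proxy for the status of the entire $Q_i$. This is exactly what the triangle inequality enables here, since representatives are single points that are either inside or outside a given ball, and the small additive slack $\eps \cdot \optkz(P)$ is precisely enough to promote coverage of a representative to coverage of its whole mini-ball (and, dually, coverage of any point in $Q_i$ to coverage of $q_i$).
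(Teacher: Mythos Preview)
Your proposal is correct and follows essentially the same approach as the paper: both rely on the triangle inequality together with the covering and weight properties to transfer (un)coverage between $P$ and $P^*$ at an additive cost of $\eps\cdot\optkz(P)$ in the radius. The only minor organizational difference is that the paper proves the second coreset condition first and then invokes it directly (applied to an optimal solution for $P^*$) to obtain the lower bound $(1-\eps)\cdot\optkz(P)\le\optkz(P^*)$, whereas you prove that inequality from scratch; the underlying argument is identical.
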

The next lemma shows how to combine mini-ball coverings of subsets of $P$ into
a mini-ball covering for~$P$. Then Lemma \ref{lem:merge:coreset} proves that a mini-ball covering of a mini-ball covering is also a mini-ball covering,
albeit with adjusting the error parameters.
\begin{lemma}[Union Property]
\label{lem:union:coreset}
Let $P$ be a set of points in a metric space $(X,\dist)$. 
Let $k, z\in \Nats$ and $\eps \geq 0$ be parameters. 
Let $P$ be partitioned into disjoint subsets $P_1,\cdots,P_s$, and
let $Z = \{z_1,\cdots, z_s\}$ be a set of numbers such that 
$\opt_{k,z_i}(P_i) \leq \optkz(P)$ for each~$P_i$. 
If $P_i^*$ is an $(\eps, k, z_i)$-mini-ball covering of $P_i$ for each $1 \leq i \leq s$,
then $\cup_{i=1}^s P_i^*$ is an $(\eps,k,z)$-mini-ball covering of $P$. 
\end{lemma}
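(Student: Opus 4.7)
The plan is to verify the two defining properties of an $(\eps,k,z)$-mini-ball covering directly for $P^* \mydef \bigcup_{i=1}^s P_i^*$, by assembling the per-subset partitions into a single partition of $P$.

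First I would unpack the hypothesis: for each $i$, since $P_i^*$ is an $(\eps,k,z_i)$-mini-ball covering of $P_i$, there is a partition $Q_{i,1},\ldots,Q_{i,f_i}$ of $P_i$ such that each $q_{i,j} \in P_i^*$ has $\weight(q_{i,j}) = \sum_{p \in Q_{i,j}} \weight(p)$ and $\dist(p, q_{i,j}) \le \eps \cdot \opt_{k,z_i}(P_i)$ for every $p \in Q_{i,j}$. Because $P_1,\ldots,P_s$ partition $P$, the collection $\{Q_{i,j} : 1 \le i \le s,\ 1 \le j \le f_i\}$ forms a partition of $P$, and the corresponding representatives $\{q_{i,j}\}$ constitute exactly $P^*$.

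Next I would check the two properties against this combined partition. The weight property is immediate: it holds for each $q_{i,j}$ by assumption, so summing gives $\sum_{q \in P^*} \weight(q) = \sum_{p \in P} \weight(p)$. For the covering property, I would use the assumption $\opt_{k,z_i}(P_i) \le \optkz(P)$ to upgrade the local radius bound to a global one: for any $p \in Q_{i,j}$,
\[
\dist(p, q_{i,j}) \;\le\; \eps \cdot \opt_{k,z_i}(P_i) \;\le\; \eps \cdot \optkz(P),
\]
which is exactly the covering property required by Definition~\ref{def:miniball:covering} with radius $\eps \cdot \optkz(P)$.

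I do not anticipate a real obstacle here; the whole content of the lemma is the opt-inequality hypothesis, which is precisely what lets us rescale each local covering radius up to the global $\optkz(P)$. The only mild subtlety worth noting in the write-up is that the hypothesis $\opt_{k,z_i}(P_i) \le \optkz(P)$ is essential, since without it the local mini-balls could be larger than $\eps \cdot \optkz(P)$ and the covering property for $P$ would fail; everything else is a direct bookkeeping argument on the combined partition.
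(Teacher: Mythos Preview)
Your proposal is correct and follows essentially the same approach as the paper: both verify the weight and covering properties of Definition~\ref{def:miniball:covering} directly, using the hypothesis $\opt_{k,z_i}(P_i) \le \optkz(P)$ to lift each local radius bound to the global one. The paper's version is slightly terser (it does not explicitly spell out the combined partition), but the argument is the same.
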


\begin{lemma}[Transitive Property]
\label{lem:merge:coreset}
Let $P$ be a set of $n$ points in a metric space $(X,\dist)$. 
Let $k, z \in \Nats$ and $\eps, \peps \geq 0$ be four parameters. 
Let $P^*$ be a $(\peps,k,z)$-mini-ball covering of $P$,
and let $Q^*$ be an $(\eps, k, z)$-mini-ball covering of $P^*$. 
Then, $Q^*$ is an $(\eps+\peps+\eps\peps, k,z)$-mini-ball covering of $P$.
\end{lemma}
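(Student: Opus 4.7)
The idea is to construct a partition of $P$ witnessing the mini-ball covering property of $Q^*$ directly from the two given partitions, and then bound the covering radius by a triangle-inequality argument, using Lemma \ref{lem:MBC:is:coreset} to translate an approximation bound on $\optkz(P^*)$ back to $\optkz(P)$.

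\textbf{Setup of the partition.} Write $P^*=\{q_1,\ldots,q_f\}$ with its witnessing partition $Q_1,\ldots,Q_f$ of $P$, and write $Q^*=\{r_1,\ldots,r_g\}$ with its witnessing partition $R_1,\ldots,R_g$ of $P^*$. Define the candidate partition of $P$ by
\[
P_j \;\mydef\; \bigcup_{q_i \in R_j} Q_i \qquad (1 \leq j \leq g).
\]
Because the $R_j$'s partition $P^*=\{q_1,\ldots,q_f\}$ and the $Q_i$'s partition $P$, the sets $P_1,\ldots,P_g$ form a partition of $P$.

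\textbf{Weight property.} I would verify this by a straightforward double summation:
\[
\weight(r_j) \;=\; \sum_{q_i \in R_j} \weight(q_i) \;=\; \sum_{q_i \in R_j}\,\sum_{p \in Q_i} \weight(p) \;=\; \sum_{p \in P_j} \weight(p),
\]
using the weight property of $Q^*$ over $P^*$ in the first equality and the weight property of $P^*$ over $P$ in the second.

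\textbf{Covering property.} For any $p \in P_j$, there is a unique $q_i \in R_j$ with $p \in Q_i$. By the triangle inequality and the two given covering properties,
\[
\dist(p, r_j) \;\le\; \dist(p, q_i) + \dist(q_i, r_j) \;\le\; \peps \cdot \optkz(P) + \eps \cdot \optkz(P^*).
\]
The key step is now to replace $\optkz(P^*)$ by something in terms of $\optkz(P)$. By Lemma \ref{lem:MBC:is:coreset}, $P^*$ is an $(\peps,k,z)$-coreset of $P$, so by item (1) of Definition \ref{def:coreset} we have $\optkz(P^*) \le (1+\peps)\cdot\optkz(P)$. Substituting gives
\[
\dist(p, r_j) \;\le\; \bigl(\peps + \eps(1+\peps)\bigr)\cdot \optkz(P) \;=\; (\eps + \peps + \eps\peps)\cdot \optkz(P),
\]
which is exactly the covering radius required for an $(\eps+\peps+\eps\peps,k,z)$-mini-ball covering.

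\textbf{Main obstacle.} The only nontrivial step is the need to compare $\optkz(P^*)$ to $\optkz(P)$; without this, one would be stuck with a bound involving $\optkz(P^*)$ that is not obviously controlled by the target quantity. The already-proven Lemma \ref{lem:MBC:is:coreset} makes this clean. Everything else is bookkeeping via the two partitions and the triangle inequality, so I do not expect further difficulty.
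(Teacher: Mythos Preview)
Your proposal is correct and follows essentially the same approach as the paper: compose the two representative maps, use the triangle inequality, and bound $\optkz(P^*)$ by $(1+\peps)\optkz(P)$. If anything, you are slightly more careful than the paper's own proof, since you explicitly construct the witnessing partition $P_j=\bigcup_{q_i\in R_j}Q_i$ and verify the per-block weight identity, whereas the paper only checks the global weight equality and the pointwise distance bound; you also invoke Lemma~\ref{lem:MBC:is:coreset} explicitly where the paper just cites Definition~\ref{def:coreset}.
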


\mypara{An offline construction of mini-ball coverings.}
In this section, we develop our mini-ball covering construction for a set $P$ of $n$ points 
in a metric space $(X,\dist)$ of doubling dimension~$d$.   
To this end, we first invoke the $3$-approximation algorithm
\greedy~by Charikar~\etal~\cite{DBLP:conf/soda/CharikarKMN01}.
Their algorithm works for the $k$-center problem with outliers 
in general metric spaces (not necessarily of bounded doubling dimension) and returns $k$ balls 
of radius at most $3\cdot \optkz(P)$ which together cover all but at most $z$ points of the given points.
The running time of this algorithm, which we denote by  \greedy$(P,k,z)$, is $O(n^2k\log{n})$.  
Note that \greedy provides us with a bound on~$\optkz(P)$.
We use this to compute a mini-ball covering of $P$ in a greedy
manner, as shown in Algorithm~\computeCoreset.
\begin{algorithm}[h] 
\caption{\computeCoreset$(P,k,z,\eps)$} 
\label{alg:compress}
\begin{algorithmic}[1]
\State Let $r$ be the radius of the $k$ congruent balls reported by~\greedy$(P,k,z)$. \label{step:greedy}
\State $P^* \gets \emptyset$.
\While{$|P| > 0 $} 
    \State Let $q$ be an arbitrary point in $P$ and let $R_q := \ball(q,\eps \cdot \frac{r}{3}) \cap P$. 
    \State Add $q$ to $P^*$ with weight $w(q) := \weight(R_q)$
    \label{line:MBCC:weight}
    \State $P \gets P \setminus R_q $. 
\EndWhile
\State Return $P^*$.
\end{algorithmic}
\end{algorithm}

We will show that for metric spaces of doubling dimension~$d$, 
the number of mini-balls is at most $k(\frac{12}{\eps})^d+z$.
To this end, we first need to bound the size of any subset 
of $P$ whose pairwise distances are at least $\delta$.
\begin{lemma}
\label{lem:coreset:size}
Let $P$ be a finite set of points in a metric space $(X,\dist)$ of doubling dimension $d$.
Let $0 < \delta \leq \optkz(P)$, and let $Q \subseteq P$ be a subset of $P$ such that for 
any two distinct points $q_1, q_2 \in Q$, $\dist(q_1, q_2) > \delta$. 
Then $|Q| \leq k\left( \frac{4\cdot\optkz(P)}{\delta} \right)^d+z$.
\end{lemma}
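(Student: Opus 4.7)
\medskip
\noindent
\textbf{Proof plan for Lemma~\ref{lem:coreset:size}.} Let $r^* \mydef \optkz(P)$, and fix an optimal solution for the $k$-center problem with $z$ outliers on $P$. This solution is a collection $B_1,\dots,B_k$ of congruent balls of radius $r^*$ together with a set $O\subseteq P$ of at most $z$ outliers such that $P\setminus O \subseteq B_1\cup\cdots\cup B_k$. The plan is to partition the points of $Q$ according to this solution: the points in $Q\cap O$ contribute at most $z$ to $|Q|$, and it remains to bound $|Q\cap B_j|$ for each $j\in\{1,\dots,k\}$ by $(4r^*/\delta)^d$.

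The main ingredient for bounding $|Q\cap B_j|$ is the doubling dimension of $(X,\dist)$. Applied once, it says that $B_j$ can be covered by $2^d$ balls of radius $r^*/2$; applied iteratively $i$ times, it yields a cover of $B_j$ by $2^{id}$ balls of radius $r^*/2^i$. I would choose $i$ to be the smallest positive integer such that $2^i > 2r^*/\delta$, which forces $2r^*/2^i < \delta$ and at the same time $2^i \le 4r^*/\delta$. Any two points lying in a common ball of radius $r^*/2^i$ are at distance at most $2r^*/2^i < \delta$; since points of $Q$ are pairwise at distance strictly greater than $\delta$, each of these $2^{id}$ small balls contains at most one point of $Q$. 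Hence
\[
|Q\cap B_j| \;\le\; 2^{id} \;\le\; \bigl(4r^*/\delta\bigr)^d \;=\; \bigl(4\cdot\optkz(P)/\delta\bigr)^d .
\]

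Summing over the $k$ optimal balls and adding the at most $z$ outliers gives $|Q|\le k(4\optkz(P)/\delta)^d + z$, which is the claimed bound. The one subtle point to verify is the choice of $i$: we need $i\ge 1$ so that at least one doubling step is applied, which is guaranteed by the hypothesis $\delta\le \optkz(P) = r^*$ (so $2r^*/\delta \ge 2 > 1$, forcing the smallest such $i$ to be at least~$1$). Everything else is a routine combination of the optimal $k$-center-with-outliers cover and iterated doubling, so I do not anticipate any real obstacle beyond getting the constants in the exponent right.
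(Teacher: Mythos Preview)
Your proof is correct and follows essentially the same approach as the paper: split $Q$ into the outliers (at most $z$) and the points covered by the $k$ optimal balls, then use iterated doubling to cover each optimal ball by at most $(4\optkz(P)/\delta)^d$ small balls each containing at most one point of $Q$. The paper states the doubling bound directly (covering a radius-$\optkz(P)$ ball by balls of radius $\delta/2$), whereas you unroll the iteration explicitly and verify the boundary case $i\ge 1$, but the argument is the same.
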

Next we show that \computeCoreset computes an $(\eps, k, z)$-mini-ball covering.
\begin{lemma}\label{lem:compress}
Let $P$ be a set of $n$ weighted points with positive integer weights in a metric 
space $(X,\dist)$ of doubling dimension $d$.  Let $k, z \in \Nats$ and $0 < \eps \le 1$.
Then \computeCoreset$(P,k,z,\eps)$ returns an \emph{$(\eps,k,z)$-mini-ball covering} of $P$ 
whose size is at most $k(\frac{12}{\eps})^d+z$. 
\end{lemma}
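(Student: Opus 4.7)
The plan is to verify the two defining properties of a mini-ball covering and then bound $|P^*|$ by invoking Lemma~\ref{lem:coreset:size} on the representative set.

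First I would define, for each representative $q_i$ added in iteration $i$, the set $Q_i := R_{q_i}$ as computed on line~4 at that iteration. Since each iteration deletes $R_{q_i}$ from $P$ and the algorithm terminates only when $P$ is empty, the collection $\{Q_1,\ldots,Q_f\}$ is a partition of the original input. The weight property (Definition~\ref{def:miniball:covering}(1)) is then immediate from line~\ref{line:MBCC:weight}, where $w(q_i)$ is explicitly set to $\weight(R_{q_i})=\sum_{p\in Q_i}\weight(p)$. For the covering property (Definition~\ref{def:miniball:covering}(2)), every $p\in Q_i$ lies in $\ball(q_i,\eps r/3)$ by construction. Because \greedy is a $3$-approximation for $k$-center with $z$ outliers, and because the radius $r$ it returns is a feasible radius (covering all but $z$ points with $k$ balls), we have $\optkz(P)\le r\le 3\cdot \optkz(P)$. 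Thus $\dist(p,q_i)\le \eps r/3\le \eps\cdot \optkz(P)$, as required.

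Next I would bound $|P^*|$. The key observation is that the representatives are pairwise far apart: if $q_i$ was added in an earlier iteration than $q_j$, then at the moment $q_j$ is picked, $q_j$ has not yet been removed, so $q_j\notin \ball(q_i,\eps r/3)$, meaning $\dist(q_i,q_j)>\eps r/3$. Set $\delta := \eps r/3$. Since $\eps\le 1$ and $r\le 3\cdot\optkz(P)$, we have $\delta\le \optkz(P)$, so Lemma~\ref{lem:coreset:size} applies to $Q=P^*$ with this $\delta$, giving
\[
|P^*|\ \le\ k\left(\frac{4\cdot\optkz(P)}{\delta}\right)^{\!d}+z\ =\ k\left(\frac{12\cdot\optkz(P)}{\eps r}\right)^{\!d}+z\ \le\ k\left(\frac{12}{\eps}\right)^{\!d}+z,
\]
where the last inequality uses $\optkz(P)\le r$.

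There is no serious obstacle here; the only subtlety is making sure both directions of the bound $\optkz(P)\le r\le 3\cdot\optkz(P)$ are used. The lower bound $\optkz(P)\le r$ is what lets us replace $\optkz(P)/r$ by $1$ in the final step, and the upper bound $r\le 3\cdot\optkz(P)$ is what turns the radius $\eps r/3$ of each mini-ball into the allowed bound $\eps\cdot\optkz(P)$ of Definition~\ref{def:miniball:covering}. Once these are in hand the proof is just a matter of unpacking the definitions and applying the packing lemma.
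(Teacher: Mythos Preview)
Your proof is correct and follows essentially the same approach as the paper: establish the partition and weight property directly from the algorithm, use $r\le 3\cdot\optkz(P)$ to obtain the covering property, observe that representatives are pairwise more than $\eps r/3$ apart, and then invoke Lemma~\ref{lem:coreset:size} together with $\optkz(P)\le r$ to get the size bound. If anything, you are slightly more careful than the paper in explicitly verifying the hypothesis $\delta\le\optkz(P)$ of Lemma~\ref{lem:coreset:size}.
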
 
\begin{proof}
Let $r$ be the radius computed in Step~\ref{step:greedy} of \computeCoreset.
As \greedy~is a $3$-approximation algorithm, $\optkz(P) \leq r \leq 3\cdot \optkz(P)$.
We first prove that the reported set $P^*$  is an $(\eps,k,z)$-mini-ball covering of $P$,
and then we bound the size of $P^*$. 

By construction, the sets $R_q$ for $q\in P^*$ together form a partition
of~$P$. Since $q$ is added to $R_q$ with weight $w(R_q)$, the 
weight-preservation property holds. Moreover, for any $p\in R_q$ we have
$\dist(p,q) \leq \eps\cdot \frac{r}{3} \leq \eps\cdot \optkz(P)$.
Hence, $P^*$ is an $(\eps, k, z)$-mini-ball covering of $P$.

Next we bound the size of~$P^*$. Note that the distance between any two points
in $P^*$ is more than $\delta$, where $\delta = \eps \cdot \frac{r}{3}$.
Since $(X,\dist)$ has doubling dimension~$d$,
Lemma \ref{lem:coreset:size} thus implies that $|P^*| \leq k\cdot (4\cdot\frac{\optkz(P)}{\delta})^d+z$.
Furthermore, $\optkz(P) \leq r$. Hence,
\[
|P^*| \leq k\left(4\cdot\frac{\optkz(P)}{\delta}\right)^d+z 
= k\left(4\cdot\frac{\optkz(P)}{\eps r/3}\right)^d+z 
\leq k\left(4\cdot\frac{r}{\eps r/3}\right)^d+z 
= k\left(\frac{12}{\eps}\right)^d+z 
 . \qedhere
\]
\end{proof}

\section{Algorithms for the MPC model}
Let $M_1,\cdots,M_m$ be a set of $m$ machines. Machine~$M_1$ is labeled
as the \emph{coordinator}, and the others are \emph{workers}. 
Let  $(X,\dist)$ be a metric space of doubling dimension~$d$. 
Let $P \subseteq X$ be the input point set of size~$n$, which is stored in
a distributed manner over the~$m$ machines.
Thus, if $P_i$ denotes the point set of machine~$M_i$, then 
$P_i \cap P_j = \emptyset$ for $i\neq j$, and $\cup_{i=1}^m P_i = P$. 

We present three algorithms in the MPC model for the $k$-center problem with  outliers: 
a $2$-round deterministic algorithm and an $\rrounds$-round deterministic algorithm
in which $P$ can be distributed arbitrarily among the machines, and a
1-round randomized algorithm that assumes $P$ is distributed randomly. 
Our main result is the $2$-round algorithm explained next; 
other algorithms in the MPC model are given in Section~\ref{sec:more:MPC}.

\mypara{A deterministic $2$-round algorithm.}
Our 2-round algorithm assumes that $P$ is distributed arbitrarily (but evenly) over the machines. 
Since the distribution is arbitrary, we do not have an upper bound on the number of outliers 
present at each machine. Hence, it seems hard to avoid sending $\Omega(z)$ points per machine to the coordinator. 
Next we present an elegant mechanism to guess the number of outliers present at each machine,
such that the total number of outlier candidates sent to the coordinator, over all machines, is~$O(z)$.
Our mechanism refines the method of Guha, Li and Zhang \cite{DBLP:journals/topc/GuhaLZ19}, and gives a significantly better dependency
on $z$ in the storage of the worker machines.
\medskip

In the first round of Algorithm \ref{alg:2:round}, each machine $M_i$ finds a $3$-approximation
of the optimal radius, for various numbers of outliers, and stores these radii in a vector~$V_i$.
The $3$-approximation is obtained by calling the algorithm \greedy of
Charikar~\etal~\cite{DBLP:conf/soda/CharikarKMN01}.
More precisely, $M_i$ calls \greedy$(P_i, k, 2^j-1)$ and stores the reported radius
(which is a $3$-approximation of the optimal radius for the $k$-center problem
with $2^j-1$ outliers on $P_i$) in~$V_i[j]$.
Then, each machine $M_i$ sends its vector $V_i$ to all other machines.
In the second round, all machines use the shared vectors to compute $\feasibleRadius$, 
which is is an approximate lower bound on the ``global'' optimal radius. Using $\feasibleRadius$,
each machine then computes a local mini-ball covering so that the total number of outliers 
over all machines is at most $2z$.

\begin{algorithm}[h] 
\caption{A deterministic $2$-round algorithm to compute an $(\eps,k,z)$-coreset} 
\label{alg:2:round}
\textbf{Round 1, executed by each machine $M_i$:} 
\begin{algorithmic}[1]
\State Let $V_i[0,1,\ldots, \ceil{\log(z+1)}]$ be a vector of size $\ceil{\log(z+1)}+1$.
\For{$j \gets 0$ \textbf{to} $\ceil{\log(z+1)}$}
    \State $V_i[j] \gets$ the radius of balls returned by  \greedy$(P_i, k , 2^j-1)$.
\EndFor
\State \emph{Communication round:} Send $V_i$ to all other machines. 
\end{algorithmic}
\vspace*{2mm}
\textbf{Round 2, executed by each machine $M_i$:} 
\begin{algorithmic}[1]
\State Let $R \gets \{V_{\ell}[j] : 1\leq \ell \leq m \mbox{ and } 0\leq j \leq \ceil{\log(z+1)} \}$ 
\State $\feasibleRadius \gets \min \left\{r \in R : \sum_{\ell=1}^{m} \left( 2^{\min\{j: V_{\ell}[j] \leq r\}} -1 \right)\leq 2z  \right\}$.
\State $\hat{j}_i \gets \min\{j : V_i[j] \leq \hat{r} \}$.
\State $P_i^* \gets$ \computeCoreset$(P_i, k, 2^{\hat{j}_i} - 1, \eps)$.
\State \emph{Communication round:}  Send $P_i^*$ to the coordinator.
\end{algorithmic}
\textbf{At the coordinator:}  Collect all mini-ball coverings~$P_i^*$ and report \computeCoreset$(\bigcup_i P_i^*,k,z,\eps)$ as the final mini-ball covering. 
\end{algorithm}


First, we show that the parameter $\feasibleRadius$ that we computed
in the second round, can be used to obtain a lower bound on $\optkz(P)$.

\begin{lemma} \label{lem:feasibleRadius}
Let $\feasibleRadius$ be the value computed in Round~2
of Algorithm~\ref{alg:2:round}.
Then, $\optkz(P) \geq \feasibleRadius/3$.
\end{lemma}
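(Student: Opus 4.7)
The plan is to prove the equivalent upper bound $\feasibleRadius \leq 3\cdot\optkz(P)$ by exhibiting a specific value $r^* \in R$ that (i) satisfies $r^* \leq 3\cdot\optkz(P)$ and (ii) meets the outlier-sum constraint in the definition of $\feasibleRadius$. Since $\feasibleRadius$ is the \emph{minimum} such value, this immediately yields the lemma. (As a by-product, this also shows that the set whose minimum defines $\feasibleRadius$ is non-empty.)

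To construct $r^*$, I would fix any optimal global solution for the $k$-center problem with $z$ outliers on $P$, and let $z_\ell$ be the number of its outliers that fall in machine $M_\ell$'s local set $P_\ell$. Then $\sum_\ell z_\ell \leq z$, and restricting the same $k$ balls to $P_\ell$ shows that $\opt_{k,z_\ell}(P_\ell) \leq \optkz(P)$. For each $\ell$, let $j_\ell$ denote the smallest nonnegative integer with $2^{j_\ell}-1 \geq z_\ell$; the key arithmetic facts I would lean on are that $2^{j_\ell}-1 \leq 2z_\ell$ (with $z_\ell=0$ giving $j_\ell=0$), and that $j_\ell \leq \ceil{\log(z+1)}$ since $z_\ell \leq z$, so $V_\ell[j_\ell]$ is actually one of the entries computed in Round~1. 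Combining the $3$-approximation guarantee of \greedy with the monotonicity of $\opt_{k,t}(P_\ell)$ in $t$ then gives
\[
V_\ell[j_\ell] \leq 3\cdot\opt_{k,2^{j_\ell}-1}(P_\ell) \leq 3\cdot\opt_{k,z_\ell}(P_\ell) \leq 3\cdot\optkz(P).
\]

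Setting $r^* := \max_\ell V_\ell[j_\ell]$, we get $r^* \leq 3\cdot\optkz(P)$ and $r^* \in R$ by construction. It remains to verify the outlier-sum constraint at $r^*$. Because $V_\ell[j_\ell] \leq r^*$ for every $\ell$, the inner minimum $\min\{j : V_\ell[j] \leq r^*\}$ is at most $j_\ell$, so $2^{\min\{j : V_\ell[j] \leq r^*\}}-1 \leq 2^{j_\ell}-1 \leq 2z_\ell$. Summing over $\ell$ yields $\sum_\ell(2^{\min\{j : V_\ell[j] \leq r^*\}}-1) \leq 2\sum_\ell z_\ell \leq 2z$. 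Hence $r^*$ satisfies the constraint in the definition of $\feasibleRadius$, and we conclude $\feasibleRadius \leq r^* \leq 3\cdot\optkz(P)$.

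The substantive idea behind the argument is the geometric guessing scheme: the main obstacle is that the algorithm does not know the actual split $(z_1,\ldots,z_m)$ of outliers produced by the unknown global optimum, and so cannot ask each machine for the exactly right radius. The rounding $z_\ell \mapsto 2^{j_\ell}-1$ overcomes this by ensuring every possible $z_\ell$ is ``covered'' by one of the $\ceil{\log(z+1)}+1$ precomputed guesses, while inflating the total outlier budget by at most a factor of $2$. Everything else is bookkeeping around the $3$-approximation of \greedy and monotonicity in the outlier parameter.
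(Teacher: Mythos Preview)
Your proof is correct and follows essentially the same approach as the paper's: both fix an optimal global solution, define for each machine the index $j_\ell = \lceil\log(z_\ell+1)\rceil$ (your ``smallest nonnegative integer with $2^{j_\ell}-1 \ge z_\ell$'' is the same thing), set $r^* = \max_\ell V_\ell[j_\ell]$, and verify that $r^*$ is feasible for the constraint defining $\feasibleRadius$ via $2^{j_\ell}-1 \le 2z_\ell$. Your explicit check that $j_\ell \le \ceil{\log(z+1)}$ (so that $V_\ell[j_\ell]$ is actually one of the stored entries) is a nice detail that the paper leaves implicit.
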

\begin{proof}
Consider a fixed optimal solution for the $k$-center problem with $z$ outliers
on~$P$, and let $Z^*$ be the set of outliers in this optimal solution.
Let $z_i^* := |Z^* \cap P_i|$ be the number of outliers in~$P_i$. 
For each $i\in[m]$ we define $j_i^* := \ceil{\log(z_i^* + 1)}$, so that $2^{j_i^*-1}-1 < z_i^* \leq 2^{j_i^*}-1$.

First, we show that $\max_{i\in[m]} V_i[j_{i}^*] \leq 3\cdot \optkz(P)$. 
Let $i \in [m]$ be an arbitrary number.  
Since $z_i^* \leq 2^{j_i^*}-1$, we have $\opt_{k, 2^{j_i^*}-1}(P_i) \leq \opt_{k,z_i^*}(P_i)$.
Moreover, since $P_i \subseteq P$ and $z_i^*:= |Z^* \cap P_i|$, 
we have $\opt_{k,z_i^*} (P_i)\leq \opt_{k, z}(P)$.
Therefore,
$\opt_{k,2^{j_i^*}-1}(P_i) \leq \opt_{k,z_i^*} (P_i) \leq \optkz(P).$

Besides, $V_i[j_i^*]$ a $3$-approximation of the optimal radius for the $k$-center problem
with $2^{j_i^*}-1$ outliers on $P_i$.
Hence, $V_i[j_i^*] \leq 3\cdot\opt_{k, 2^{j_i^*}-1}(P_i) \leq 3\cdot \optkz(P) \enspace .$
The above inequality holds for any $i\in [m]$, so we have $\max_{i\in [m]} V_i[j_{i}^*] \leq 3\cdot \optkz(P)$.

Next, we show that $\feasibleRadius \leq \max_{i\in[m]} V_i[j_{i}^*]$. 
Let $\ell\in[m]$ be an arbitrary number. 
Since $V_\ell[j_\ell^*] \leq \max_{i\in[m]} V_{i}[j_{i}^*]$, 
we have $\min\{j:V_\ell[j]\leq \max_{i\in[m]} V_{i}[j_{i}^*] \} \leq j_\ell^*$.
Therefore,
\[
\sum_{\ell=1}^{m} \left( 2^{\min\{j: V_i[j] \leq \max_{{i}\in[m]} V_{i}[j_{i}^*] \}} -1 \right)
\leq \sum_{\ell=1}^m \left( 2^{j_\ell^*} -1 \right)
\leq \sum_{\ell=1}^m 2z_\ell^*
\leq 2z \enspace .
\]

Moreover, $\max_{i\in[m]} V_{i}[j_{i}^*] \in R$. 
So, we conclude $\feasibleRadius\leq \max_{i\in[m]} V_{i}[j_{i}^*]$. 
Putting everything together we have
$\feasibleRadius \leq \max_{i\in[m]} V_{i}[j_{i}^*] \leq 3\cdot \optkz(P)$,
which finishes the proof.
\end{proof}

In the second round of Algorithm \ref{alg:2:round}, each machine $M_i$ sends an $(\eps, k, 2^{\hat{j}_i} - 1)$-mini-ball covering of $P_i$ to the coordinator.
As $\hat{j}_i$ may be less than $j_i^*$,
we cannot guarantee that $\opt_{k, 2^{\hat{j}_i} - 1}(P_i) \leq \optkz(P)$,
so we cannot immediately apply Lemma \ref{lem:union:coreset} to show that the union of mini-ball coverings that 
the coordinator receives is an $(\eps, k, z)$-mini-ball covering of~$P$. Therefore, we need a more careful analysis, 
which is presented in Lemma \ref{lem:2:round:alg}.

\begin{lemma}
\label{lem:2:round:alg}
Let $P_i^*$ be the weighted set that machine $M_i$ sends to the coordinator in the second round of Algorithm \ref{alg:2:round}. 
Then, $\cup^{m}_{i=1} P_i^*$ is an $(\eps, k, z)$-mini-ball covering of $P$.
\end{lemma}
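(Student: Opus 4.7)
My plan is to verify the two defining properties of an $(\eps,k,z)$-mini-ball covering directly, rather than trying to apply the Union Property (Lemma~\ref{lem:union:coreset}). The reason is exactly the obstacle flagged in the paragraph before the statement: we have no guarantee that $\opt_{k,2^{\hat{j}_i}-1}(P_i) \le \optkz(P)$, so the hypothesis of Lemma~\ref{lem:union:coreset} is not available. The key idea is that, even though $\opt_{k,2^{\hat{j}_i}-1}(P_i)$ might exceed $\optkz(P)$, the \emph{radius $r_i$ of the mini-balls actually used by} \computeCoreset$(P_i,k,2^{\hat{j}_i}-1,\eps)$ is $\eps \cdot V_i[\hat{j}_i]/3$, and this quantity can be bounded globally in terms of $\optkz(P)$.

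First, I would invoke Lemma~\ref{lem:compress} on each machine: it produces a partition $\{R_{q}^{(i)} : q \in P_i^*\}$ of $P_i$ such that (a)~$\weight(q) = \weight(R_q^{(i)})$ and (b)~$\dist(p,q) \le \eps\cdot r_i/3$ for every $p \in R_q^{(i)}$, where $r_i$ is the radius returned by \greedy$(P_i,k,2^{\hat{j}_i}-1)$, i.e.\ $r_i = V_i[\hat{j}_i]$. Taking the union of these partitions over all $i$ gives a partition of $P$, since the $P_i$ are disjoint and jointly cover $P$; and the weight of each representative $q \in \bigcup_i P_i^*$ is the total weight of the points of $P$ mapped to it. This already establishes the weight property.

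Next, for the covering property I must show each mini-ball has radius at most $\eps \cdot \optkz(P)$ when measured against the global optimum. By definition of $\hat{j}_i$ in Round~2, $V_i[\hat{j}_i] \le \feasibleRadius$, so $r_i \le \feasibleRadius$. By Lemma~\ref{lem:feasibleRadius}, $\feasibleRadius \le 3 \cdot \optkz(P)$. Chaining these inequalities yields
\[
\eps \cdot r_i/3 \;\le\; \eps \cdot \feasibleRadius/3 \;\le\; \eps \cdot \optkz(P),
\]
which is exactly the bound required by the covering property in Definition~\ref{def:miniball:covering}.

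The main (and only) conceptual hurdle is precisely the one I addressed above: realizing that the covering property only needs each mini-ball to have radius at most $\eps\cdot\optkz(P)$, and that this global bound follows from the design of $\feasibleRadius$ in Round~2 via Lemma~\ref{lem:feasibleRadius}, independently of whether each local optimum $\opt_{k,2^{\hat{j}_i}-1}(P_i)$ is small enough to invoke Lemma~\ref{lem:union:coreset}. Once this observation is in place, combining it with the per-machine guarantees from Lemma~\ref{lem:compress} and the disjointness of the $P_i$ completes the verification of both properties, and hence the lemma.
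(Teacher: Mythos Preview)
Your proposal is correct and follows essentially the same approach as the paper: both argue directly that each mini-ball produced by \computeCoreset$(P_i,k,2^{\hat{j}_i}-1,\eps)$ has radius $\eps\cdot r_i/3$ with $r_i = V_i[\hat{j}_i] \le \feasibleRadius \le 3\,\optkz(P)$ (via Lemma~\ref{lem:feasibleRadius}), yielding the global covering bound $\eps\cdot\optkz(P)$. Your write-up is slightly more explicit about the weight property and the partition structure than the paper's proof, but the underlying argument is identical.
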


\begin{proof}
To show $\cup^{m}_{i=1} P_i^*$ is an $(\eps, k, z)$-mini-ball covering of $P$, 
 we prove that for each point $p \in P$ its representative point $q \in \cup^{m}_{i=1} P_i^*$ is such that $\dist(p, q) \leq \eps \cdot \optkz(P)$. 
Let $p$ be an arbitrary point in $P_i$, and let $q \in P_i^*$ be the representative point of $p$.
Observe that $P_i^*$ is a mini-ball covering returned by \computeCoreset$(P_i, k, 2^{\hat{j_i}}-1, \eps)$. Let $r_i$ be the radius of ball that \greedy$(P_i, k, 2^{\hat{j_i}}-1)$ returns, i.e. $r_i=V_i[\hat{j_i}]$. Note that \greedy~is a deterministic algorithm, and $\hat{j_i}$ is defined such that $r_i = V_i[\hat{j_i}] \leq \feasibleRadius$. When we invoke \computeCoreset$(P_i, k, 2^{\hat{j_i}}-1, \eps)$, first it invokes \greedy$(P_i, k, 2^{\hat{j_i}}-1)$, which returns balls of radius $r_i$, and next, assigns the points in each non-empty mini-ball of radius $\frac{\eps}{3}\cdot r_i$ to the center of that mini-ball. 
So, each point is assigned to a representative point of distance at most $\frac{\eps}{3}\cdot r_i$. Thus, $\dist(p, q) \leq \frac{\eps}{3}\cdot r_i$. According to Lemma \ref{lem:feasibleRadius}, $\feasibleRadius \leq 3\cdot \optkz(P)$, also $r_i \leq \feasibleRadius$. Putting everything together we have,
$\dist(p, q) \leq \frac{\eps}{3}\cdot r_i
\leq \frac{\eps}{3}\cdot \feasibleRadius
\leq \eps \cdot \optkz(P) \enspace . 
$
\end{proof}

We obtain the following result.
Note that the second term in the space bound, $\sqrt{n\eps^d/k}\cdot\log(z+1)$, 
can be simplified to $\sqrt{n}\cdot\log(z+1)$ since~$\eps^d/k<1$.
\begin{theorem}[Deterministic $2$-round Algorithm]
\label{thm:2:round}
Let $P \subseteq X$ be a point set of size $n$ in a metric space $(X,\dist)$ of
doubling dimension~$d$. Let $k,z \in \mathbb{N}$ be two natural numbers, and
let $0 < \eps \le 1$ be an error parameter. 
Then, there exists a deterministic algorithm that computes an $(\eps, k, z)$-coreset of $P$
in the MPC model in two rounds of communication,
using  $m=O(\sqrt{n\eps^d/k})$ worker machines with 
$O(\sqrt{nk/\eps^d} + \sqrt{n\eps^d/k}\cdot\log(z+1))$ local memory,
and a coordinator with $O(\sqrt{nk/\eps^d} + \sqrt{n\eps^d/k}\cdot\log(z+1) + z)$ local memory. 
\end{theorem}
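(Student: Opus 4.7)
The proof rests on three ingredients already in hand: Lemma~\ref{lem:2:round:alg} (the union of the worker mini-ball coverings covers $P$), Lemma~\ref{lem:merge:coreset} (mini-ball coverings compose, with a small blow-up in the error parameter), and Lemma~\ref{lem:compress} (a single call to \computeCoreset$(\cdot,k,z,\eps)$ produces a mini-ball covering of size $O(k/\eps^d+z)$). I would first run Algorithm~\ref{alg:2:round} with the parameter $\eps/3$ in place of $\eps$ throughout. By Lemma~\ref{lem:2:round:alg}, the union $\bigcup_i P_i^*$ arriving at the coordinator is then an $(\eps/3,k,z)$-mini-ball covering of $P$, and the coordinator's final call to \computeCoreset produces an $(\eps/3,k,z)$-mini-ball covering of this union. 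Applying the transitive property (Lemma~\ref{lem:merge:coreset}) shows that the coordinator's output is an $(\eps,k,z)$-mini-ball covering of $P$, and Lemma~\ref{lem:MBC:is:coreset} converts this into the required $(\eps,k,z)$-coreset.

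\textbf{Rounds and storage.} By inspection, Algorithm~\ref{alg:2:round} uses two communication phases: the broadcast of the $V_i$ at the end of Round~1 and the transmission of the $P_i^*$ at the end of Round~2. Set $m=\Theta(\sqrt{n\eps^d/k})$. Assuming $P$ is evenly distributed, each worker initially stores $|P_i|=O(n/m)=O(\sqrt{nk/\eps^d})$ points. Each vector $V_\ell$ has $O(\log(z+1))$ entries, so after the Round~1 broadcast each machine holds $O(m\log(z+1))=O(\sqrt{n\eps^d/k}\cdot\log(z+1))$ additional words. Lemma~\ref{lem:compress} bounds the local covering by $|P_i^*|=O(k/\eps^d+2^{\hat j_i})$, which fits within the worker budget. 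At the coordinator we have $\sum_i |P_i^*|=O(mk/\eps^d)+\sum_i O(2^{\hat j_i})=O(\sqrt{nk/\eps^d}+z)$, using the defining inequality of $\feasibleRadius$ to bound the outlier budgets by $2z$; adding the broadcast vectors gives the stated coordinator bound.

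\textbf{Main obstacle.} The delicate step is controlling $\sum_i 2^{\hat j_i}$, since each $\hat j_i$ is determined from the global quantity $\feasibleRadius$ rather than from the true local outlier count. This is handled by the condition in the definition of $\feasibleRadius$, which selects the smallest radius $r$ in the pooled set $R$ for which $\sum_\ell(2^{\min\{j:V_\ell[j]\le r\}}-1)\le 2z$, giving exactly the desired bound; Lemma~\ref{lem:feasibleRadius} simultaneously guarantees $\feasibleRadius\le 3\cdot\optkz(P)$ so that the rescaled distances in the local coverings stay within $\eps\cdot\optkz(P)$, which is what makes Lemma~\ref{lem:2:round:alg} applicable. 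Once both facts are in place, the theorem follows by plugging the chosen value of $m$ into the storage bounds above.
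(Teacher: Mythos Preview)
Your proposal is correct and follows essentially the same approach as the paper's own proof: invoke Lemma~\ref{lem:2:round:alg} for correctness of the union, Lemma~\ref{lem:merge:coreset} and Lemma~\ref{lem:MBC:is:coreset} for the coordinator's final step, and bound the coordinator's storage via $\sum_i(2^{\hat j_i}-1)\le 2z$ from the definition of $\feasibleRadius$. The only cosmetic difference is that you rescale $\eps\mapsto\eps/3$ up front, whereas the paper runs with $\eps$ and notes that the output is an $(\eps',k,z)$-coreset with $\eps'=3\eps$.
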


\begin{proof}
Invoking Algorithm \ref{alg:2:round}, the coordinator receives
$\cup^{m}_{i=1} P_i^*$ after the second round, which is 
an $(\eps, k, z)$-mini-ball covering of $P$ by Lemma \ref{lem:2:round:alg}.
Then to reduce the size of the final coreset, the coordinator computes an $(\eps, k, z)$-mini-ball covering of $\cup^{m}_{i=1} P_i^*$, which is an $(\eps', k, z)$-mini-ball covering of $P$ by Lemma \ref{lem:merge:coreset}, and therefore an $(\eps', k, z)$-coreset of $P$ by Lemma \ref{lem:MBC:is:coreset}, where $\eps'=3\eps$.
Now, we discuss storage usage. 
In the first round, each worker machine needs $O(\frac{n}{m}) = O(\sqrt{nk/\eps^d})$ space to store the points and compute a mini-ball covering.
In the second round, each worker machine receives $m$ vectors of length $\ceil{\log(z+1)}+1$, and needs $O(m \cdot\log(z+1))$ to store them.
Therefore, the local space of each worker machine is of size 
$O(\sqrt{nk/\eps^d} +
\sqrt{n\eps^d/k}\cdot\log{(z+1)})$.

After the second round, the coordinator receives $\cup^{m}_{i=1} P_i^*$. As $P_i^*$ is returned by \computeCoreset$(P_i, k, 2^{\hat{j_i}}-1, \eps)$, Lemma \ref{lem:compress} shows that the size of $P_i^*$ is at most $k(\frac{12}{\eps})^d + (2^{\hat{j_i}}-1)$.
Besides, $\hat{j_i}$ is define such that $\sum_{i=1}^m (2^{\hat{j_i}}-1) \leq 2\cdot z$. 
Also, note that we can assume the doubling dimension $d$ is a constant.
Consequently, the required memory for the final mini-ball covering is
\[
\sum_{i=1}^m k\left(\frac{12}{\eps}\right)^d + (2^{\hat{j_i}}-1) 
= O\left(m \cdot k\left(\frac{1}{\eps}\right)^d + \sum_{i=1}^m (2^{\hat{j_i}}-1)\right)
= O\left(\sqrt{\frac{nk}{\eps^d}}+z\right) \enspace .
\]
Thus, the local memory of the coordinator is of size 
$O(\sqrt{nk/\eps^d} + \sqrt{n\eps^d/k}\cdot\log{(z+1)} + z)$.
\end{proof}
%

\section{A tight lower bound for insertion-only streaming algorithms}
\label{sec:streaming:lb}

In this section, we first show that any deterministic algorithm requires $\Omega(k/\eps^d+z)$ space to compute an $(\eps, k, z)$-coreset.
Then interestingly, we present a
deterministic streaming algorithm that uses $O(k/\eps^d+ z)$ space in section \ref{sec:tight:streaming:algorithm}, which is optimal.

To prove our lower bounds, we need to put a natural restriction on the
total weight of the coreset, as follows.
\mypara{Lower-bound setting.} 
Let $P(t)$ be the subset of points that are present at time~$t$, that is, $P(t)$ contains the points
that have been inserted.
Let $P^*(t) \subseteq P(t)$ be an $(\eps,k,z)$-coreset for~$P(t)$. Then we say that
$P^*(t)$ is a \emph{weight-restricted coreset} if
$w(P^*(t)) \leq w(P(t))$, that is, if the total weight of the
points in~$P^*(t)$ is upper bounded by the total weight of the points in~$P(t)$.
\begin{theorem}[Lower bound for insertion-only algorithms]
\label{thm:lower:bound:insertion:only}
Let $0 < \eps \leq \frac{1}{8d}$ and $k\geq 2d$. 
Any deterministic insertion-only streaming algorithm that maintains a weight-restricted $(\eps,k,z)$-coreset 
for the $k$-center problem with $z$ outliers in $\Reals^d$  
must use $\Omega(k/\eps^d+ z)$ space.
\end{theorem}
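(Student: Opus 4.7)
The plan is to prove the two additive lower bounds $\Omega(k/\eps^d)$ and $\Omega(z)$ separately, via two different hard instances, and then combine them using $\max(a,b) \geq (a+b)/2$. In both cases I will show that, for the constructed input, every valid weight-restricted $(\eps,k,z)$-coreset $P^*$ has a prescribed minimum size. Since the algorithm must be able to output $P^*$ from its memory, this gives the desired memory bound: each coreset point carries $\Omega(1)$ words of information (its coordinates and weight), so $|P^*|$ lower-bounds the space.

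For the $\Omega(z)$ bound I will take $P$ to be a set of $n := k+z$ unit-weight points in $\Reals^d$ with pairwise distances at least $1$, so that $\optkz(P) = 0$ and the expansion in coreset property~(2) is $\eps \cdot \optkz(P) = 0$. Writing $m := |P^*|$ and $W := w(P^*)$, I will consider the adversary that places $k_1 < k$ balls of radius $0$ at the $k_1$ largest-weight points of $P^*$ and the remaining $k-k_1$ balls at locations outside $P$. This adversary has $P^*$-leakage $W - \sum_{\mathrm{top}\,k_1} w_i$ and $P$-leakage $n - k_1 \ge z+1$. For property~(2) to hold, the adversary with $k_1 = k-1$ must already be infeasible on $P^*$, which forces the bottom $m - k + 1$ weights of $P^*$ to sum to more than $z$. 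Since weights are positive integers, this sum is at least $m - k + 1$, so $m \ge k + z$; combined with $m \le n$ and the weight restriction $W \le n$, I conclude $P^* = P$ with unit weights, hence $|P^*| = k + z = \Omega(z)$ (using the hypothesis $k \ge 2d \ge 2$).

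For the $\Omega(k/\eps^d)$ bound I will construct $k$ cluster regions $C_1,\ldots,C_k$ in $\Reals^d$ at pairwise distance much larger than the common cluster radius $1$, where each $C_i$ is a carefully chosen packing of $N := \Theta((1/\eps)^d)$ points at pairwise distance $\Theta(\eps)$ inside a unit ball; this fixes $\optkz(P) = 1$ so the coreset expansion equals $\eps$. The heart of the argument is a per-cluster claim: the coreset must essentially be an $O(\eps)$-net of each cluster, i.e.\ $|P^* \cap C_i| = \Omega(N)$. I will prove this by partitioning $C_i$ into $\Omega(N)$ disjoint ``mini-balls'' of radius $\eps$ and showing that for any mini-ball $M \subseteq C_i$ whose ``witness point'' of $P$ is not represented by a nearby point of $P^*$, one can design $k$ balls, one per cluster, with radius at most $1 - 2\eps$ that cover $P^*$ with zero $P^*$-leakage but, even after the $\eps$-expansion, miss the $P$-points inside $M$. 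Repeating this for enough unrepresented mini-balls (using that $\eps \le 1/(8d)$ and $k \ge 2d$ ensure the quantitative bounds) yields a $P$-leakage exceeding $z$, which would violate property~(2). Summing $|P^* \cap C_i| = \Omega(N)$ over all $k$ clusters gives $|P^*| = \Omega(k/\eps^d)$.

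The main obstacle is the adversarial ball construction in the $\Omega(k/\eps^d)$ part: a cunning coreset could spread its points over antipodal positions in a cluster, forcing any enclosing ball to have nearly full radius $1$, which after expansion could cover the whole cluster and prevent a violation. I will sidestep this by fixing the adversary's balls independently of $P^*$: for each target mini-ball $M$ I will center the adversary's ball of radius $1 - 2\eps$ at the point $c_i - \eps u_M$, where $u_M$ is the unit vector from the cluster center $c_i$ towards $M$. After the $\eps$-expansion the ball has radius $1 - \eps$ and centre $c_i - \eps u_M$, so it still lies strictly inside $C_i$ on the $u_M$-side and cannot reach $M$; whereas, for $M$'s opposite side, the shift guarantees that any $P^*$ point in that half of $C_i$ is still covered. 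A union-bound-style argument over the $\Omega(N)$ mini-balls per cluster then shows that unless $P^* \cap C_i$ hits a constant fraction of them, the aggregate $P$-leakage exceeds~$z$, completing the proof.
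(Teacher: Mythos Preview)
Your plan for the $\Omega(z)$ bound is sound and actually cleaner than the paper's: the paper inserts one extra point and argues about the coreset at the next time step, whereas you argue directly that any weight-restricted coreset of the $k+z$ well-separated unit-weight points must already have size $k+z$. One sentence is garbled, though: from ``the bottom $m-k+1$ weights sum to more than $z$'' you cannot deduce $m \ge k+z$ via ``this sum is at least $m-k+1$'' (that inequality goes the wrong way). The correct chain is: bottom $m-k+1$ weights sum to $\ge z+1$, top $k-1$ weights sum to $\ge k-1$, so $W\ge k+z$; combined with the weight restriction $W\le k+z$ all inequalities are tight, hence every weight equals $1$ and $m=W=k+z$.

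Your plan for the $\Omega(k/\eps^d)$ bound, however, has a genuine gap: you are trying to prove a \emph{static} lower bound on the size of any valid coreset for a fixed point set, and no such bound of order $k/\eps^d$ holds for your instance. Concretely, take one cluster $C$ equal to a dense $\Theta(\eps)$-packing of the unit ball in $\Reals^d$, with $k=1$, $z=0$. Then an $\eps$-net of the \emph{boundary sphere} of $C$---only $\Theta(1/\eps^{d-1})$ points---is already a valid $(\eps,1,0)$-coreset: any single ball $B$ containing this boundary net must, after the $\eps$-expansion, contain the whole sphere, and hence by convexity the entire ball and thus all of $C$. The same boundary-net construction works per cluster for general $k$ (and small $z$, with weights $\Theta(1/\eps)$ so no net point can be an outlier), giving a valid static coreset of size $O(k/\eps^{\,d-1})$, strictly smaller than $k/\eps^d$. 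Your shifted-ball adversary cannot defeat this coreset: the ball of radius $1-2\eps$ centred at $c_i-\eps u_M$ fails to cover the boundary-net points of $C_i$ on the $u_M$ side, so its $P^*$-leakage is $\Theta(1/\eps^{d-1})$, not obviously $\le z$. And your ``aggregate leakage'' language suggests summing leakages over several different ball configurations, but property~(2) speaks about a single configuration, so this cannot produce a contradiction.

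The idea you are missing is precisely the one the paper exploits: the bound is on a \emph{streaming} algorithm, so the adversary may insert further points after inspecting what the algorithm has stored. The paper places $k-2d+1$ grid-clusters and shows that if any single grid point $p^*$ is absent from storage, then inserting $2d$ carefully placed new points around $p^*$ makes the true optimum $\Theta(1/\eps)$ while the coreset's optimum stays strictly below $(1-\eps)$ times that. This forces the algorithm to store every grid point, hence $\Omega(k/\eps^d)$ space. No purely static argument can replace this step, because statically a much smaller coreset suffices.
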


To prove Theorem~\ref{thm:lower:bound:insertion:only}, 
we consider two cases: $z\leq k/\eps^d$ and $z> k/\eps^d$. 
For the former cases, we show an $\Omega(k/\eps^d)$ lower bound in section \ref{sec:lb:z:small}.
Then for the latter case, we prove an $\Omega(z)$ lower bound in section \ref{sec:lb:k+z}, which also applies to randomized streaming algorithms.


\subsection{An $\Omega(k/\eps^d)$ lower bound}
\label{sec:lb:z:small}

The following lemma provides a good lower bound for the case where $z\leq k/\eps^d$.
\begin{lemma}
\label{lem:lb:z:small}
Let $0 < \eps \leq \frac{1}{8d}$ and $k\geq 2d$. 
Any deterministic insertion-only streaming algorithm that maintains 
an $(\eps,k,z)$-coreset 
for the $k$-center problem with $z$ outliers in $\Reals^d$ 
needs to use $\Omega(k/\eps^d)$ space.
\end{lemma}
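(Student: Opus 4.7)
The plan is to exhibit a single adversarial point set $P \subset \Reals^d$ of size $\Theta(k/\eps^d)$, together with a family of $2^{\Theta(k/\eps^d)}$ insertion streams over subsets of $P$, such that any deterministic algorithm maintaining a weight-restricted $(\eps,k,z)$-coreset must occupy distinct states on each. The space lower bound $\Omega(k/\eps^d)$ then follows from $\log_2(2^{\Theta(k/\eps^d)}) = \Omega(k/\eps^d)$. I would take $P = \bigcup_{i=1}^k G_i$, where each $G_i$ is a translate of a fixed $4\eps$-packing of $N = \Theta(1/\eps^d)$ points inside a unit ball (such a packing exists because $\eps \le 1/(8d)$). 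The translates are placed so far apart that the $k$ clusters are geometrically independent, and $N$ is taken large enough that removing any $z \le k/\eps^d$ outliers from $P$ still forces $\optkz(P) \in [1/2, 1]$.

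The technical heart of the argument is a \emph{representative property}: for every subset $Q \subseteq P$ and every $(\eps,k,z)$-coreset $Q^* \subseteq Q$, one has $Q^* = Q$. I would prove this by showing that, for each $p \in Q$, there exists $q \in Q^*$ with $\dist(p,q) \le 2\eps \cdot \opt_{k,z}(Q)$; since any two distinct points of $P$ are at distance $> 4\eps \ge 4\eps \cdot \opt_{k,z}(Q)$, the triangle inequality forces $q = p$, hence $p \in Q^*$. To establish such a nearby representative, I would argue by contradiction using Definition~\ref{def:coreset}~(2). Starting from an optimal $k$-center solution for $Q^*$---whose radius $r^*$ satisfies $r^* \le (1+\eps)\opt_{k,z}(Q)$ by property~(1)---I would perturb the center of the ball covering $p$'s cluster by slightly more than $\eps \cdot \opt_{k,z}(Q)$ in the direction away from $p$. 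The hypothesized gap of radius $2\eps \cdot \opt_{k,z}(Q)$ around $p$ guarantees that no $Q^*$ point is pushed outside the perturbed ball, while $p$ now lies beyond the $(\eps \cdot \opt_{k,z}(Q))$-expansion of the new configuration, contradicting property~(2).

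With the representative property in hand, the lower bound follows from a standard fooling-set argument. Consider the $2^{|P|}$ insertion streams $\{I_S : S \subseteq P\}$, where $I_S$ inserts exactly the points of $S$ (in some fixed order). Because the coreset for $I_S$ must equal $S$ by the representative property, two streams $I_S, I_{S'}$ with $S \ne S'$ cannot leave the algorithm in the same state---otherwise its output coreset would simultaneously equal the distinct sets $S$ and $S'$. Hence the algorithm needs at least $2^{|P|} = 2^{\Theta(k/\eps^d)}$ distinct states, and must use $\Omega(k/\eps^d)$ storage.

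The main obstacle is the congruency constraint in property~(2): one cannot simply replace a single ball by a smaller one centered near $p$. My perturbation argument sidesteps this by keeping the common radius $r^*$ fixed and exploiting the local gap around $p$ to shift just one ball. The calculation to carry out amounts to verifying that the shift simultaneously (i) preserves $Q^*$-coverage at radius $r^*$ and (ii) pushes $p$ outside the $(r^* + \eps \cdot \opt_{k,z}(Q))$-expansion; both ingredients rely on the packing spacing exceeding $4\eps$ and on the bound $\opt_{k,z}(Q) \le 1$, so that the geometric wiggle room of size $2\eps \cdot \opt_{k,z}(Q)$ around $p$ is compatible with the required shift.
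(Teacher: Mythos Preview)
Your perturbation step has a genuine gap. Shifting the center $c$ of the ball covering $p$'s cluster by $\delta\approx\eps\cdot\opt_{k,z}(Q)$ in the direction away from $p$ does \emph{not} in general keep all $Q^*$-points inside: a point $q\in Q^*$ near the boundary on the side opposite $p$ satisfies $\dist(q,c')$ as large as $\dist(q,c)+\delta$, which may exceed $r^*$. Your ``gap of radius $2\eps\cdot\opt_{k,z}(Q)$ around $p$'' only rules out $Q^*$-points \emph{near} $p$; it says nothing about $Q^*$-points sitting near the far boundary of the ball. So the perturbed configuration need not remain a valid cover of $Q^*$, and property~(2) cannot be invoked. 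A single-ball shift is simply too weak here: covering a cluster \emph{minus one interior point} with the same radius genuinely requires more than one ball.

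There is also a secondary issue with letting the fooling set range over all $2^{|P|}$ subsets. Whenever $|S|\le k+z$ one has $\opt_{k,z}(S)=0$, and then dropping a single weight-$1$ point from $S$ still yields a valid weight-restricted coreset (both conditions in Definition~\ref{def:coreset} hold trivially at radius~$0$). Hence the representative property fails outright for such $S$, and they cannot be distinguished.

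The paper's proof avoids both difficulties by taking a different route. It fixes a single stream $P(t)$ (with $k-2d+1$ grid-clusters and $z$ isolated outliers) and shows that any missing cluster point $p^*$ leads to a violation of property~(1)---not property~(2)---\emph{after inserting $2d$ further points} $P^+\cup P^-$ placed at distance $h+r$ from $p^*$ along the coordinate axes. The key geometric fact is that these $2d$ new points allow $2d$ congruent balls of radius $r$ to cover all of $(C_{i^*}\setminus\{p^*\})\cup P^+\cup P^-$; hence $\opt_{k,z}(P^*(t'))\le r$, while one checks $\opt_{k,z}(P(t'))\ge(h+r)/2>r/(1-\eps)$. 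Reserving $2d$ balls for this purpose (which is why the construction uses only $k-2d+1$ clusters) is precisely what makes the ``cover the cluster except for $p^*$'' argument go through, and it is this multi-ball construction---not a single-ball shift---that does the work your perturbation was meant to do.
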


To prove the lemma, we may assume without loss of generality that~$\lambda := 1/(4d\eps)$ is an integer. 
Let $h := d(\lambda+2)/2$ and $r :=\sqrt{h^2-2h+d}$.
We next present a set~$P(t)$ requiring a coreset of size~$\Omega(k/\eps^d)$.
The set $P(t)$ is illustrated in Figure~\ref{fig:lb:streaming:config}. It contains
$z$ outlier points $o_1,\ldots,o_z$ and $k-2d+1$ clusters $C_1,\ldots,C_{k-2d+1}$, defined
as follows.
\begin{itemize}
\item For $i \in [z]$, the outlier $o_i$ is a point with the coordinates $(- 4(h+r)i,0,0,\ldots,0)$.  
\item Each cluster $C_i$ is a $d$-dimensional integer grid of side length $\lambda$ that 
      consists of $(\lambda+1)^d$ points. The distance between two consecutive clusters 
      is $4(h+r)$ as illustrated in Figure~\ref{fig:lb:streaming:config}.
      In particular, $C_1 := \{(x_1,\dots,x_d) \ | \ x_j \in \{0,1,\cdots,\lambda\} \}$.
      For each $1 < i \le k-2d+1$, the cluster $C_i$ is  
      $C_i := \{(\delta+x_1,x_2,\dots,x_d) \ | \ (x_1,x_2,\dots,x_d) \in C_{i-1} \}$, 
      where $\delta=\lambda+4(h+r)$.
\end{itemize}
Let $P^*(t)\subseteq P(t)$ be the coreset that the algorithm maintains at time~$t$. 
We claim that $P^*(t)$ must contain all points of any of the clusters $C_1,\ldots,C_{k-2d+1}$. 
Since $|C_i| = (\lambda+1)^d = \Omega(1/\eps^d)$, we must then have $|P^*(t)|=\Omega(k/\eps^d)$.

To prove the claim, assume for a contradiction that
there is a point~$p^*=(p_1^*,\ldots,p_d^*)$ that is not explicitly stored in~$P^*(t)$.
Let $i^* \in [k-2d+1]$ be such that $p^*\in C_{i^*}$. 
Now suppose the next $2d$ points that arrive 
are the points from $P^+ := \{p_1^+,\ldots,p_d^+ \}$ and $P^- := \{p_1^-,\ldots,p_d^- \}$.
Here $p_j^+ = (p_{j,1}^+,\ldots,p_{j,d}^+)$, where
$p_{j,j}^+:= p_j^*+(h+r)$ and $p_{j,\ell}^+:=p_\ell^*$ for all $\ell\neq j$.
Similarly, $p_j^- = (p_{j,1}^-,\ldots,p_{j,d}^-)$ where $p_{j,j}^-:= p_j^*-(h+r)$ and $p_{j,\ell}^-:=p^*_\ell$ 
for all $\ell\neq j$; see Figure~\ref{fig:lb:streaming:config}. 
It will be convenient to assume that each point in $P^+\cup P^-$ has weight~2;
of course we could also insert two points at the same location (or,
almost at the same location). 

\begin{figure}[h]
\begin{center}
\includegraphics{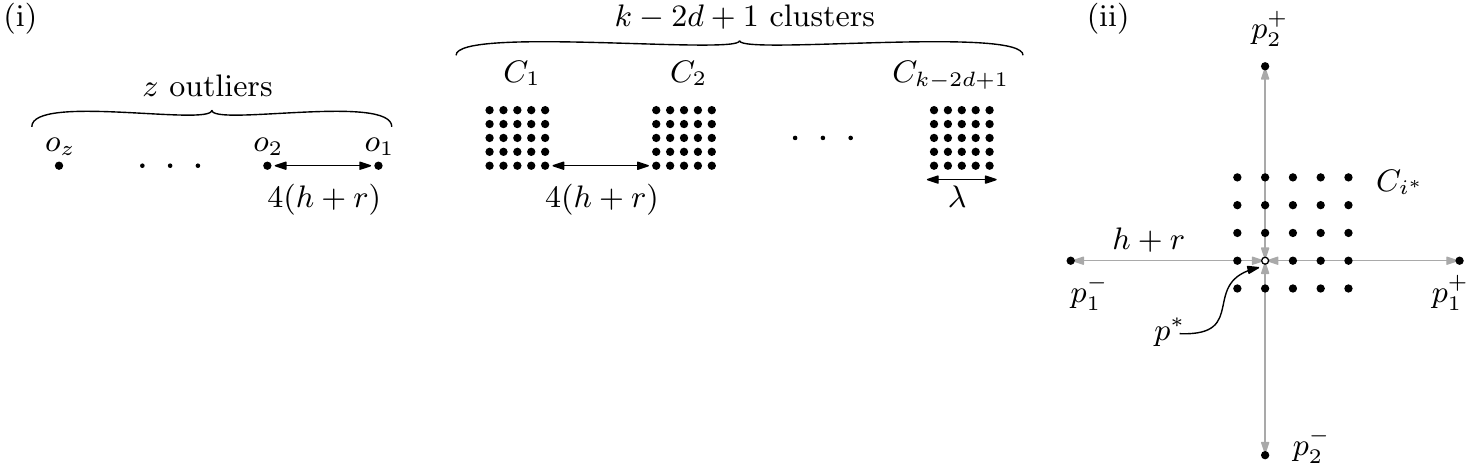}
\end{center}
\caption{Illustration of the lower bound in
Lemma \ref{lem:lb:z:small}.
We have $\lambda := 1/(4d\eps)$ is an integer, $h := d(\lambda+2)/2$ and $r :=\sqrt{h^2-2h+d}$.
Part~(i) shows the global construction, part~(ii) shows the points in $P^+$ and $P^-$. 
} 
\label{fig:lb:streaming:config}
\end{figure}

Let $P(t') := P(t)\cup P^- \cup P^+$ and let $P^*(t')$ be the coreset of $P(t')$. 
Since $P^*(t)$ did not store $p^*$, we have $p^*\not\in P^*(t')$. We will show that this
implies that $P^*(t')$ underestimates the optimal radius by too much. We first give a
lower bound on $\optkz(P(t'))$.
\begin{claim}
\label{lem:OPT:lower:bound}
$\optkz(P(t')) \geq  (h+r)/2$.
\end{claim}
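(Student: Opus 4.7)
The plan is to argue by contradiction. Assume there are $k$ congruent balls of some radius $r^* < (h+r)/2$ covering all but weight at most $z$ of $P(t')$. I will show the uncovered weight must actually be at least $z+2$, which is the desired contradiction. The first step is to record the pairwise distances in the ``special'' set $S':=\{p^*\}\cup P^+\cup P^-$, which has $2d+1$ points of total weight $1+4d$ (weight $1$ for $p^*$ and weight $2$ for each of the $2d$ new points). From the coordinate formulas in the construction, $\dist(p^*,p_j^\pm)=h+r$, $\dist(p_j^+,p_j^-)=2(h+r)$, and $\dist(p_i^\pm,p_j^\pm)=(h+r)\sqrt{2}$ for $i\neq j$, so every pair in $S'$ has distance at least $h+r > 2r^*$ and each ball of radius $r^*$ covers at most one point of $S'$. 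Because distinct clusters are $4(h+r)$ apart, outliers are at distance at least $4(h+r)$ from each other and from every cluster, and every point of any $C_i$ with $i\neq i^*$ is at distance at least $4(h+r)-\lambda\sqrt{d} > 2r^*$ from every point of $S'$, each of the $k$ balls falls into exactly one of three disjoint categories: it touches an outlier alone, some cluster $C_i$ with $i\neq i^*$ (with no $S'$-point and no outlier), or $C_{i^*}$ and/or a single point of $S'$.

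Next I would perform a ball-accounting argument. Let $n_{\mathrm{other}}$, $n_{\mathrm{mid}}$, $n_{\mathrm{out}}$ be the number of balls of each type, so $n_{\mathrm{other}}+n_{\mathrm{mid}}+n_{\mathrm{out}}\leq k$. The $n_{\mathrm{other}}$ balls cover at most $n_{\mathrm{other}}$ of the $k-2d$ non-$i^*$ clusters, leaving uncovered weight at least $(k-2d-n_{\mathrm{other}})(\lambda+1)^d$. Writing $m:=n_{\mathrm{mid}}\ge 1$, I would show that the adversary can cover at most $(\lambda+1)^d+2(m-1)$ weight from $C_{i^*}\cup S'$ (whose total weight is $4d+(\lambda+1)^d$): the optimal choice dedicates one ball to covering $p^*$ together with all of $C_{i^*}$ (feasible since the parameters give $\lambda\sqrt{d}/2 < (h+r)/2$, so a radius-$r^*$ ball does contain $C_{i^*}$) and uses the remaining $m-1$ balls on $m-1$ distinct new points of weight $2$. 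The outliers contribute uncovered weight $z-n_{\mathrm{out}}$. Summing and minimizing the resulting expression $(k-2d-n_{\mathrm{other}})(\lambda+1)^d + (4d-2m+2) + (z-n_{\mathrm{out}})$ subject to the budget, the optimum sits at $n_{\mathrm{other}}=k-2d$, $m=2d$, $n_{\mathrm{out}}=0$ and equals $0+2+z=z+2>z$, the contradiction; any alternative trades a $1$-unit saving on the outlier side for a loss of at least $2$ in the middle or at least $(\lambda+1)^d$ on the other clusters.

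The delicate part of the proof is the bound $(\lambda+1)^d + 2(m-1)$ on what $m$ middle balls can cover. It rests on showing (i) that no ball covers both $p^*$ and any $p_j^\pm$, which is blocked by $\dist(p^*,p_j^\pm)=h+r > 2r^*$, and (ii) that a ball around some $p_j^\pm$ cannot penetrate $C_{i^*}$ deeply enough to substitute for a dedicated $p^*$-ball, because the distances from $p_j^\pm$ to points in $C_{i^*}$ are at least $h+r-\lambda\sqrt{d}$. In the sub-regime $r^* < \lambda\sqrt{d}/2$ where a single ball cannot even cover $C_{i^*}$, the adversary needs additional cluster balls, so the bound is strictly weaker for it and the same contradiction follows a fortiori, giving $\optkz(P(t'))\geq (h+r)/2$ as required.
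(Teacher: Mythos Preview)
Your accounting contains a concrete error. You assert that $m$ middle balls cover at most $(\lambda+1)^d+2(m-1)$ weight from $C_{i^*}\cup S'$, but this fails already for $m=2d$ and $r^*=r$ (note $r<(h+r)/2$). The $2d$ balls of radius $r$ centered at $c_j^\pm$, where $c_{j,j}^\pm=p_j^*\pm h$ and $c_{j,\ell}^\pm=p_\ell^*$ for $\ell\neq j$, satisfy $\dist(c_j^\pm,p_j^\pm)=r$ and, by a direct calculation the paper itself carries out later, cover every $q\in C_{i^*}\setminus\{p^*\}$. Hence these $2d$ balls cover all of $P^+\cup P^-\cup(C_{i^*}\setminus\{p^*\})$, which has weight $(\lambda+1)^d+4d-1$, exceeding your bound $(\lambda+1)^d+4d-2$ by one. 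Your point~(ii) only says a single ball through some $p_j^\pm$ cannot reach very deep into $C_{i^*}$; it does not prevent the $2d$ such balls from \emph{jointly} sweeping out all of $C_{i^*}\setminus\{p^*\}$. Consequently your conclusion ``uncovered $\geq z+2$'' is wrong; the true minimum is $z+1$. That still exceeds $z$ and so the overall strategy is salvageable, but the argument as written does not establish the bound you rely on, and the optimization step (``the optimal choice dedicates one ball to covering $p^*$ together with all of $C_{i^*}$'') is simply not the adversary's best play.

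By contrast, the paper's proof is a two-line pigeonhole that sidesteps all of this. Take one point from each $C_i$ with $i\neq i^*$ to get a set $Q$ of size $k-2d$, and let $X:=Q\cup\{p^*\}\cup P^+\cup P^-\cup\{o_1,\dots,o_z\}$. Then $|X|=k+z+1$ and all pairwise distances in $X$ are at least $h+r$, so any ball of radius below $(h+r)/2$ contains at most one point of $X$. Hence $k$ balls leave at least $z+1$ points of $X$ uncovered, each of weight at least $1$, and since $X\subseteq P(t')$ we get $\optkz(P(t'))\geq\optkz(X)\geq(h+r)/2$. This avoids the ball-type case split and the delicate middle-coverage estimate entirely.
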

\begin{proof}
Recall that we have $k-2d+1$ clusters $C_1,\ldots,C_{k-2d+1}$ and that
$p^* \in C_{i^*}$. Pick an arbitrary point from each cluster $C_i \neq C_{i^*}$,
and let $Q$ be the resulting set of $k-2d$ points.
Define $X := Q \cup \{p^*\} \cup P^- \cup P^+ \cup \{o_1,\ldots,o_z\}$. 
Observe that $|X| = (k-2d) + 1 + 2d+z = k+z+1$, and that 
the pairwise distance between any two points in $X$ is at least~$h+r$.
Hence, $\optkz(P(t')) \geq \optkz(X) \geq (h+r)/2$.
\end{proof}
Next we show that, because $P^*(t')$ does not contain the point~$p^*$, it
must underestimate $\optkz(P^*(t'))$ by too much. To this end, we first show the following claim,
which is proved as Lemma~\ref{lem:OPT:upper:bound} in the appendix.
The idea of the proof is that an optimal solution for $P^*(t')$
can use $2d$ balls for $C_{i^*}\cup P^+\cup P^-$, and that because
$p^*\not\in P^*(t')$ this can be done with balls that are ``too small'' for an $(\eps,k,z)$-coreset; see Figure \ref{fig:new:lb:streaming:missing}.
The formal proof is given in the appendix.
\begin{claim}[Lemma~\ref{lem:OPT:upper:bound} in Appendix~\ref{app:lb:streaming}]
\label{claim:lb:streaming}
$\optkz(P^*(t')) \leq r$.
\end{claim}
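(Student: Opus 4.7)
The plan is to exhibit $k$ congruent balls of radius $r$ that cover $P^*(t')$ up to at most $z$ uncovered weight, which gives $\optkz(P^*(t')) \le r$. The crucial fact I will exploit is that $p^* \notin P^*(t')$, so we may leave $p^*$ uncovered; this slack lets us cover $C_{i^*}\setminus\{p^*\}$ together with $P^+\cup P^-$ using only $2d$ balls, leaving $k-2d$ balls for the remaining clusters.

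The construction proceeds in two parts. First, for each cluster $C_i$ with $i\ne i^*$ I take one ball centered at the centroid of the grid $C_i$; its radius is $\lambda\sqrt{d}/2$. A short algebraic check using $h=d(\lambda+2)/2$, $r^2=h^2-2h+d$, and $\lambda\ge 2$ (which follows from $\eps\le 1/(8d)$) shows $\lambda\sqrt{d}/2 \le r$. Second, for each $j\in[d]$ I introduce two balls $B_j^{\pm}$ of radius $r$ centered at $c_j^{\pm}:=p^*\pm h\,e_j$; each contains $p_j^{\pm}\in P^{\pm}$ because $|c_j^{\pm}-p_j^{\pm}|=(h+r)-h = r$. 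Together this gives exactly $k$ balls.

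The main obstacle is to verify that the $2d$ balls $B_j^{\pm}$ collectively cover $C_{i^*}\setminus\{p^*\}$. For any $q=p^*+\delta \in C_{i^*}\setminus\{p^*\}$ (so $\delta \in \mathbb{Z}^d \setminus \{0\}$ with $\|\delta\|_\infty \le \lambda$), I pick $j^*:=\arg\max_j|\delta_j|$ and the sign $s:=\mathrm{sign}(\delta_{j^*})$. A direct expansion yields
\[
|c_{j^*}^s - q|^2 \;=\; h^2 - 2h\,|\delta_{j^*}| + \|\delta\|_2^2,
\]
so $|c_{j^*}^s - q| \le r$ reduces to $\|\delta\|_2^2 \le d + 2h(|\delta_{j^*}|-1)$. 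Using $\|\delta\|_2^2 \le d\,|\delta_{j^*}|^2$ (by the choice of $j^*$), it suffices to verify $d(|\delta_{j^*}|+1) \le 2h = d(\lambda+2)$, i.e.\ $|\delta_{j^*}|\le \lambda+1$; this holds since $|\delta_{j^*}|\le \lambda$. This inequality is exactly what drives the precise definitions of $h$ and $r$.

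Putting everything together, the $k$ balls cover $P(t')\setminus(\{o_1,\ldots,o_z\}\cup\{p^*\})$ and hence $P^*(t')\setminus\{o_1,\ldots,o_z\}$ (using $p^*\notin P^*(t')$). The uncovered portion of $P^*(t')$ lies in $\{o_1,\ldots,o_z\}$, a set of at most $z$ outliers whose total weight in the weight-restricted coreset is at most $z$. Hence $\optkz(P^*(t'))\le r$, as required.
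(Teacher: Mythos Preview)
Your geometric construction and covering argument are essentially identical to the paper's: the same $k-2d$ balls for the clusters $C_i$ with $i\neq i^*$, the same $2d$ centers $c_j^{\pm}=p^*\pm h\,e_j$, and the same computation showing that every $q\in C_{i^*}\setminus\{p^*\}$ falls in the ball associated with its dominant coordinate. That part is fine.

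The gap is in your last step. You assert that the uncovered portion of $P^*(t')$ lies in $\{o_1,\ldots,o_z\}$ and then claim that ``the weight-restricted coreset'' guarantees their total weight in $P^*(t')$ is at most~$z$. But the weight-restricted condition only says $w(P^*(t'))\le w(P(t'))$; it places no constraint on how that weight is distributed among the stored points. Nothing prevents the coreset from assigning, say, weight $z+5$ to a single outlier $o_1$ while compensating by omitting or underweighting other points, as long as the total does not exceed $w(P(t'))$. So the weight-restricted hypothesis alone does not give you $w(\{o_1,\ldots,o_z\}\cap P^*(t'))\le z$.

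The paper closes this gap with a separate, non-trivial argument (Claim~\ref{claim:outlier_weight:<=z}) that does \emph{not} use the weight-restricted condition at all. Instead it argues by contradiction from the coreset properties themselves: if the outliers had coreset weight exceeding $z$, then any optimal covering $\mathcal{B}^*$ of $P^*(t')$ must spend at least one ball on an outlier; but those balls have radius at most $(1+\eps)(h+r)/2$ (by coreset property~1 together with $\optkz(P(t'))=(h+r)/2$), so each such ball is a singleton. Expanding $\mathcal{B}^*$ and invoking coreset property~2 then forces the remaining $k-k'$ balls to cover $k$ well-separated sets in $P(t')$, each of weight at least~2, which is impossible. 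You need an argument of this kind (or some other way to control the outlier weight) to finish.
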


\begin{figure}
\begin{center}
\hspace*{-3mm}
\includegraphics{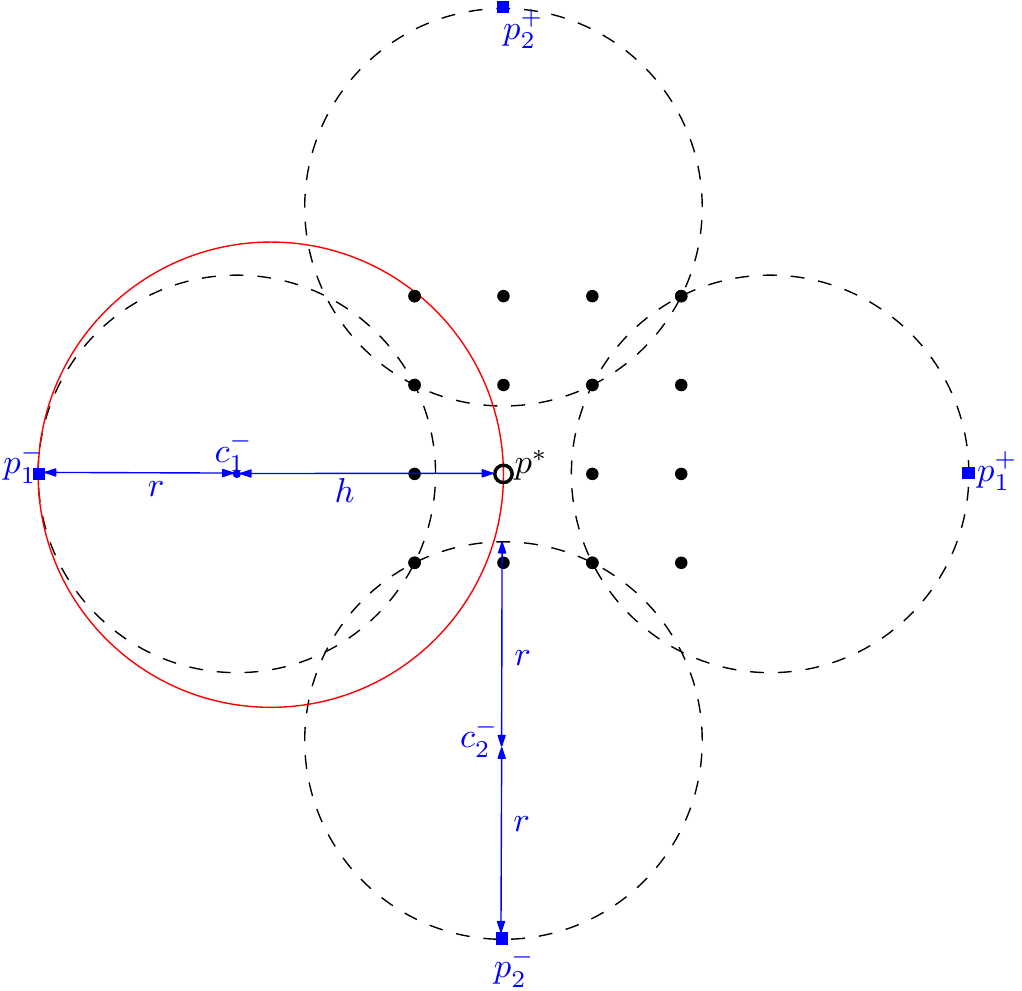}
\end{center}
\caption{Illustration of the lower bound for the streaming model. 
Here, $P^*(t')$ underestimates $\optkz(P(t'))$ since $2d$ balls of radius $r$ can cover
$P^+\cup P^-\cup C_{i^*}\setminus\{p^*\}$ (dashed balls), 
and then $\optkz(P^*(t')) \leq r$.
However, $\optkz(P(t')) = (r+h)/2$ (the red ball).
} 
\label{fig:new:lb:streaming:missing}
\end{figure}

Lemma~\ref{lem:r:bound}, which can also be found in Appendix~\ref{app:lb:streaming},
gives us that $r<(1-\eps)(r+h)/2$. Putting everything together, we have
\[
(1-\eps)\cdot\optkz(P(t')) \ \ \geq \ \  (1-\eps)(r+h)/2 \ \ > \ \ r \ \ \geq \ \   \optkz(P^*(t')) \enspace.
\]
However, this is a contradiction to our assumption that $P^*(t')$ is an $(\eps,k,z)$-coreset of $P(t')$.
Hence, if $P^*(t)$ does not store all points from each of the clusters~$C_i$,
then it will not be able to maintain an $(\eps,k,z)$-coreset.
This finishes the proof of Lemma \ref{lem:lb:z:small}.

\subsection{An $\Omega(z)$ lower bound}
\label{sec:lb:k+z}


Now we provide an $\Omega(z)$ lower bound in Lemma \ref{lem:lb:k+z}. Note that the proof also applies to randomized streaming algorithms.


\begin{lemma}
\label{lem:lb:k+z}
Let $0 < \eps < 1$ and $k\geq 1$.
Any streaming (deterministic or randomized) algorithm that maintains a  weight-restricted $(\eps,k,z)$-coreset 
for the $k$-center problem with $z$ outliers in $\Reals^1$  
must use $\Omega(k+ z)$ space.
\end{lemma}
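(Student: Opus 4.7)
The plan is to exhibit a single adversarial insertion sequence in $\Reals^1$ of length $n := k+z+1$ for which $\optkz(P)>0$ but losing even a single point from the coreset collapses $\optkz(P^*)$ to~$0$, contradicting condition~(1) of Definition~\ref{def:coreset}. Embedding this construction in a family of inputs will handle the randomized case.

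First I would insert $n$ unit-weight points at the positions $1,2,\dots,n$ of $\Reals^1$, and establish $\optkz(P)=1/2$. The upper bound is witnessed by $k-1$ radius-$0$ balls at $k-1$ singleton positions together with one radius-$1/2$ ball covering two consecutive points; the remaining $z$ points are the outliers. The matching lower bound holds because any feasible solution leaves at most $z$ points outside the balls, so some ball must cover at least two points of $P$, whose minimum pairwise distance is~$1$.

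Next I would show that any weight-restricted $(\eps,k,z)$-coreset $P^*$ with $m:=|P^*|\le n-1$ satisfies $\optkz(P^*)=0$. Let $w_1\ge w_2\ge\cdots\ge w_m\ge 1$ be the (positive integer) coreset weights; by weight-restriction $\sum_i w_i\le w(P)=n$. Placing $k$ radius-$0$ balls at the top-$k$ weighted positions of $P^*$ leaves uncovered weight $W_{\mathrm{bot}}:=\sum_{i>k} w_i$, and a short case split yields $W_{\mathrm{bot}}\le z$: if $m\le k$ then $W_{\mathrm{bot}}=0$; otherwise either all weights equal $1$, in which case $W_{\mathrm{bot}}=m-k\le z$, or some weight is $\ge 2$, in which case $\sum_{i\le k} w_i\ge k+1$ and hence $W_{\mathrm{bot}}\le n-(k+1)=z$. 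Thus $\optkz(P^*)=0$, contradicting the required $\optkz(P^*)\ge (1-\eps)\cdot\optkz(P)=(1-\eps)/2>0$. Consequently $|P^*|\ge n$; combined with $w(P^*)\le n$ and weights~$\ge 1$, this forces $P^*=P$ (as a multiset), so the algorithm must store all $k+z+1$ distinct positions.

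For the randomized extension, I would insert the $n$ points at an arbitrary $n$-subset of a polynomially larger universe $\{1,\ldots,N\}\subset\Reals^1$. Since the argument above forces the coreset to exactly reproduce this subset, a standard one-way communication/indexing argument (or Yao's principle on the uniform distribution over $n$-subsets) upgrades the point-count bound into an $\Omega(k+z)$ space lower bound for any algorithm that succeeds with constant probability. The main technical obstacle is the uniform case analysis in the preceding paragraph, where the weight-restriction $w(P^*)\le w(P)$ is crucial: without it, the algorithm could assign arbitrarily large weights to a few remembered positions to artificially inflate $\optkz(P^*)$ and escape the contradiction.
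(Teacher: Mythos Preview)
Your proposal is correct and follows essentially the same approach as the paper: both use $k+z+1$ unit-spaced points on the line, show $\optkz(P)=1/2$, and argue that dropping any point from a weight-restricted coreset collapses $\optkz(P^*)$ to~$0$ by placing $k$ zero-radius balls at the $k$ heaviest coreset points and bounding the leftover weight by~$z$. The only cosmetic differences are that the paper inserts the $(k{+}z{+}1)$-st point in a second phase (invoking the streaming fact that a forgotten point stays forgotten) and bounds the leftover weight via the averaging inequality $w(P^*\setminus Q)\le \tfrac{z}{k+z}\,w(P^*)$ rather than your all-ones/some-$\ge 2$ case split; your direct argument on the final coreset is slightly cleaner, and your randomized extension via indexing is more explicit than the paper's one-line remark.
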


\begin{figure}[h]
\begin{center}
\includegraphics[scale=0.8]{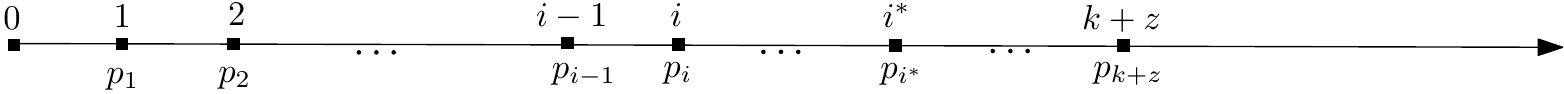}
\end{center}
\caption{Illustration of the lower-bound construction of Lemma~\ref{lem:lb:k+z} for $\mathbb{R}^1$.}
\label{fig:lb:one:dim}
\end{figure}
\begin{proof}
Let $t;=k+z$ of and let $P(t) = \{p_1,\ldots,p_{k+z}\}$ be the set of points that 
are inserted up to time $t$. 
Our lower-bound instance is a one-dimensional point set (i.e., points are on a line), 
where the value ($x$-coordinate) of the points in $P(t)$ is equal to their index. 
That is, for $i \in [k+z]$, we have $p_i = i$. See Figure~\ref{fig:lb:one:dim}. 

Consider a streaming algorithm that maintains an $(\eps,k,z)$-coreset and let 
$P^*(t)\subseteq P(t)$ be its coreset at time $t$. 
We claim that $P^*(t)$ must contain all points $p_1,\ldots,p_{k+z}$. 

To prove the claim, we assume for the sake of the contradiction that 
there is a point~$p_{i^*}$ that is not explicitly stored in~$P^*(t)$. 
Suppose at time $t+1 = k+z+1$, the next point $p_{k+z+1} = k+z+1$ arrives. 
Observe that $P(t+1)$ consists of $k+z+1$ points $P(t+1) = \{p_1,\ldots,p_{k+z}, p_{k+z+1}\}$ 
at unit distance from each other. 
Thus, one of the clusters in an optimal solution of $P(t+1)$ will contain two
points. Hence $\optkz(P(t+1)) = 1/2$.

Next, we prove that $\optkz(P^*(t+1)) = 0$. 
Suppose for the moment that this claim is correct. 
Then, $\optkz(P(t+1)) = 1/2$ and $\optkz(P^*(t+1)) = 0$, 
which contradicts that $P^*(t+1)$ is an $(\eps,k,z)$-coreset.
That is, all points $p_1,\ldots,p_{k+z}$ must be in $P^*(t)$. 
Therefore, any streaming algorithm that can $c$-approximate 
$\optkz(P(t+1))$ for $c>0$,  must maintain a coreset whose size is $\Omega(k+z)$ at time $t$.

It remains to prove that $\optkz(P^*(t+1)) = 0$. 
First of all, observe that since $p_{i^*}\notin P^*(t)$, the followup coresets do not know 
about the existence of $p_{i^*}$, therefore, $p_{i^*}$ will not be added to such coresets. 
Therefore, $|P^*(t+1)| \leq k+z$. We consider two cases. 

Case $1$ is if $|P^*(t+1)| \leq k$. In this case, we put a center on each of the points in $P^*(t+1)$ and so $\optkz(P^*(t+1)) = 0$.

Case $2$ occurs when $k < |P^*(t+1)| \leq k+z$. 
Let $Q\subseteq P^*(t+1)$ be the set of $k$ points of largest weight, with ties broken arbitrarily. 
That is, $Q = \arg\max_{Q' \subset P^*(t+1): |Q'| = k} w(Q')$, where $w(Q') = \sum_{q \in Q'}w(q)$.  

\begin{claiminproof}
\label{clm:set:q:zero}
The total weight of $P^*(t+1)\setminus Q$ is at most $z$. 
\end{claiminproof}

\begin{proofinproof}
Note that the weight of every point of a coreset is a positive integer. 
Since $Q$ contains the $k$ points of largest weight from $P^*(t+1)$ and $|P^*(t+1)|\leq k+z$, the total weight of $P^*(t+1)\setminus Q$ is at most a $z/(k+z)$ fraction of the total weight of  $P^*(t+1)$. Hence,
\[
w(P^*(t+1)\setminus Q) \leq \frac{z}{k+z} w(P^*(t+1)) \leq \frac{z}{k+z} w(P(t+1)) = \frac{z}{k+z} \cdot (k+z+1) < z + 1.
\]
Since all weights are integers, we can conclude that $w(P^*(t+1)\setminus Q) \leq z$.


\end{proofinproof}

Now, since $w(P^*(t+1)\setminus Q) \leq z$ and $|Q|=k$, putting a center on each point from $Q$ gives $\optkz(P^*(t+1)) = 0$. 
This finishes the proof of this lemma. 
\end{proof}
\subsection{A space-optimal streaming algorithm}
\label{sec:tight:streaming:algorithm}

Let $P \subseteq X$ be a point set of size $n$ in a metric space $(X,\dist)$ of doubling dimension~$d$. 
In the streaming model, the points of $P$ arrive sequentially. We denote the set of points
that have arrived up to and including time~$t$ by $P(t)$. In this section, we present a 
deterministic $1$-pass streaming algorithm to maintain an $(\eps, k, z)$-coreset for $P(t)$.
Interestingly, our algorithm use $O(\frac{k}{\eps^d}+z)$ space, which is optimal.

\medskip
In Algorithm \ref{alg:streaming}, we maintain a variable $r$ that is a lower bound 
for the radius of an optimal solution, and a weighted point set~$P^*$ that is 
an $(\eps, k, z)$-mini-ball covering of $P(t)$. 
When a new point $p_t$ arrives at time $t$, we assign it to a representative point in $P^*$ 
within distance $(\eps/2)\cdot r$, or add $p_t$ to $P^*$ if there is no such nearby 
representative. 
We have to be careful, however, that the size of $P^*$ does not increase too much.
To this end, we need to update~$r$ in an appropriate way, and then
update $P^*$ (so that it works with the new, larger value of~$r$)
whenever the size of $P^*$ reaches a threshold.
But this may lead to another problem: if a point $p$ is first assigned to 
some representative point~$q$, and later $q$ (and, hence, $p$)
is assigned to another point, then the distance between $p$ and its
representative may increase. (In other words, the ``errors'' that we incur
because we work with representatives may accumulate.)
We overcome this problem by doubling the value of $r$ whenever we update it.
Lemmas~\ref{lem:rho} and~\ref{lem:streaming:coreset} show that this
keeps the error as well as the size of the mini-ball covering under control.
Our algorithm is similar to the streaming algorithm by Ceccarello \etal~\cite{DBLP:journals/pvldb/CeccarelloPP19}, however, by using a more clever threshold for the size of $P^*$ we improve the space significantly.

\begin{algorithm}[h] 
\caption{\streaming} 
\label{alg:streaming}
\textbf{Initialization:}
\begin{algorithmic}[1]
\State $r \gets 0$ and $P^* \gets \emptyset$.
\end{algorithmic}
\vspace*{1mm}
\textbf{HandleArrival$(p_t)$}
\begin{algorithmic}[1]
    \If{there is $q \in P^*$ such that $\dist(p_t,q) \leq \frac{\eps}{2}\cdot r$}
        \State $\weight(q) \gets \weight(q)+1$. \label{line:streaming:rep:q} \Comment{$q$ is the representative of $p_t$ now}
    \Else
        \State Add $p_t$ to $P^*$. \label{line:streaming:rep:itself}
    \EndIf
    
    \If{$r = 0$ and $|P^*| \geq k+z+1$}
        \State Let $\Delta$ be the minimum distance between any two (distinct) points in $P^*$.
        \State $r\gets \Delta/2$.
    \EndIf

    \While{$|P^*| \geq k(\frac{16}{\eps})^d+z$} 
    \label{line:streaming:threshold}
        \State $r \gets 2\cdot r$. \label{line:streaming:doubling:r}
        \State $P^* \gets $ \UpdateCoreset$(P^*, \frac{\eps}{2}\cdot r)$.
    \EndWhile
\end{algorithmic}
\vspace*{1mm}
\textbf{Report coreset:}
\begin{algorithmic}[1]
\State \Return $P^*$.
\end{algorithmic}
\end{algorithm}

\begin{algorithm}[h] 
\caption{\UpdateCoreset$(Q, \delta)$} 
\label{alg:update:coreset}
\begin{algorithmic}[1]
\State Let $Q^* = \emptyset$.
\While{$|Q| > 0 $} 
    \State Take an arbitrary point $q \in  Q$ and let $R_q = B(q,\delta) \cap Q$. 
    \State Add $q$ to $Q^*$ with weight $\weight(q) := \weight(R_q)$. 
    \State $Q \gets Q \setminus R_q $. 
\EndWhile
\State \Return $Q^*$.
\end{algorithmic}
\end{algorithm}
\medskip
We need the following lemma to prove the correctness of our algorithm.
Its proof is in the appendix.
\begin{lemma}
\label{lem:rho}
After the point $p_t$ arriving at time~$t$ has been handled, we have:
for each point $p \in P(t)$
there is a representative point $q\in P^*$ such that $\dist(p, q) \leq \eps \cdot r$. 
\end{lemma}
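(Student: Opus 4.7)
\medskip
\noindent\textbf{Proof plan for Lemma~\ref{lem:rho}.}

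The plan is to prove the statement by induction on the arrival time $t$, carrying the invariant ``every $p \in P(t)$ has some representative $q \in P^*$ with $\dist(p,q) \le \eps \cdot r$'' through each line of \textsf{HandleArrival}$(p_t)$. The base case $t=0$ is vacuous since $P(0)=\emptyset$. For the inductive step, let $r_0$ and $P^*_0$ denote the values at the start of the call, and assume the invariant holds for the points in $P(t-1)$ with respect to $r_0,P^*_0$.

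First I would handle the initial if-else block. The representatives of points in $P(t-1)$ are unchanged here, so they continue to satisfy the invariant. If the algorithm takes the first branch, $p_t$ is represented by some $q \in P^*_0$ with $\dist(p_t,q) \le (\eps/2)\cdot r_0 \le \eps \cdot r_0$. Otherwise, $p_t$ is added to $P^*$ itself, so its representative distance is $0$. Next I would observe that when the ``if $r=0$\ldots'' block triggers, the condition $\dist(p_t,q) \le (\eps/2)\cdot 0$ in the preceding lines forces equality, hence every point of $P(t)$ either lies in $P^*$ or coincides with a point of $P^*$; the invariant then holds trivially for the new positive value of $r$ without changing $P^*$.

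The key step is the while loop, where I would use an inner induction on the number of iterations. Consider one iteration: let $r_{\mathrm{old}}$ denote the value of $r$ just before the doubling and let $P^*_{\mathrm{old}}$ denote the current $P^*$, for which the invariant holds by hypothesis. After the doubling $r_{\mathrm{new}} = 2 r_{\mathrm{old}}$, the call \textsf{UpdateCoreset}$(P^*_{\mathrm{old}},(\eps/2)\cdot r_{\mathrm{new}})$ produces $P^*_{\mathrm{new}}$ such that every $q' \in P^*_{\mathrm{old}}$ has a representative $q_{\mathrm{new}} \in P^*_{\mathrm{new}}$ with $\dist(q',q_{\mathrm{new}}) \le (\eps/2)\cdot r_{\mathrm{new}} = \eps \cdot r_{\mathrm{old}}$. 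For any $p \in P(t)$, let $q_{\mathrm{old}} \in P^*_{\mathrm{old}}$ be the representative guaranteed by the inner inductive hypothesis, so $\dist(p,q_{\mathrm{old}}) \le \eps \cdot r_{\mathrm{old}}$; then by the triangle inequality
\[
\dist(p,q_{\mathrm{new}}) \;\le\; \dist(p,q_{\mathrm{old}}) + \dist(q_{\mathrm{old}},q_{\mathrm{new}}) \;\le\; 2\eps\cdot r_{\mathrm{old}} \;=\; \eps\cdot r_{\mathrm{new}},
\]
so the invariant is restored with the new values. Applying this through all iterations of the while loop finishes the inductive step.

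The only real subtlety is to notice why the doubling is exactly right: the accumulated error $\eps \cdot r_{\mathrm{old}}$ from the existing chain of representatives, plus the fresh error $(\eps/2)\cdot r_{\mathrm{new}} = \eps \cdot r_{\mathrm{old}}$ contributed by \textsf{UpdateCoreset}, together sum to $2\eps \cdot r_{\mathrm{old}} = \eps \cdot r_{\mathrm{new}}$. Any slower growth of $r$ would fail to absorb this doubling of the error, so this is the tight choice and the main conceptual point of the argument.
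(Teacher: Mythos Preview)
Your proposal is correct and follows essentially the same approach as the paper: induction on~$t$, verifying the invariant after the initial if--else, and then a triangle-inequality argument showing that each while-loop iteration preserves the invariant because $\eps\cdot r_{\mathrm{old}} + (\eps/2)\cdot r_{\mathrm{new}} = \eps\cdot r_{\mathrm{new}}$. You are slightly more explicit than the paper about the ``$r=0$'' initialization block, but the structure and key computation are identical.
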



Now we can prove that after handling~$p_t$ at time~$t$, the set $P^*$ is an $(\eps, k, z)$-coreset of 
the points that have arrived until time $t$.
\begin{lemma}
\label{lem:streaming:coreset}
The set $P^*$ maintained by Algorithm \ref{alg:streaming} is an $(\eps, k, z)$-coreset of $P(t)$ and its size is at most
$k(\frac{16}{\eps})^{d} + z$.
\end{lemma}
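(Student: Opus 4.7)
The plan is to show that $P^*$ is an $(\eps, k, z)$-mini-ball covering of $P(t)$ after each call to HandleArrival and then invoke Lemma~\ref{lem:MBC:is:coreset} to conclude it is an $(\eps, k, z)$-coreset. The mini-ball covering definition requires (i) each $p \in P(t)$ has a representative $q \in P^*$ with $\dist(p, q) \le \eps \cdot \optkz(P(t))$, and (ii) the weight at each representative equals the total weight of the points it represents. Property (ii) is immediate by induction from line~\ref{line:streaming:rep:q} and from the weight transfer performed by \UpdateCoreset. For property (i), Lemma~\ref{lem:rho} already yields $\dist(p, q) \le \eps \cdot r$, so it suffices to prove the following key invariant maintained at the end of every HandleArrival:
\[
r \le \optkz(P(t)) .
\]

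To establish this, I would first record a separation invariant: any two distinct points in $P^*$ lie at distance greater than $(\eps/2) \cdot r$. This holds inductively, since a newly arriving point is added to $P^*$ only when no representative lies within $(\eps/2) \cdot r$, and after doubling $r$ the call to \UpdateCoreset$(P^*, (\eps/2)\cdot r)$ re-enforces the separation with respect to the new value of $r$. With separation in hand, the invariant $r \le \optkz(P(t))$ is verified at each place $r$ changes. When $r$ is first set to $\Delta/2$, the set $P^*$ consists of at least $k+z+1$ distinct points of $P(t)$ whose pairwise distances are at least $\Delta$; by pigeonhole, any solution with $k$ balls and $z$ outliers must contain two of these points in a common ball, so $\optkz(P(t)) \ge \Delta/2 = r$.

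The main obstacle is the doubling step in the while loop on line~\ref{line:streaming:threshold}. Suppose at the start of an iteration the inductive hypothesis $r_{\mathrm{old}} \le \optkz(P(t))$ holds and $|P^*| \ge k(16/\eps)^d + z$. By the separation invariant, the points of $P^* \subseteq P(t)$ are pairwise at distance greater than $\delta := (\eps/2)\cdot r_{\mathrm{old}}$, and since $\eps \le 1$ the hypothesis gives $\delta \le \optkz(P(t))$, so Lemma~\ref{lem:coreset:size} applies and yields
\[
k\Bigl(\frac{16}{\eps}\Bigr)^d + z \;\le\; |P^*| \;\le\; k\Bigl(\frac{4\,\optkz(P(t))}{\delta}\Bigr)^d + z \;=\; k\Bigl(\frac{8\,\optkz(P(t))}{\eps\, r_{\mathrm{old}}}\Bigr)^d + z.
\]
Rearranging gives $\optkz(P(t)) \ge 2 r_{\mathrm{old}} = r_{\mathrm{new}}$, preserving the invariant after the doubling. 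Combining this with Lemma~\ref{lem:rho} gives $\dist(p, q) \le \eps \cdot r \le \eps \cdot \optkz(P(t))$ for every $p \in P(t)$, so $P^*$ is an $(\eps, k, z)$-mini-ball covering and hence an $(\eps, k, z)$-coreset of $P(t)$. Finally, the size bound $|P^*| \le k(16/\eps)^d + z$ is immediate: each HandleArrival adds at most one point before re-entering the while loop, which exits precisely when $|P^*| < k(16/\eps)^d + z$. Termination of the while loop itself follows from the same application of Lemma~\ref{lem:coreset:size}, since the size bound it provides shrinks as $r$ doubles.
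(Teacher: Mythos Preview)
Your proposal is correct and follows essentially the same approach as the paper: establish the invariant $r \le \optkz(P(t))$ at each update (via pigeonhole for the first assignment and via Lemma~\ref{lem:coreset:size} for each doubling), combine with Lemma~\ref{lem:rho} to get the mini-ball covering property, and invoke Lemma~\ref{lem:MBC:is:coreset}. Your write-up is in fact slightly more careful than the paper's, since you explicitly state the pairwise-separation invariant on $P^*$, verify the hypothesis $\delta \le \optkz(P(t))$ needed to apply Lemma~\ref{lem:coreset:size}, and address termination of the while loop.
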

\begin{proof}
Recall that the algorithm maintains a value $r$ that serves as an estimate of the radius
of an optimal solution for the current point set~$P(t)$. To prove $P^*$ is an $(\eps, k, z)$-coreset 
of $P(t)$, we first show that $r \leq \optkz(P(t))$.

We trivially have $r \leq \optkz(P(t))$ after the initialization, since then~$r=0$.
The value of $r$ remains zero until $|P^*| \geq k+z+1$. At this time, we increase $r$ to $\Delta/2$,
where $\Delta$ is the minimum distance between any two points in $P^*$.
Since no two points in $P^*$ will coincide by construction, we have $\Delta > 0$.
Consider an optimal solution for~$P(t)$.
As $|P^*| \geq k+z+1$ and $P^* \subseteq P(t)$, and we allow at most~$z$ outliers,
there are at least two points in $P^*$ that are covered by the same ball in the optimal
solution. This ball has radius~$\optkz(P(t))$. Thus, $\Delta/2 \leq \optkz(P(t))$ and 
we have $r \leq \optkz(P(t))$.

Now suppose we update the value of $r$ to $2\cdot r$. This happens when
$|P^*| \geq k(\frac{16}{\eps})^d+z$. The distance between any 
two points in $P^*$ is more than $\delta = \frac{\eps}{2}\cdot r$, 
because we only add point to $P^*$ when its distance to all
existing points in $P^*$ is more than $\frac{\eps}{2}\cdot r$.
Note that Lemma \ref{lem:coreset:size} implies that 
$|P^*| \leq k \cdot ( 4\cdot\optkz(P(t))/\delta)^d+z$. 
Putting everything together we have
$$
k \left( \frac{16}{\eps} \right)^d+z
\leq |P^*| 
\leq k \left( 4\cdot\frac{\optkz(P(t))}{\delta} \right)^d+z
= k \left( 4\cdot\frac{\optkz(P(t))}{(\eps/2)\cdot r} \right)^d+z
\enspace ,
$$
which implies $\frac{16}{\eps} \leq 4\cdot \frac{\optkz(P(t))}{(\eps/2)\cdot r}$.
Hence, $2\cdot r \leq \optkz(P(t))$ holds before we update the value of $r$ to $2\cdot r$.
We conclude that $r \leq \optkz(P(t))$ always holds, as claimed.
\medskip

Lemma \ref{lem:rho} states that for any point $p\in P(t)$,
there is a representative point $q\in P^*$ such that 
$\dist(p,q) \leq \eps \cdot r$. 
Therefore,
$
\dist(p,q) 
\leq \eps \cdot r
\leq \eps \cdot\optkz(P(t)) \enspace .
$
Thus, for each point $p\in P(t)$, there is a representative point $q\in P^*$
such that $\dist(p,q)\leq \eps \cdot\optkz(P(t))$. This means that $P^*$ is an 
$(\eps, k, z)$-mini-ball covering of $P(t)$, which is an $(\eps, k, z)$-coreset of $P(t)$
by Lemma \ref{lem:MBC:is:coreset}. It remains to observe
that the size of $P^*$ is at most $k(\frac{16}{\eps})^d+z$ by the while-loop in lines \ref{line:streaming:threshold} of the algorithm.
\end{proof}
Since we consider the doubling dimension $d$ to be a constant, Algorithm \ref{alg:streaming} requires 
$O\left(\frac{k}{\eps^d} + z\right)$ memory to  maintain an $(\eps, k, z)$-coreset.
We summarize our result in the following theorem.

\begin{theorem}[Streaming Algorithm]
\label{thm:streaming}
Let $P$ be a stream of points from a metric space $(X,\dist)$ of
doubling dimension~$d$. Let $k,z \in \mathbb{N}$ be two natural numbers, and
let $0 < \eps \le 1$ be an error parameter. 
Then, there exists a deterministic $1$-pass streaming algorithm
that maintains an $(\eps, k, z)$-coreset  of $P$  for the $k$-center problem with $z$ outliers using $O\left(k/\eps^d+ z\right)$ storage.
\end{theorem}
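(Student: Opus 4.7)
The plan is to exhibit Algorithm \streaming (Algorithm \ref{alg:streaming}) as the sought-after streaming algorithm; both conclusions of the theorem then follow by combining the lemmas already proved in this section, so the remaining work is essentially a wrap-up.

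First I would invoke Lemma \ref{lem:streaming:coreset}: after processing each arriving point, the set $P^*$ maintained by \streaming is an $(\eps,k,z)$-coreset of $P(t)$ and satisfies $|P^*| \leq k(16/\eps)^d + z$. The first assertion is exactly the correctness requirement of the theorem. Treating the doubling dimension $d$ as a constant, the second assertion is $O(k/\eps^d + z)$, which gives the claimed space bound for the storage of $P^*$ itself.

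To finish, I would verify that no other component of Algorithm \streaming inflates the total storage. Besides $P^*$ the algorithm keeps only the single scalar $r$ and the newly arrived point $p_t$, and the subroutine \UpdateCoreset operates on $P^*$ using only an auxiliary pointer, so the working storage is dominated by $|P^*|$. Each element of $P^*$ carries a positive integer weight bounded by the number of points inserted so far and hence representable in $O(\log n)$ bits, which is standard in the word-RAM model and does not affect the asymptotic count.

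The main substantive obstacle is already absorbed into Lemmas \ref{lem:rho} and \ref{lem:streaming:coreset}: the delicate point is that every time the threshold in line \ref{line:streaming:threshold} is reached, doubling $r$ and re-clustering via \UpdateCoreset with radius $(\eps/2)\cdot r$ must simultaneously restore $|P^*| < k(16/\eps)^d + z$, maintain the invariant $r \leq \optkz(P(t))$ via Lemma \ref{lem:coreset:size}, and keep the cumulative representative error for every $p \in P(t)$ bounded by $\eps \cdot r$ despite the repeated reassignments. Once those invariants are in hand, the theorem follows immediately by an application of Lemma \ref{lem:MBC:is:coreset} to identify the maintained mini-ball covering with an $(\eps,k,z)$-coreset, so I do not anticipate any further technical difficulty beyond citing the existing lemmas.
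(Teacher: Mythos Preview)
Your proposal is correct and mirrors the paper's own treatment: the paper proves Theorem~\ref{thm:streaming} as an immediate wrap-up of Lemma~\ref{lem:streaming:coreset}, noting only that with $d$ constant the bound $k(16/\eps)^d+z$ is $O(k/\eps^d+z)$. Your additional remarks about auxiliary storage and weight representation are sound but go slightly beyond what the paper bothers to say.
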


\section{A fully dynamic streaming algorithm}
In this section, we develop a fully dynamic streaming algorithm that maintains 
an $(\eps,k,z)$-coreset for the $k$-center problem with $z$ outliers. Our algorithm
works when the stream consists of inserts and deletes of points
from a discrete Euclidean space~$[\Delta]^d$.

\subsection{The algorithm}

Our algorithm uses known sparse-recovery 
techniques~\cite{DBLP:journals/tcs/BarkayPS15,DBLP:conf/soda/MonemizadehW10}, which we explain first.

\mypara{Estimating 0-norms and sparse recovery.}
Consider a stream of pairs $(a_j,\xi_j)$, where $a_j\in [U]$ (for some
universe size~$U$) and $\xi_j\in \Integers$. 
If $\xi_j>0$ then the arrival of a pair $(a_j,\xi_j)$ can be interpreted as increasing the
frequency of the element $a_j$ by $\xi_j$, and if
$\xi_j<0$ then it can be interpreted as decreasing the frequency of~$a_j$ by $|\xi_j|$.
Thus the arrival of $(a_j,\xi_j)$ amounts to updating the frequency vector~$F[0..U-1]$
of the elements in the universe, by setting $F[a_j] \gets F[a_j]+\xi_j$.
We are interested in the case where $F[j]\geq 0$ at all times---this is called the strict turnstile model---and
where either $\xi_j=+1$ (corresponding to an insertion) or 
$\xi_j=-1$ (corresponding to a deletion). For convenience, we will limit
our discussion of the tools that we use to this setting.

Let $\zeronorm{F} := \sum_{j\in[U]} |F[j]|^0$ denote ``0-norm'' of $F$,
that is, $\zeronorm{F}$ is the number of elements with non-zero frequency.
We need the following result on estimating~$\zeronorm{F}$ in data streams.
\begin{lemma}[$\zeronorm{F}$-estimator \cite{DBLP:conf/pods/KaneNW10}]
\label{lem:f0}
For any given error parameter $0< \eps<1$ and failure parameter $0<\delta <1$,
we can maintain a data structure that uses $O((1/\eps^{2}+\log U)\log(1/\delta))$ space
and that, with probability at least $1-\delta$, reports a $(1\pm\eps)$-approximation 
of~$\zeronorm{F}$.
\end{lemma}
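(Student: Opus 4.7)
The plan is to adapt the classical geometric-subsampling strategy (in the lineage of Flajolet--Martin, Bar-Yossef~\etal, and KNW) for the strict turnstile model. Since this is a cited lemma, my goal is to reconstruct the main ideas that yield these space bounds, rather than reproduce the fully optimal technical analysis.

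First, I would fix a suitably-independent hash function $h\colon [U] \to [U]$, and, for each level $\ell \in \{0,1,\ldots,\log U\}$, implicitly define the sub-sampled frequency vector $F_\ell$ supported on those indices $j$ with $h(j) < U/2^\ell$. Each nonzero coordinate of $F$ then survives at level $\ell$ with probability approximately $2^{-\ell}$, so if the true value is $\zeronorm{F} = T$, the expected support size of $F_\ell$ is $T/2^\ell$. The ``target'' level is $\ell^{*} \approx \log(T \eps^{2})$: at this level the sub-sampled support has size $\Theta(1/\eps^{2})$, which concentrates to within a $(1\pm\eps)$ factor via Chebyshev under pairwise-independent hashing.

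Second, I would attach to each level a linear $s$-sparse recovery sketch with $s = \Theta(1/\eps^{2})$ --- instantiated, for example, by a Vandermonde-type matrix as cited elsewhere in the paper --- which under the turnstile updates either returns the exact support of $F_\ell$ or reports ``too dense.'' The estimator picks the smallest level $\ell^{*}$ whose recovered support fits within the threshold $s$, and outputs $\widehat{T} := 2^{\ell^{*}} \cdot |\mathrm{supp}(F_{\ell^{*}})|$. Standard concentration gives $\widehat{T} \in (1\pm\eps)T$ with at least constant probability, after which I would amplify to failure probability $\delta$ by running $O(\log(1/\delta))$ independent copies in parallel and returning the median.

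The main obstacle is the \emph{additive} $(1/\eps^{2} + \log U)$ space per copy rather than the naive multiplicative $O((\log U)/\eps^{2})$ that falls out of the construction above. Closing this gap requires two refinements: (i) at inactive levels one need not maintain a full $\Theta(1/\eps^{2})$-sparse recovery structure but only a cheap ``detector'' sketch telling us whether the active level should move up or down, and (ii) one can share a single heavy recovery structure of size $O(1/\eps^{2}+\log U)$ across all levels, re-anchored as the scale estimate evolves. Making this amortization rigorous under turnstile updates, while simultaneously reducing the hash independence enough to keep per-level overhead at $O(\log U)$ bits, is the technical heart of KNW, and is the step on which I would lean most heavily on their analysis rather than redo from scratch.
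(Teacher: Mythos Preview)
The paper does not prove this lemma at all: it is stated as a black-box citation to Kane, Nelson, and Woodruff~\cite{DBLP:conf/pods/KaneNW10}, with no accompanying argument. So there is no ``paper's own proof'' to compare your proposal against.

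That said, your reconstruction is a faithful sketch of the ideas behind the cited result, and you correctly identify both the high-level subsampling strategy and the key technical obstacle (getting the additive $1/\eps^2 + \log U$ bound rather than the multiplicative one). For the purposes of this paper, however, no such reconstruction is needed or expected; the lemma is invoked purely as an off-the-shelf tool.
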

Define $J^* := \{ (j,F[j]): j\in [U] \mbox{ and } F[j]\neq 0 \}$ to be
the set of elements with non-zero frequency.
The next lemma allows us to sample a subset of the elements from~$J^*$.
Recall that a sample $S\subseteq J^*$ is called \emph{$t$-wise independent} if 
any subset of $t$ distinct elements from $J^*$ has the same probability to be in~$S$.
\begin{lemma}[$s$-sample recovery \cite{DBLP:journals/tcs/BarkayPS15}]
\label{lem:s-recovery}
Let $s$ be a given parameter indicating the desired sample size, and 
let $0< \delta <1$ be a given error parameter, where $s=\Omega(1/\delta)$. 
Then we can generate a $\Theta(\log (1/\delta))$-wise independent sample~$S\subseteq J^*$,
where $\min(s,|J^*|) \leq |S|\leq s'$ for some $s'=\Theta(s)$, with a randomized streaming algorithm 
that uses $O(s\log(s/\delta)\log^2 U)$ space. The success probability of the algorithm
is at least $1-\delta$ and the algorithm fails with probability at most $\delta$.
\end{lemma}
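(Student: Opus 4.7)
The plan is to combine universe subsampling at geometrically decreasing rates with a sparse-recovery sketch at each rate, which is the standard template for this kind of result in the turnstile model. Fix $L = \lceil \log_2 U \rceil$ and $t = \Theta(\log(1/\delta))$. For each level $\ell \in \{0,1,\ldots,L\}$ I would draw an independent $t$-wise independent hash function $h_\ell : [U] \to \{0,1\}$ with $\Pr{h_\ell(j)=1} = 2^{-\ell}$, and maintain a linear sparse-recovery sketch $\Lambda_\ell$ that exactly decodes any vector of support at most $cs$ for a suitable absolute constant $c$ (e.g., a CountSketch-style sketch, or the Vandermonde scheme alluded to earlier in the paper). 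On each stream update $(a_j, \xi_j)$, I would feed the update into $\Lambda_\ell$ only when $h_\ell(a_j)=1$; by linearity the sketch $\Lambda_\ell$ then represents the subsampled frequency vector $F^{(\ell)}$ with $F^{(\ell)}[j] := h_\ell(j)\cdot F[j]$.

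At query time I would use the $\zeronorm{F}$-estimator of Lemma~\ref{lem:f0} (with constant accuracy and failure probability $\delta/3$) to obtain an estimate $\widetilde{f}_0$ of $|J^*|$, and set $\ell^* := \max\bigl(0,\lfloor \log_2(\widetilde{f}_0/s) \rfloor\bigr)$, so that the expected number of non-zero entries of $F^{(\ell^*)}$ lies in a constant-factor window around $s$. I would then invoke $\Lambda_{\ell^*}$ to recover the set $S := \{j : F^{(\ell^*)}[j] \neq 0\}$ together with the corresponding frequencies, and output $S$ as the sample. If $|J^*| < s$, then $\ell^* = 0$ and $S = J^*$ is reported in full.

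The correctness argument has two ingredients. First, the sample $S$ is $t$-wise independent as a subset of $J^*$ by construction, since any $t$ elements of $J^*$ are included in $S$ jointly with probability $2^{-\ell^* t}$ because $h_{\ell^*}$ is $t$-wise independent. Second, I would show $|S| \in [\min(s,|J^*|), cs]$ with probability at least $1-\delta/3$ via a $t$-th moment tail bound on $\sum_{j \in J^*} h_{\ell^*}(j)$: here Chernoff is unavailable, but the Bonami--Schechtman/Chebyshev-style moment inequality for sums of $t$-wise independent $\{0,1\}$ variables gives deviation $O(e^{-t})$ once the expected value is $\Omega(1/\delta)$, which is guaranteed by the hypothesis $s = \Omega(1/\delta)$. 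On this event, the sparse-recovery sketch $\Lambda_{\ell^*}$ exactly decodes $S$, and a union bound over the $\zeronorm{F}$-estimator, the concentration at level $\ell^*$, and the sparse-recovery sketch yields total failure probability at most $\delta$. Each level costs $O(s\log(s/\delta)\log U)$ space for the sparse-recovery sketch plus $O(t\log U)$ for the hash seed; summing across $L+1 = O(\log U)$ levels produces the claimed $O(s\log(s/\delta)\log^2 U)$.

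The main obstacle I expect is the concentration step: controlling $|S|$ to within a constant factor of its expectation using only $\Theta(\log(1/\delta))$-wise independence, rather than full independence, requires a careful $t$-th-moment computation in which the choice $t = \Theta(\log(1/\delta))$ balances against the expected sample size $\Theta(s)$ and exploits the assumption $s = \Omega(1/\delta)$. A secondary subtlety is verifying that the chosen level $\ell^*$ based on the noisy $\zeronorm{F}$-estimate still lands in a window where both $|S| \geq \min(s,|J^*|)$ and $|S| \leq cs$ hold simultaneously, which amounts to adjusting the cutoff constants in the definition of $\ell^*$ to absorb the multiplicative error of the estimator.
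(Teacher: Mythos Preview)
The paper does not prove this lemma; it is quoted verbatim as a black-box result from Barkay, Porat, and Shalem~\cite{DBLP:journals/tcs/BarkayPS15} and is used only through its interface (the \textsc{Query} and \textsc{Update} operations inside Algorithm~\ref{alg:dyn:stream}). So there is no ``paper's own proof'' to compare against.

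That said, your sketch follows the standard architecture behind such results (and is essentially the construction in the cited reference): geometric subsampling of the universe via limited-independence hash functions, an exact sparse-recovery sketch at each level, and an $\zeronorm{F}$-estimator to select the level at which the support is $\Theta(s)$. The space accounting is correct, and the use of the hypothesis $s=\Omega(1/\delta)$ to drive the $t$-th-moment tail bound is the right way to replace Chernoff under $t$-wise independence. Two points worth tightening if you were to write this out in full: (i) the concentration inequality you need is precisely the Schmidt--Siegel--Srinivasan / Bellare--Rompel bound for $t$-wise independent sums, and you should check explicitly that $\mu=\Theta(s)=\Omega(1/\delta)$ together with $t=\Theta(\log(1/\delta))$ yields deviation probability at most $\delta$; and (ii) the level $\ell^*$ is chosen from a noisy estimate, so strictly speaking you must argue either that the concentration holds simultaneously at the two adjacent candidate levels (another union bound) or that the constant-factor slack in the $\zeronorm{F}$-estimator is absorbed by the constant $c$ in ``support at most $cs$.'' Neither is difficult, but both are needed for a complete proof.
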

Note that the sample $S$ not only provides us with a sample from the set
of elements with non-zero frequency, but for each element in the sample
we also get its exact frequency.

\mypara{Our algorithm.} 
Let $G_0,G_1, \cdots,G_{\lceil \log\Delta \rceil}$ be a collection of $\lceil \log\Delta\rceil$ grids 
imposed on the space $[\Delta]^d$, where cells in the grid $G_i$ have side length~$2^i$ (i.e., 
they are hypercube of size $2^i\times \cdots \times 2^i$). 
Note that $G_i$ has $\lceil \Delta^d/2^{2i} \rceil$ cells. 
In particular, the finest grid $G_0$ has $\Delta^d$ cells of side length one. 
Since our points come from the discrete space~$[\Delta]^d$,
which is common in the dynamic geometric streaming model~\cite{DBLP:conf/stoc/Indyk04,DBLP:conf/stoc/FrahlingS05}, 
each cell $c \in G_0$ contains at most one point.
Thus, the maximum number of distinct points that can be placed in $[\Delta]^d$ is $\Delta^d$.

Let $S$ be a stream of inserts and deletes of points to an underlying point set $P \subseteq [\Delta]^d$. 
Let  $i \in [\lceil\log\Delta \rceil ]$. 
For the grid $G_i$, we maintain  two sketches in parallel:
\begin{itemize}
    \item A $\ell_0$-frequency moment sketch $\F(G_i)$  (based on Lemma~\ref{lem:f0}) that approximates the number of non-empty cells 
    of the grid $G_i$.
    \item A $s$-sparse recovery sketch $\mathcal{S}(G_i)$ (based on Lemma~\ref{lem:s-recovery}) 
    that supports query and update operations.  In particular, a query of $\mathcal{S}(G_i)$, 
    returns a sample of $s$ non-empty cells of $G_i$ (if there are that many non-empty cells)
    with their exact number of points. 
    Upon the insertion or deletion of a point $q$, for every grid $G_i$ 
    we update the sketches $\mathcal{S}(G_i)$ by updating the cell of $G_i$ 
    that contains the point $q$. For our dynamic streaming algorithm, we let $s = k(4\sqrt{d}/\eps)^d + z)$. 
\end{itemize}

Let $P(t) \subseteq P$ be the set of points that are present at time~$t$,
that is, that have been inserted more often than they have been deleted.
Using the sketches $\mathcal{S}(G_i)$ and $\F(G_i)$, we can obtain an $(\eps,k,z)$-coreset 
of $P(t)$. 
To this end, we first query the sketches $\F(G_i)$ 
for all $i \in [\lceil \log\Delta \rceil ]$ to compute the approximate number of non-empty cells in each grid.
We then find the grid $G_j$ of the smallest cell side length 
that has at most $s$ non-empty cells and query the sketch $\mathcal{S}(G_j)$ to 
extract the set $Q_j$ of non-empty cells of $G_j$. 
For every cell $c \in Q_j$, 
we choose the center of $c$ as the representative of $c$ and assign the number of points 
in $c$ as the weight of this representative. We claim that the set of weighted 
representatives of non-empty cells in $Q_j$ is an $(\eps,k,z)$-coreset, except that
the points in the coreset are not a subset of the original point set~$P$
(which is required by  Definition~\ref{def:coreset}) but centers of certain grid cells. 
We therefore call the reported coreset a \emph{relaxed coreset}.
This results in the following theorem, which is proven in more detail in the remainder of this section.

\begin{theorem}
\label{thm:dynamic:streams}
Let $S$ be a dynamic stream of polynomially bounded by $\Delta^{O(d)}$ of updates (inserts and deletes) to 
a point set $P \subseteq [\Delta]^d$. 
Let $k,z \in \Nats$ be two parameters. 
Let $0 < \eps,\delta \le 1$ be the error and failure parameters.
Then,  there exists a dynamic streaming algorithm that with probability at least $1-\delta$,
maintains a relaxed $(\eps,k,z)$-coreset at any time $t$ of the stream
for the $k$-center cost with $z$ outliers of the subset $P(t) \subseteq P$ of points that are inserted up to time $t$ of the stream $S$ 
but not deleted.  
The space complexity of this algorithm is $O((k/\eps^d+z)\log^4 (k\Delta/\eps \delta))$. 
\end{theorem}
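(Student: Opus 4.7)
The plan is to show that the weighted cell-center output described just above the theorem statement is a relaxed $(\eps,k,z)$-coreset with probability at least $1-\delta$, and to bound the total space by summing the sizes of the two sketches over the $O(\log\Delta)$ grid levels. The key deterministic ingredient is a packing lemma: if $N_i$ denotes the number of non-empty cells of $G_i$ at time~$t$, then $N_i \leq k(4\,\optkz(P(t))/2^i)^d + z$. This is immediate from an optimal solution, since each of its $k$ balls of radius $\optkz(P(t))$ sits in an axis-aligned hypercube of side $2\,\optkz(P(t))$ and hence intersects at most $(4\,\optkz(P(t))/2^i)^d$ cells of $G_i$, while the $z$ outliers occupy at most $z$ additional non-empty cells.

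With this bound, if a level~$j$ satisfies $N_{j-1} > s := k(4\sqrt{d}/\eps)^d + z$, then $k(4\sqrt{d}/\eps)^d < k(4\,\optkz(P(t))/2^{j-1})^d$, which rearranges to $2^{j-1}\sqrt{d} < \eps\,\optkz(P(t))$, i.e.\ $2^j\sqrt{d} < 2\eps\,\optkz(P(t))$. Consequently every point of $P(t)$ lies within distance $2^j\sqrt{d}/2 < \eps\,\optkz(P(t))$ of the center of its $G_j$-cell. Taking those cell centers as representatives, weighted by the exact number of points in their cells (returned by $\mathcal{S}(G_j)$), yields a mini-ball covering of $P(t)$ in the sense of Definition~\ref{def:miniball:covering}, with the only relaxation being that the representatives are not themselves points of $P$. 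By the same argument as in Lemma~\ref{lem:MBC:is:coreset}, this is a relaxed $(\eps,k,z)$-coreset (after a harmless constant rescaling of $\eps$ to absorb the factor of~$2$).

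To make the level-selection procedure work, I set the $\zeronorm{F}$-estimator error parameter to, say, $\eps' = 1/2$, and I select the finest level~$j$ for which $\mathcal{F}(G_i)$ reports a value at most $s/2$; because $\mathcal{F}$ is a $(1\pm 1/2)$-approximation, this guarantees $N_j \leq s$, so by Lemma~\ref{lem:s-recovery} (instantiated with target sample size $\Theta(s)$) the sketch $\mathcal{S}(G_j)$ returns every non-empty cell of $G_j$ together with its exact count. Setting the per-sketch failure probability to $\delta' = \delta/\Theta(\log\Delta)$ and union-bounding over the $O(\log\Delta)$ levels makes all sketches simultaneously correct with probability at least $1-\delta$. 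The main obstacle is exactly this coupling between the approximate $F_0$-estimates and the doubling-grid argument: one has to verify that the constant-factor slack in the $F_0$-estimator and the $\Theta(s)$ overshoot in the sparse-recovery sketch only perturb~$j$ by an additive constant, which can be reabsorbed into the constant hidden in~$s$.

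The space bound then follows by plugging $s = O(k/\eps^d + z)$, universe size $U = \Delta^d$, and $\delta' = \delta/\Theta(\log\Delta)$ into Lemmas~\ref{lem:f0} and~\ref{lem:s-recovery}: each $\mathcal{S}(G_i)$ uses $O(s\log(s/\delta')\log^2 U) = O((k/\eps^d+z)\log^3(k\Delta/(\eps\delta)))$ space, each $\mathcal{F}(G_i)$ is strictly lower-order, and summing over the $O(\log\Delta)$ levels yields the claimed $O((k/\eps^d+z)\log^4(k\Delta/(\eps\delta)))$ total space.
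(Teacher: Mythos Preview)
Your approach is essentially the paper's: a packing bound on the number of non-empty cells at the ``right'' grid level, the $\zeronorm{F}$-estimator to locate that level, $s$-sparse recovery to extract the weighted cell centers, and a union bound to control failure probability. The coreset argument via cell centers and the space calculation are both correct.

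There is, however, one genuine omission. You union-bound only over the $O(\log\Delta)$ grid levels, setting $\delta' = \delta/\Theta(\log\Delta)$. But the theorem asks that the coreset be correct \emph{at any time~$t$} of the stream, and the hypothesis that the stream length is polynomially bounded by $\Delta^{O(d)}$ is in the statement precisely so that one can also union-bound over all time steps. The sketches of Lemmas~\ref{lem:f0} and~\ref{lem:s-recovery} give a one-shot success guarantee per query; without the additional factor of $\Delta^{O(d)}$ in the union bound your $1-\delta$ guarantee holds only for a single fixed query time. The paper handles this explicitly (its Lemmas~\ref{lem:union:sprase:recovery} and~\ref{lem:union:f:zero}) by taking $\delta' = \delta / (\log\Delta \cdot \Delta^{O(d)})$. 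The fix is painless and does not change the asymptotic space bound, since $\log(1/\delta')$ only picks up an additive $O(d\log\Delta) = O(\log\Delta)$, which is already absorbed in the $\log^4(k\Delta/(\eps\delta))$ factor; but as written, your probability claim is not what the theorem asserts.

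A smaller point: your derivation of the coreset property at the selected level hinges on ``$N_{j-1} > s$'', but with a $(1\pm 1/2)$ estimator and threshold $s/2$ you only get $N_{j-1} > s/3$, and the additive $+z$ in the packing bound then prevents a clean comparison. You flag this as ``the main obstacle'' and say the constants can be reabsorbed; that is true (e.g.\ run sparse recovery with target $\Theta(s)$ and threshold so that the reported count at the paper's canonical level~$i$ is guaranteed to fall below it), but it deserves one explicit sentence rather than a promise. The paper, incidentally, is equally informal on this point.
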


Next, we describe our algorithm in more detail.
Recall that for every grid $G_i$, we maintain a $s$-sparse recovery sketch $\mathcal{S}(G_i)$ 
where $s = k(4\sqrt{d}/\eps)^d + z)$.
The sketch $\mathcal{S}(G_i)$ supports the following operations: 

\begin{itemize}
    \item \textsc{Query}$(\mathcal{S}(G_i))$: This operation returns up to $s$ (almost uniformly chosen) 
         non-empty cells of the grid $G_i$ with their exact number of points. 
    \item \textsc{Update}$(\mathcal{S}(G_i), (c,\xi))$ where $\xi\in\{+1,-1\}$:
This operation updates the sketch of the grid $G_i$.
In particular, the operation \textsc{Update}$(\mathcal{S}(G_i), (c, +1))$
means that we add a point to a cell $c \in G_i$.
The operation \textsc{Update}$(\mathcal{S}(G_i), (c, -1))$
means that we delete a point from a cell $c \in G_i$.
\end{itemize}

The pseudocode of our dynamic streaming algorithm is given below. 
We break the analysis of this algorithm and the proof of Theorem~\ref{thm:dynamic:streams} into a few steps. 
We first analyze the performance of the $s$-sparse-recovery sketch
from \cite{DBLP:journals/tcs/BarkayPS15} in our setting.
We next prove that there exists a grid $G_j$ that has a set $Q_j$ of at most $s$ non-empty cells 
    such that the weighted set of centers of cells of $Q_j$ is a relaxed $(\eps,k,z)$-coreset.
We then combine these two steps and 
prove that at any time $t$ of the stream, there exists a grid whose set of 
non-empty cells is a relaxed $(\eps,k,z)$-coreset of size at most $s$.
The final step is to prove the space complexity of Algorithm~\ref{alg:dyn:stream}. 


\begin{algorithm}[ht] 
\caption{A dynamic streaming algorithm to compute $(\eps,k,z)$-coreset} 
\label{alg:dyn:stream}
\begin{algorithmic}[1]
\State Let $G_i$, for $i \in [\lceil \log\Delta \rceil ]$, be a partition of $[\Delta]^d$ 
       into a grid with cells of size $2^i\times\cdots\times 2^i$. 
\State Let $\mathcal{S}(G_i)$ be a $s$-sample recovery sketch for the grid $G_i$, 
where $s = k(4\sqrt{d}/\eps)^d + z$, as provided by Lemma~\ref{lem:s-recovery}.
\State Let $\F(G_i)$ be an $\zeronorm{F}$-estimator for the number of non-empty cells of $G_i$, as provided by Lemma~\ref{lem:f0}.
\While{not end of the stream}
    \State \begin{minipage}[t]{13cm}
           Let $(q,\xi)$ be the next element in the stream, where
           $q\in [1..\Delta]^d$ and $\xi\in\{+1,-1\}$ indicates
           whether $q$ is inserted or deleted. \\[-3mm]
           \end{minipage}
    \For{$i = 0$ to $\lceil \log\Delta \rceil$}
        \State Let $c(q)$ be the cell in $G_i$ that contains the point $q$. 
        \State \textsc{Update}$(\mathcal{S}(G_i), (c(q),\xi))$ 
                \hfill $\rhd$ update the $s$-sample recovery sketch for $G_i$
        \State  \textsc{Update}$(\F(G_i), (c(q),\xi))$ 
                \hfill $\rhd$ update the $\zeronorm{F}$ estimator for $G_i$.
    \EndFor
    \State  Let $G_j$ be the grid with the smallest cell side length for which \textsc{Query}$(\F(G_j))\leq s$.
    \State  $Q_j \gets$ \textsc{Query}$(\mathcal{S}(G_j))$ 
            \hfill $\rhd$  extract the non-empty cells with their number of points
    \For{each cell $c \in Q_j$}
    \State Choose the center of $c$ as the representative of $c$ and assign the number of points 
    in $c$ as the weight of this representative. 
    \State \textbf{report} the weighted representatives of non-empty cells in $Q_j$
    as a coreset of $P(t)$. 
\EndFor
\EndWhile

\end{algorithmic}
\end{algorithm}

\medskip

\begin{lemma}
\label{lem:correctness:prob:sketch}
For any time $t$, the following holds:
If the number of non-empty cells of a grid $G_i$ at time $t$ is at most $s$, 
then querying the $s$-sample recovery sketch $\mathcal{S}(G_i)$ returns 
all of them with probability $1-\delta$. 
The space usage of the sketch $\mathcal{S}(G_i)$ is 
$O((k/\eps^d + z)\log^3(\frac{k\Delta }{\eps\delta} ))$. 
\end{lemma}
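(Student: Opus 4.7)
The plan is to read off both statements directly from Lemma~\ref{lem:s-recovery}, after specializing its parameters to our setting. Concretely, I would instantiate the $s$-sample recovery sketch with sample parameter $s = k(4\sqrt{d}/\eps)^d + z = O(k/\eps^d + z)$ (treating $d$ as a constant) and universe size $U = \Delta^d$ (the cells of the finest grid upper-bound the number of cells of every~$G_i$), so that $\log U = O(\log \Delta)$.

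For the correctness claim, fix a time $t$ and let $J^*$ denote the set of non-empty cells of $G_i$ at that time. By hypothesis $|J^*| \le s$, so Lemma~\ref{lem:s-recovery} returns a sample $S \subseteq J^*$ of size $|S| \ge \min(s,|J^*|) = |J^*|$; combined with $S \subseteq J^*$ this forces $S = J^*$. In other words, every non-empty cell is recovered, together with its exact count, whenever the sketch does not fail. Since we want this guarantee to hold at every query time and across all $O(\log \Delta)$ grids, and the stream has length polynomially bounded by $\Delta^{O(d)}$, I would invoke Lemma~\ref{lem:s-recovery} with internal failure parameter $\delta' := \delta/\mathrm{poly}(\Delta)$ and union-bound over all (time, grid) pairs; this preserves an overall failure probability of at most $\delta$, while giving $\log(1/\delta') = O(\log(\Delta/\delta))$.

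For the space bound, I would plug these values into the $O(s\log(s/\delta')\log^2 U)$ sketch size from Lemma~\ref{lem:s-recovery}. With $s = O(k/\eps^d + z)$, $\log(s/\delta') = O(\log(k\Delta/(\eps\delta)))$, and $\log^2 U = O(\log^2 \Delta) \le O(\log^2(k\Delta/(\eps\delta)))$, the product collapses to
\[
O\!\left( (k/\eps^d + z)\cdot \log(k\Delta/(\eps\delta))\cdot \log^2(k\Delta/(\eps\delta)) \right) \;=\; O\!\left((k/\eps^d+z)\log^3(k\Delta/(\eps\delta))\right),
\]
as claimed.

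There is no real mathematical obstacle here, only bookkeeping: the only care required is (i) choosing $\delta'$ small enough for the union bound over the poly$(\Delta)$-many query events while keeping the $\log(1/\delta')$ factor subsumed by $\log(k\Delta/(\eps\delta))$, and (ii) verifying that the three logarithmic factors from $s$-sample recovery (one from $\log(s/\delta')$, two from $\log^2 U$) combine into precisely $\log^3(k\Delta/(\eps\delta))$ rather than $\log^4$. Both are routine once the parameters above are fixed.
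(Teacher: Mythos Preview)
Your argument is essentially correct and matches the paper's approach: both read the guarantee and the space bound directly off Lemma~\ref{lem:s-recovery} with $s=\Theta(k/\eps^d+z)$ and $U\le \Delta^d$, and both absorb $\log s$ (using $z\le\Delta^d$) and $\log^2 U$ into powers of $\log(k\Delta/(\eps\delta))$.

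The one difference is that you introduce a union bound here, replacing $\delta$ by $\delta'=\delta/\mathrm{poly}(\Delta)$ so the guarantee holds simultaneously over all times and all grids. That step is unnecessary for this lemma: the statement is for a \emph{fixed} time $t$ and a \emph{fixed} grid $G_i$, so the paper simply plugs the target $\delta$ straight into Lemma~\ref{lem:s-recovery}. The union bound over times and grids is carried out separately in the next lemma (Lemma~\ref{lem:union:sprase:recovery}), where it contributes the extra logarithmic factor that turns $\log^3$ into $\log^4$. Your version still yields the stated $\log^3$ bound (since $\log(1/\delta')=O(\log(\Delta/\delta))$ is absorbed), so nothing breaks; you are just doing work that the paper postpones.
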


\begin{proof}
The $s$-sample recovery sketch $\mathcal{S}(G_i)$ of \cite{DBLP:journals/tcs/BarkayPS15}
reports, with probability of at least $1-\delta$, 
all elements with non-zero frequency together with their exact frequency,
if the number of such elements is at most~$s$. 
(If it is more, we will get a sample of size~$s$; see Lemma~\ref{lem:s-recovery}) for the exact statement.)
This proves the first part of the lemma.

The structures uses $O(s\log(s/\delta)\log^2 U)$ space, 
where  $U$ is the size of the universe. 
For the grid $G_i$, the universe size $U$ is the number of cells in~$G_i$,
which is $\lceil \Delta^d/2^{2i} \rceil$. 
The parameter $U$ is maximized for the grid $G_0$, which has $\Delta^d$ cells. 
Therefore, for $s = \Theta(k(\sqrt{d}/\eps)^d + z)$, 
the space usage of the sketch $\mathcal{S}(G_i)$ is 
\[
    O\left(\left(  k \left(\frac{\sqrt{d}}{\eps}\right)^d + z \right)\log \left(\frac{(k\sqrt{d}/\eps) + z}{\delta}\right)\cdot \log^2 (\Delta^d) \right)  =
    O \left( \left(k/\eps^d + z\right)\log^3\left(\frac{k\Delta }{\eps\delta}\right) \right)  \enspace . 
\]
where we use that $z\leq \Delta^d$ and that $d$ is assumed to be a constant.
\end{proof}

Lemma~\ref{lem:correctness:prob:sketch} provides the sketch $\mathcal{S}(G_i)$ 
if we query only once (say, at the end of the stream) and only for one fixed grid $G_i$.
Next, we assume that the length of the stream $S$ is polynomially bounded by $\Delta^{O(d)}$ 
and apply the union bound to show that the statement of Lemma~\ref{lem:correctness:prob:sketch}  
is correct for every grid $G_i$ at any time $t$ of the stream $S$. 

\begin{lemma}
\label{lem:union:sprase:recovery}
Suppose the length of the stream $S$ is polynomially bounded by $\Delta^{O(d)}$. 
Then, at any time $t$, 
we can return  all non-empty cells (with their exact number of points) of any grid $G_i$ 
that has at most $s$ non-empty cells with probability at least $1-\delta$. 
The space that we use to provide this task is 
$O\left((k/\eps^d + z)\log^4\left(\frac{kd\Delta }{\eps\delta} \right)\right)$. 
\end{lemma}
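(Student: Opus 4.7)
The plan is to obtain the claim by a straightforward union bound over all grids and all time steps, driving down the per-sketch failure probability enough to absorb the cost in the logarithmic factors of the space bound.

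First I would count the events over which the union bound has to range. There are $O(\log \Delta)$ grids $G_0,\ldots,G_{\lceil \log \Delta \rceil}$, and by the hypothesis the stream has length at most $N_{\mathrm{stream}} = \Delta^{O(d)}$, giving a total of $N := N_{\mathrm{stream}} \cdot O(\log \Delta) = \Delta^{O(d)} \log \Delta$ (grid, time) pairs at which we might query a sparse-recovery sketch. I would instantiate each sketch $\mathcal{S}(G_i)$ from Lemma~\ref{lem:correctness:prob:sketch} with failure parameter $\delta' := \delta / N$, so that by Lemma~\ref{lem:correctness:prob:sketch} the probability that a given sketch fails at a given query is at most $\delta'$. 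A union bound over all $N$ events then yields overall success probability at least $1 - N\cdot \delta' = 1-\delta$, which is the first part of the claim.

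Second, I would compute the space cost. By Lemma~\ref{lem:correctness:prob:sketch}, each sketch now uses
\[
O\!\left(\bigl(k/\eps^d+z\bigr)\log^3\!\bigl(k\Delta/(\eps\delta')\bigr)\right)
\]
space. Since $\log(1/\delta') = \log(1/\delta) + \log N = O\bigl(\log(1/\delta) + d\log\Delta\bigr)$, we get $\log\!\bigl(k\Delta/(\eps\delta')\bigr) = O\!\bigl(\log(kd\Delta/(\eps\delta))\bigr)$, so each sketch uses $O\!\bigl((k/\eps^d+z)\log^3(kd\Delta/(\eps\delta))\bigr)$ space. Summing over the $O(\log \Delta)$ grids and using $\log \Delta = O\!\bigl(\log(kd\Delta/(\eps\delta))\bigr)$ gives the total
\[
O\!\left(\bigl(k/\eps^d+z\bigr)\log^4\!\bigl(kd\Delta/(\eps\delta)\bigr)\right),
\]
matching the claimed bound.

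No deep new idea is required; the only thing to be a bit careful about is that the polynomial-length assumption on the stream is genuinely needed so that $\log N$ contributes only a $d\log\Delta$ term, which gets absorbed into the already-present $\log(kd\Delta/(\eps\delta))$ factor. The $\log\Delta$ factor coming from the number of grids supplies the fourth power of the logarithm; everything else is bookkeeping.
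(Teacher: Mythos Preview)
Your proposal is correct and follows essentially the same approach as the paper: take a union bound over the $O(\log\Delta)$ grids and the $\Delta^{O(d)}$ time steps, reduce the per-sketch failure probability to $\delta' = \delta/(\Delta^{O(d)}\log\Delta)$, and observe that the extra $O(d\log\Delta)$ in $\log(1/\delta')$ is absorbed into the existing logarithmic factor while the $O(\log\Delta)$ grids contribute the fourth power. The paper's proof is organized identically.
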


\begin{proof}
Lemma~\ref{lem:correctness:prob:sketch} with probability at least $1-\delta$, 
guarantees that we can return all non-empty cells of a fixed grid $G_i$ 
if for a fixed time $t$, $G_i$ has at most $s$ non-empty cells.
We have $\lceil \log\Delta \rceil$ grids and we assume that $|S| = \Delta^{O(d)}$. 
Thus, we can replace the failure probability $\delta$ by 
$\delta' = \frac{\delta}{\log(\Delta)\cdot \Delta^{O(d)}} = \frac{\delta}{\Delta^{\tilde{O}(d)}}$ 
to provide such a guarantee for any grid $G_i$ at any time $t$. 
By that, assuming $d$ is constant, 
the space usage of all sketches $\mathcal{S}(G_i)$ for $i \in [\lceil \log\Delta \rceil]$  will be 
$$
O\left(\log(\Delta)(k/\eps^d + z)\log^3\left(\frac{kd\Delta}{\eps\delta'} \right)\right) = 
O\left((k/\eps^d + z)\log^4\left(\frac{kd\Delta }{\eps\delta} \right)\right) \enspace . 
$$
\end{proof}

In Algorithm~\ref{alg:dyn:stream}, 
the sketch $\F(G_i)$ is an $\zeronorm{F}$-estimator for the number of non-empty cells of $G_i$. 
This sketch is provided by Lemma~\ref{lem:f0} 
for which we use $O((1/\eps^{2}+\log U)\log(1/\delta))$ space  
to obtain a success probability of at least $1-\delta$. 
Similar to Lemma~\ref{lem:union:sprase:recovery} we have the following lemma. 

\begin{lemma}
\label{lem:union:f:zero}
Suppose the length of the stream $S$ is polynomially bounded by $\Delta^{O(d)}$. 
Then, at any time $t$, 
we can return  approximate the number of non-empty cells of any grid $G_i$ 
within $(1\pm\epsilon)$-factor with the success probability of at least $1-\delta$. 
The space that we use to provide this guarantee is 
$O(\frac{1}{\eps^2}\cdot \log^2(\Delta/\delta))$. 
\end{lemma}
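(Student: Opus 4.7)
The plan is to mirror the proof of Lemma~\ref{lem:union:sprase:recovery} almost verbatim, just swapping in the $\zeronorm{F}$-estimator of Lemma~\ref{lem:f0} in place of the $s$-sample recovery sketch. The two ingredients are (i) per-grid guarantees at a single time step from Lemma~\ref{lem:f0}, and (ii) a union bound over all grids and all time steps in the stream.

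First I would instantiate one $\zeronorm{F}$-estimator $\F(G_i)$ per grid $G_i$, for $i \in [\lceil \log \Delta \rceil]$, with the same approximation parameter $\eps$ but with an inflated failure parameter $\delta'$ to be chosen. For a single grid $G_i$ at a single time~$t$, the universe size is $U_i = \lceil \Delta^d/2^{2i} \rceil \le \Delta^d$, so Lemma~\ref{lem:f0} yields a $(1\pm\eps)$-approximation of the number of non-empty cells with probability at least $1-\delta'$, using $O((1/\eps^2 + \log U_i)\log(1/\delta')) = O((1/\eps^2 + d\log\Delta)\log(1/\delta'))$ space.

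Next I would apply a union bound. There are $\lceil\log\Delta\rceil$ grids, and by the assumption the stream has length polynomially bounded by $\Delta^{O(d)}$, so the total number of (grid, time) pairs at which any estimator could be queried is at most $\log(\Delta) \cdot \Delta^{O(d)} = \Delta^{\tilde O(d)}$. Setting
\[
\delta' \;=\; \frac{\delta}{\Delta^{\tilde O(d)}}
\]
therefore guarantees that \emph{every} estimator is accurate at \emph{every} time step with probability at least $1-\delta$. Since $d$ is a constant, $\log(1/\delta') = O(\log(\Delta/\delta))$.

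Finally the space bound follows by plugging in and summing over the $\lceil\log\Delta\rceil$ grids:
\[
\lceil \log \Delta \rceil \cdot O\!\left( \left(\tfrac{1}{\eps^2} + \log\Delta\right)\log(\Delta/\delta) \right) \;=\; O\!\left( \tfrac{1}{\eps^2} \log^2(\Delta/\delta) \right),
\]
where we absorb the additive $\log \Delta$ terms into the dominant $\log^2(\Delta/\delta)/\eps^2$ factor. There is no real obstacle here; the only minor subtlety is picking $\delta'$ large enough to absorb both the number of grids and the stream length while keeping $\log(1/\delta') = O(\log(\Delta/\delta))$, which is exactly the bookkeeping already done in the proof of Lemma~\ref{lem:union:sprase:recovery}.
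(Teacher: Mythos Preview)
Your proposal is correct and follows essentially the same approach as the paper: instantiate one $\zeronorm{F}$-estimator per grid, inflate the failure probability to $\delta' = \delta/\Delta^{\tilde O(d)}$ so that a union bound over all $\lceil\log\Delta\rceil$ grids and all $\Delta^{O(d)}$ time steps succeeds with probability $1-\delta$, and then plug $U\le\Delta^d$ and $\log(1/\delta')=O(\log(\Delta/\delta))$ into the space bound of Lemma~\ref{lem:f0}. The paper's proof is simply a terser version of what you wrote.
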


\begin{proof}
The space usage of the sketch that Lemma~\ref{lem:f0} provides is $O((1/\eps^{2}+\log U)\log(1/\delta))$. 
Recall that the parameter $U$ is maximized for the grid $G_0$, which has $\Delta^d$ cells. 
Thus, by applying the union bound for any grid $G_i$ at any time $t$, 
we provide the $(1\pm\epsilon)$-approximation of the number of non-empty cells of $G_i$ 
with probability $1-\delta$ and the space usage of 
$
    O((1/\eps^{2}+\log U)\log(1/\delta)) = O(\frac{1}{\eps^2}\cdot \log^2(\Delta/\delta)) \enspace .
$
\end{proof}

Next, we prove that there exists a grid $G_j$ that has a set $Q_j$ of at most $s$ non-empty cells 
    such that the weighted set of centers of cells of $Q_j$ is a relaxed $(\eps,k,z)$-coreset.

\begin{lemma}
\label{lem:num:cell:opt}
Let $P \subseteq [\Delta]^d$ be a point set and $0 < \eps \le 1$ be 
the error parameter. 
Suppose that $2^j \le \frac{\eps}{\sqrt{d}} \cdot \optkz(P) < 2^{j+1}$. 
Then, 
\begin{itemize}
    \item at most $k(4\sqrt{d}/\eps)^d + z$ cells of the grid $G_j$ 
are non-empty, and 
    \item the set of representative points of non-empty cells $Q_j$  of $G_j$ is a relaxed $(\eps,k,z)$-coreset 
for the $k$-center cost of $P$ with $z$ outliers.
\end{itemize}
\end{lemma}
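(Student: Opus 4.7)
The plan is to prove the two claims separately, each exploiting one direction of the two-sided hypothesis on $2^j$.

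For the bound on the number of non-empty cells, I would fix an optimal $k$-center-with-$z$-outliers solution for~$P$: balls $B_1,\dots,B_k$ of radius $\optkz(P)$ together with a set $Z$ of at most $z$ outliers. Every point of $P\setminus Z$ lies in some~$B_i$, so each non-empty cell of $G_j$ either contains a point of~$Z$ or is intersected by at least one ball~$B_i$. At most $|Z|\leq z$ cells can be non-empty solely because of outliers, so it suffices to bound the number of cells of $G_j$ that meet a single ball of radius~$\optkz(P)$. Such a ball lies inside an axis-aligned cube of side~$2\optkz(P)$, and a cube of side~$L$ meets at most $(\lceil L/2^j\rceil+1)^d$ cells of~$G_j$. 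The upper hypothesis $\tfrac{\eps}{\sqrt{d}}\optkz(P)<2^{j+1}$ gives $2\optkz(P)/2^j<4\sqrt{d}/\eps$, and using $\eps\leq 1$ to absorb the additive constants yields a per-ball bound of $(4\sqrt{d}/\eps)^d$. Summing over the $k$ balls and adding the outlier contribution gives the claimed total of $k(4\sqrt{d}/\eps)^d+z$.

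For the coreset claim, I would associate with each non-empty cell $c\in Q_j$ the center $q_c$ of~$c$, assigning to it the weight $|c\cap P|$; write $Q_j^*$ for the resulting weighted set. The other hypothesis, $2^j\leq\tfrac{\eps}{\sqrt{d}}\optkz(P)$, is exactly what drives the covering property: each point $p\in P$ lies in some cell $c$ of~$G_j$, and its distance to $q_c$ is at most $\tfrac{\sqrt{d}}{2}\cdot 2^j\leq\tfrac{\eps}{2}\cdot\optkz(P)\leq\eps\cdot\optkz(P)$. The weight property is immediate since $\sum_{c\in Q_j}|c\cap P|=|P|$. Thus $Q_j^*$ satisfies both conditions of Definition~\ref{def:miniball:covering}, with the single caveat that the representatives are cell centers rather than elements of~$P$, which is precisely what the term \emph{relaxed} captures. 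An inspection of the proof of Lemma~\ref{lem:MBC:is:coreset} shows that it only uses the covering and weight properties and never the containment of representatives in~$P$, so the same argument shows that $Q_j^*$ is a relaxed $(\eps,k,z)$-coreset of~$P$.

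The main subtlety is in the first part: the naive cube-counting bound is really $(4\sqrt{d}/\eps+2)^d$, and closing the small gap to $(4\sqrt{d}/\eps)^d$ requires either observing that $\eps\leq 1$ forces $4\sqrt{d}/\eps\geq 4$ so the additive $2$ can be absorbed into a marginally larger implicit constant, or using a tighter count based on which grid centers actually lie within the expanded ball. Once the per-ball cell count is in hand, the rest is bookkeeping, and the clean interplay between the two halves of the hypothesis (the upper one controls cell counts, the lower one controls the covering radius) makes the argument fit together naturally.
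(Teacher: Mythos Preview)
Your proposal is correct and follows essentially the same approach as the paper: for the cell count, the paper also bounds each optimal ball by a cube of side $\frac{2\sqrt{d}}{\eps}\cdot 2^{j+1}$ and divides by the cell side $2^j$ to get $(4\sqrt{d}/\eps)^d$ cells per ball (glossing over the additive-constant issue you explicitly flag), then adds $z$ for the outliers; for the coreset claim, the paper likewise refers back to the proof of Lemma~\ref{lem:MBC:is:coreset}, noting only that the representatives are now cell centers rather than input points. Your treatment is in fact slightly more careful than the paper's on the cell-counting step.
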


\begin{proof}
Let $C^*=\{c^*_1,\cdots,c^*_k\}$ be an optimal set of $k$ centers. 
Since, $2^j \le \frac{\eps}{\sqrt{d}} \cdot \optkz(P) < 2^{j+1}$, 
the balls centered at centers $C^*=\{c^*_1,\cdots,c^*_k\}$ of radius $\optkz(P)$ 
are covered by hypercubes of side length $\frac{2\sqrt{d}}{\eps}\cdot2^{j+1}$. 
Thus, these balls can cover or intersect at most 
$k\cdot (\frac{\frac{2\sqrt{d}}{\eps}\cdot2^{j+1}}{2^j})^d = k(4\sqrt{d}/\eps)^d$
cells of the grid $G_j$. The number of cells of the grid $G_j$ 
that can contain at least one outlier is at most $z$. 
Thus, the total number of non-empty cells of the grid $G_j$ is at most $k(4\sqrt{d}/\eps)^d + z$ 
what proves the first claim of this lemma.

The proof that the set of representative points of non-empty cells $Q_j$ 
is a relaxed $(\eps,k,z)$-coreset of $P$ is similar to the proof of Lemma~\ref{lem:MBC:is:coreset} 
and so we omit it here. 
The only difference is that 
the centers are now centers of non-empty grid cells, so we get a relaxed corset instead of a ``normal'' coreset (whose points are required to be a subset of the input points).
\end{proof}

Now, we prove that at any time $t$ of the stream, 
there exists a grid whose set of 
non-empty cells provides a relaxed $(\eps,k,z)$-coreset 
of size at most $s$.

\begin{lemma}
\label{lem:good:grid:exists}
Suppose the length of the stream $S$ is polynomially bounded by $\Delta^{O(d)}$. 
Let $t$ be any arbitrary time of the stream $S$. 
Let $P(t)$ be the subset of points that are inserted up to time $t$ of the stream $S$ 
but not deleted.  
Let $\optkz(P(t))$ be the optimal $k$-center radius with $z$ outliers at time $t$. 
Then, with probability $1-\delta$, 
Algorithm \ref{alg:dyn:stream} returns
a relaxed $(\eps,k,z)$-coreset 
for the $k$-center cost with outliers of the set $P(t)$.
\end{lemma}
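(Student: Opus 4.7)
The plan is to combine Lemmas~\ref{lem:num:cell:opt}, \ref{lem:union:sprase:recovery}, and \ref{lem:union:f:zero} to show that the grid $G_j$ selected by Algorithm~\ref{alg:dyn:stream} at time $t$ yields a relaxed $(\eps,k,z)$-coreset with high probability. First, I would pinpoint a target grid: let $j^*$ be the unique integer with $2^{j^*} \le (\eps/\sqrt{d})\cdot \optkz(P(t)) < 2^{j^*+1}$. By Lemma~\ref{lem:num:cell:opt}, the grid $G_{j^*}$ has at most $s = k(4\sqrt{d}/\eps)^d + z$ non-empty cells and its weighted cell-center representatives form a relaxed $(\eps,k,z)$-coreset of $P(t)$.

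Next, I would argue that the algorithm necessarily selects an index $j \le j^*$. Instantiating Lemma~\ref{lem:union:f:zero} with a sufficiently small constant error parameter (say $1/2$, which only affects the hidden constants in the space bound) gives that, with probability at least $1-\delta/2$, for every grid at every time in the stream the estimate $\F(G_i)$ is within a constant factor of the true number of non-empty cells of~$G_i$. In particular $\F(G_{j^*}) \le s$ after mildly inflating the threshold by this constant (absorbed into the constants defining~$s$), so the algorithm's rule of picking the grid with the smallest cell side length passing the threshold forces $j \le j^*$. Because $2^j \le 2^{j^*} \le (\eps/\sqrt{d})\cdot \optkz(P(t))$, the covering argument of Lemma~\ref{lem:num:cell:opt} still applies to $G_j$: any point of $P(t)$ lies within $(\sqrt{d}/2)\cdot 2^j \le (\eps/2)\cdot \optkz(P(t))$ of its cell center, so the weighted cell centers of the non-empty cells of $G_j$ also form a relaxed $(\eps,k,z)$-coreset.

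Finally, I would invoke Lemma~\ref{lem:union:sprase:recovery} to conclude that, with probability at least $1-\delta/2$, the sparse recovery sketch $\mathcal{S}(G_j)$ returns all non-empty cells of $G_j$ together with their exact point counts, since $G_j$ has at most $s$ non-empty cells. A union bound over the two failure events yields the overall success probability of $1-\delta$. The main obstacle is calibrating the $\zeronorm{F}$-estimator: a naive parameter choice could either cause the algorithm to reject the target grid $G_{j^*}$ (because the estimated count slightly exceeds the threshold) or to select a grid whose true number of non-empty cells slightly exceeds $s$, in which case Lemma~\ref{lem:union:sprase:recovery} would no longer guarantee full recovery. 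I plan to resolve this by choosing the estimator's error as a small constant and inflating $s$ by the corresponding constant in both the selection rule and the instantiation of the sparse recovery sketch; this is transparent to the asymptotic space bound claimed in Theorem~\ref{thm:dynamic:streams}.
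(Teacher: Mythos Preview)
Your proposal is correct and follows essentially the same approach as the paper: identify the target grid $G_{j^*}$ via Lemma~\ref{lem:num:cell:opt}, argue that the algorithm selects some $j\le j^*$ so that the cell-center representatives of $G_j$ still form a relaxed $(\eps,k,z)$-coreset, and invoke Lemmas~\ref{lem:union:sprase:recovery} and~\ref{lem:union:f:zero} for correctness of the sketches. You are in fact more careful than the paper on one point: the paper's proof simply asserts that the algorithm picks a grid with at most $s$ non-empty cells without addressing the slack between the $\zeronorm{F}$-estimate and the true count, whereas you explicitly flag and resolve this calibration issue by absorbing a constant into~$s$.
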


\begin{proof}
Based on Lemma~\ref{lem:union:sprase:recovery}, 
at any time $t$, we can return  all non-empty cells (with their exact number of points) of any grid $G_i$ 
that has at most $s$ non-empty cells with probability at least $1-\delta$. 
Moreover, according to Lemma~\ref{lem:union:f:zero}, 
at any time $t$, we can return  approximate the number of non-empty cells of any grid $G_i$ 
within $(1\pm\epsilon)$-factor with the success probability of at least $1-\delta$. 

Now, assume that  at time $t$, 
the optimal $k$-center radius with $z$ outliers of the point set $P(t)$ is 
$\optkz(P(t))$. Assume that $2^i \le \frac{\eps}{\sqrt{d}} \cdot \optkz(P(t)) < 2^{i+1}$. 
Then, Lemma~\ref{lem:num:cell:opt} shows that 
the number of non-empty cells of the grid $G_i$ is at most $s$. 
Moreover, the set of representative points of these non-empty cells is a relaxed $(\eps,k,z)$-coreset 
for the $k$-center cost of $P(t)$ with $z$ outliers. 
In Algorithm~\ref{alg:dyn:stream}, we consider the grid $G_{j}$ for $j \le i$ of smallest side length 
that has at most $s$ non-empty cells. 
Let $Q_j$ be the set of non-empty cells of $G_j$. 
Then, the set of centers of the cells in $Q_j$ is a relaxed $(\eps,k,z)$-coreset 
for the $k$-center cost with outliers of the set $P(t)$ 
which proves the lemma. 
\end{proof}

\begin{lemma}
\label{lem:space:sketch}
The total space used by Algorithm~\ref{alg:dyn:stream} is $O\left((k/\eps^d + z)\log^4\left(\frac{kd\Delta }{\eps\delta} \right)\right)$.
\end{lemma}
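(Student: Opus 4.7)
The plan is to observe that Algorithm~\ref{alg:dyn:stream} only maintains two kinds of data structures in parallel across the $\lceil \log \Delta \rceil$ grids: the $s$-sample recovery sketches $\mathcal{S}(G_i)$ and the $\zeronorm{F}$-estimator sketches $\F(G_i)$. The per-round extraction of $Q_j$ produces at most $s = \Theta(k/\eps^d + z)$ weighted cells, which fits inside the asymptotic bound we want to prove and does not need to be stored between updates. Hence the total space is just the sum of the space used by the two families of sketches.

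First, I would appeal directly to Lemma~\ref{lem:union:sprase:recovery}, which already accounts for all $\lceil \log \Delta\rceil$ grids simultaneously (after boosting the failure probability by a union bound over time steps and grids, using the assumption that the stream length is polynomially bounded by $\Delta^{O(d)}$). This gives a total cost of
\[
O\!\left((k/\eps^d + z)\log^4\!\left(\tfrac{kd\Delta}{\eps\delta}\right)\right)
\]
for maintaining every $s$-sparse recovery sketch $\mathcal{S}(G_i)$ simultaneously with overall success probability at least $1-\delta$.

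Next, I would invoke Lemma~\ref{lem:union:f:zero} for the $\zeronorm{F}$-estimators $\F(G_i)$, which yields an additional
\[
O\!\left(\tfrac{1}{\eps^2}\,\log^2(\Delta/\delta)\right)
\]
space, again with success probability at least $1-\delta$ over all grids and all times of the stream. Adding the two bounds and noting that the second is dominated by the first (since $k/\eps^d + z \geq 1/\eps^d \geq 1/\eps^2$ for constant $d\geq 2$, and the logarithmic factor in the first bound is larger) gives the claimed overall bound. I would finish by pointing out that a final union bound over the two failure events just changes $\delta$ to $\delta/2$, which is absorbed into the asymptotic expression.

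There is no real obstacle here: the lemma is essentially a bookkeeping consequence of the two preceding structural lemmas, and the only subtlety is simply verifying that the first space term dominates the second (and that the union-bounding of failure probabilities across sketches, grids, and time is already folded into Lemmas~\ref{lem:union:sprase:recovery} and~\ref{lem:union:f:zero}).
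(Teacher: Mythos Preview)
Your proposal is correct and follows essentially the same approach as the paper: invoke Lemma~\ref{lem:union:sprase:recovery} for the $s$-sparse recovery sketches, Lemma~\ref{lem:union:f:zero} for the $\zeronorm{F}$-estimators, and observe that the latter is dominated by the former. The paper's proof is slightly terser (it does not separately discuss the transient storage of $Q_j$ or the final union bound over the two failure events), but the argument is the same.
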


\begin{proof}
The space of Algorithm~\ref{alg:dyn:stream} is dominated 
by the space usage of the $s$-sparse recovery sketches for grids $G_i$ and 
the space usage of $\zeronorm{F}$-estimators for the number of non-empty cells of $G_i$ 
for $i \in [\lceil \log\Delta\rceil]$. 
Using Lemma~\ref{lem:union:sprase:recovery}, the space of the former one is 
$O\left((k/\eps^d + z)\log^4\left(\frac{kd\Delta }{\eps\delta} \right)\right)$. 
The space of the latter one based on Lemma~\ref{lem:union:f:zero} is  
$O(\frac{1}{\eps^2}\cdot \log^2(\Delta/\delta))$.  
The second space complexity is dominated by the first one. 
Thus, the total space complexity of Algorithm~\ref{alg:dyn:stream} 
is 
$O\left((k/\eps^d + z)\log^4\left(\frac{kd\Delta }{\eps\delta} \right)\right)$. 
\end{proof}


\subsection{A lower bound for the fully dynamic streaming model}
In this section, we provide a lower bound that shows the dependency on the universe size~$\Delta$ is unavoidable in the dynamic streaming model.
The restriction that we put to prove Theorem \ref{thm:lower:bound:dynamic} is the same as the setting in section \ref{sec:streaming:lb} for the insertion-only lower bound.
 
\mypara{Overview.}
For the fully dynamic streaming model, where it is also possible to delete the points, we show an $\Omega((k/\eps^d)\log{\Delta})$ lower bound for the points in
a $d$-dimensional discrete Euclidean space $[\Delta]^d=\{1,2,3,\cdots,\Delta\}^d$.
Adding it to the $\Omega(z)$ lower bound of the insertion-only streaming model leads to an $\Omega((k/\eps^d)\log\Delta + z)$ lower bound for the fully dynamic streaming setting. 

In the insertion-only construction the $k-2d-1$ ``clusters'' where just single points, but here each cluster $C_i$ 
consists of $\Theta(\log{\Delta})$ groups $G_i^1,G_i^2,\ldots$ that are scaled copies of (a part of) a grid of size $\Theta(1/\eps^d)$, where the $j$-th copy is scaled by $2^j$; see Figure \ref{fig:lb:dynamic:config}. We claim that all the non-outlier points in $P(t)$ must be in any $(\eps, k, z)$-coreset of $P(t)$.
To prove the claim by contradiction, we will assume that the coreset does not contain a non-outlier point $p^* \in G_{i^*}^{m^*}$, 
and then delete all groups $G_i^m$ for all $i$ and for all $m > m^*$. 
Next, we insert a carefully chosen set of $2^d$ new points to the stream such that the coreset underestimates the optimal radius, which is a contradiction.
This will lead to the following theorem.

\begin{theorem}[Lower bound for dynamic streaming algorithms]
\label{thm:lower:bound:dynamic}
Let $0 < \eps \leq \frac{1}{8d}$, $k\geq 2d$ and $\Delta \geq ((2k+z)(\frac{1}{4\eps}+d))^2$. 
Any deterministic fully dynamic streaming algorithm that maintains a weight-restricted $(\eps,k,z)$-coreset 
for the $k$-center problem with $z$ outliers in
a $d$-dimensional discrete Euclidean space $[\Delta]^d=\{1,2,3,\cdots,\Delta\}^d$  
must use $\Omega((k/\eps^d)\log{\Delta}+ z)$ space.
\end{theorem}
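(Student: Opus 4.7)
The plan is to reuse the machinery developed for the insertion-only lower bound and leverage deletions to multiply the hard instance across scales. The $\Omega(z)$ summand is inherited directly from Lemma~\ref{lem:lb:k+z}: a fully dynamic stream with no deletions is an insertion-only stream, so any fully dynamic algorithm already pays $\Omega(z)$ on that construction. Hence the task reduces to proving an $\Omega((k/\eps^d)\log\Delta)$ lower bound in the fully dynamic setting, which can then be added to the $\Omega(z)$ term.

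For the main term, I build a multi-scale version of the configuration used in Lemma~\ref{lem:lb:z:small}. Set $\lambda := 1/(4d\eps)$, $h := d(\lambda+2)/2$, and $r := \sqrt{h^2 - 2h + d}$, exactly as in the insertion-only proof. The construction consists of $k-2d+1$ \emph{clusters}; each cluster $C_i$ contains a sequence of nested groups $G_i^0, G_i^1, \ldots, G_i^{M}$ with $M := \lfloor \frac{1}{2}\log\Delta\rfloor$, where $G_i^j$ is a dilated copy of the scale-$0$ grid of Lemma~\ref{lem:lb:z:small} scaled by $2^j$, placed around a common center of $C_i$ so that $G_i^0 \subset G_i^1 \subset \cdots \subset G_i^M$ in the sense that any ball covering $G_i^j$ also covers $G_i^{j'}$ for every $j' \le j$. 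Different clusters are separated by $4(h+r)\cdot 2^M$, and $z$ outlier points $o_1,\ldots,o_z$ are added in a disjoint region far from every cluster. Let $P(t)$ denote the resulting point set. The hypothesis $\Delta \ge ((2k+z)(\frac{1}{4\eps}+d))^2$ ensures that all $\Theta(\log\Delta)$ scales, plus the outlier region, fit inside $[\Delta]^d$; the total number of non-outliers is $\Theta((k/\eps^d)\log\Delta)$.

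The claim is that every non-outlier point must be explicitly stored. Suppose for contradiction a non-outlier $p^*\in G_{i^*}^{m^*}$ satisfies $p^*\notin P^*(t)$. Extend the stream in two phases. First, delete every point lying in $G_i^j$ for all $i$ and all $j > m^*$; this is valid because those points are present at time $t$. After this phase the scale-$m^*$ groups become the outer shells of the surviving configuration, and by nesting every ball covering $G_i^{m^*}$ covers all sub-scale groups of $C_i$. Second, insert the $2d$ auxiliary points analogous to $P^+\cup P^-$ in the proof of Lemma~\ref{lem:lb:z:small}, scaled by $2^{m^*}$ and placed at offsets $\pm(h+r)\cdot 2^{m^*}$ along each axis from~$p^*$. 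Call the resulting point set $P(t')$ and let $P^*(t')$ be the coreset at that time. Since $p^*$ was never stored at time $t$, was not re-inserted, and is not one of the auxiliary points, $p^* \notin P^*(t')$. Replaying Claim~\ref{lem:OPT:lower:bound} on the scale-$m^*$ sub-configuration gives $\optkz(P(t')) \ge (h+r)\cdot 2^{m^*}/2$, and replaying Claim~\ref{claim:lb:streaming} (the missing $p^*$ lets the coreset be covered by $2d$ balls of radius $r\cdot 2^{m^*}$ around $C_{i^*}\cup P^+\cup P^-$, and single balls for each other scale-$m^*$ cluster) gives $\optkz(P^*(t')) \le r\cdot 2^{m^*}$. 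Lemma~\ref{lem:r:bound} then yields $\optkz(P^*(t')) < (1-\eps)\optkz(P(t'))$, contradicting that $P^*(t')$ is an $(\eps,k,z)$-coreset.

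The main obstacle is the geometric bookkeeping across scales. One has to (i) arrange the nested groups within each cluster so that a ball covering $G_i^{m^*}$ automatically covers every $G_i^{j}$ with $j<m^*$, so that the surviving clusters after the deletion phase behave exactly like the single-point clusters of Lemma~\ref{lem:lb:z:small} at scale $2^{m^*}$; (ii) separate the clusters at the largest scale so that no two of them ever interact in any optimal solution, independently of which $m^*$ the adversary picks; and (iii) verify the packing: the footprint of one cluster at the largest scale is $O(\lambda)\cdot 2^M = O(\frac{1}{\eps d})\cdot\sqrt{\Delta}$, inter-cluster spacing is $O(h+r)\cdot 2^M = O(\frac{d}{\eps})\cdot\sqrt{\Delta}$, and the outlier region adds a $z$ factor, giving total side length $O((2k+z)(\frac{1}{4\eps}+d))\cdot\sqrt{\Delta}$, which is at most $\Delta$ by hypothesis. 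Once these geometric details are pinned down, the contradiction is essentially the insertion-only argument of Lemma~\ref{lem:lb:z:small} replayed at scale $2^{m^*}$; combining the resulting $\Omega((k/\eps^d)\log\Delta)$ bound with the $\Omega(z)$ bound inherited from Lemma~\ref{lem:lb:k+z} completes the proof.
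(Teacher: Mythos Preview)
Your plan is correct and is essentially the paper's proof: build $\Theta(\log\Delta)$ nested scales inside each of $k-2d+1$ clusters (plus $z$ far-away outliers), and if some $p^*\in G_{i^*}^{m^*}$ is absent from the coreset, delete every group at scale $>m^*$, insert the $2d$ auxiliary points at offsets $\pm 2^{m^*}(h+r)$ from $p^*$, and replay Claims~\ref{lem:OPT:lower:bound} and~\ref{claim:lb:streaming} together with Lemma~\ref{lem:r:bound} at scale $2^{m^*}$; the $\Omega(z)$ term is inherited from Lemma~\ref{lem:lb:k+z} exactly as you say.

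The one detail to watch when you pin down the geometry is the nesting scheme. The paper does \emph{not} nest the groups concentrically; instead each $G_i^m$ is a $(\lambda{+}1)^d$ grid at spacing $2^m$ with its lexicographically smallest $(\lambda/2{+}1)^d$ sub-grid removed, and the smaller groups $G_i^1,\ldots,G_i^{m-1}$ sit inside that removed corner octant. This choice makes the groups disjoint and, crucially, forces every smaller-scale point $q$ to differ from $p^*\in G_{i^*}^{m^*}$ by at least $2^{m^*}$ in some coordinate, i.e.\ $\mu_q\ge 2^{m^*}$, which is precisely the hypothesis needed in the proof of Lemma~\ref{lem:OPT:upper:bound} (scaled by $2^{m^*}$) to show that the $2d$ balls of radius $2^{m^*}r$ cover all of $\bigl(\bigcup_{j\le m^*}G_{i^*}^{j}\bigr)\setminus\{p^*\}$. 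A concentric arrangement does not give this: if $p^*$ happens to be near the common center, smaller-scale points can lie within $2^{m^*}$ of $p^*$ in every coordinate and fall outside all $2d$ balls.
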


The remainder of this section is dedicated to the proof of Theorem \ref{thm:lower:bound:dynamic}. To prove the theorem, we will present a scenario of insertions and deletions that forces the size of the coreset to be $\Omega((k/\eps^d)\log{\Delta})$. 
Recall that by Lemma \ref{lem:lb:k+z}, the size of coreset is $\Omega(z)$ even in the insertion-only model.
Therefore, the coreset size must be~$\Omega((k/\eps^d)\log{\Delta}+z)$ in the fully dynamic streaming model.
\\

Let $\lambda := 1/(4d\eps)$, and assume without loss of generality $\lambda/2$ is an integer. 
Let $h := d(\lambda+2)/2$ and $r :=\sqrt{h^2-2h+d}$, 
and let $g:=\frac{1}{2}\log{\Delta} - 2$. 
Instance $P(t)$ consists of $k-2d+1$ clusters $C_1,\ldots,C_{k-2d+1}$ at distance $2^{g+2}(h+r)$ from each other, 
and also $z$ outlier points $o_1,\ldots,o_z$ at distance $2^{g+2}(h+r)$ from each other ; see Figure \ref{fig:lb:dynamic:config}.
Each cluster $C_i$ consists of $g$ groups $G^1_i,\ldots,G_i^g$. Each group $G_i^m$ is is constructed by placing $(\lambda+1)^d$ points in a grid whose cells have side length $2^m$, 
and the omitting the lexicographically smallest ``octant''.
The omitted octant is used to place the groups $G_i^{1}\cup\ldots \cup G_i^{m-1}$ 
as illustrated in Figure \ref{fig:lb:dynamic:config}.
Therefore, each group consists of $(\lambda+1)^d-(\lambda/2+1)^d = \Omega(1/\eps^{d})$ points.


\begin{figure}
\begin{center}
\includegraphics{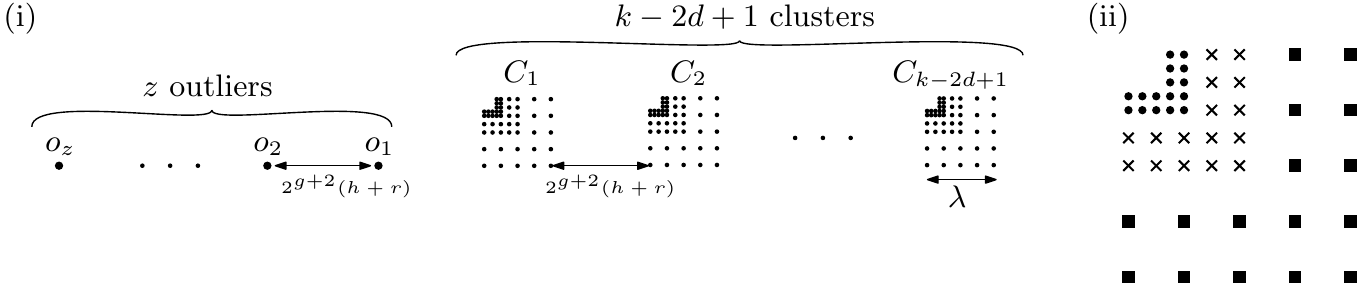}
\end{center}
\caption{Illustration of the lower bound in
Theorem \ref{thm:lower:bound:dynamic}.
Part~(i) shows the global construction, part~(ii) shows an example of a cluster $C_i$, where $g=3$. The points in groups $G_i^{1}$, $G_i^{2}$ and $G_i^{3}$ are showed by  disks, crosses and squares respectively.}
\label{fig:lb:dynamic:config}
\end{figure}

Suppose that all points in $P(t)$ are inserted into the stream by time $t$, and let $P^*(t)$ be the maintained $(\eps, k, z)$-coreset at time $t$. We claim that $P^*(t)$ must contain all non-outlier points, which means the size of $P^*(t)$ must be $\Omega(kg/\eps^d) = \Omega((k/\eps^d) \log{\Delta})$.


\begin{claim}
Let $p$ be an arbitrary non-outlier point in $P(t)$, that is, a point from one of the cluster $C_i$, and let $P^*(t)$ be an $(\eps, k, z)$-coreset of $P(t)$.
Then, $p$ must be in $P^*(t)$.

\end{claim}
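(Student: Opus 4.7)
The plan is to mirror the strategy of the insertion-only lower bound from Lemma~\ref{lem:lb:z:small}, scaled by a factor $2^{m^*}$ so as to exploit the existence of $\Theta(\log\Delta)$ nested scales of groups inside each cluster. Assume for contradiction that $p^*\in G_{i^*}^{m^*}$ is not explicitly stored in~$P^*(t)$. Because $P^*(\cdot)\subseteq P(\cdot)$ in the weight-restricted setting, and because the only future stream operations I will issue are deletions of other points and insertions of points different from~$p^*$, the coreset continues to exclude $p^*$ throughout.

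First, I delete (in any order) all points that belong to groups $G_i^m$ with $m>m^*$, for every cluster index~$i$; this yields at time $t_1$ a point set $P(t_1)$ in which cluster $C_{i^*}$ still contains $p^*$, and each cluster now consists of only the nested groups $G_i^1,\ldots,G_i^{m^*}$. Then I insert $2d$ new points $P^+\cup P^-$ defined by $p_j^\pm := p^*\pm(h+r)\cdot 2^{m^*}\cdot e_j$ for $j\in[d]$, each with weight~$2$, mimicking the construction from the proof of Lemma~\ref{lem:lb:z:small}. Let $P(t_2)$ and $P^*(t_2)$ denote the resulting point set and coreset; in particular, $p^*\notin P^*(t_2)$.

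Next I establish the scaled analogues of Claims~\ref{lem:OPT:lower:bound} and~\ref{claim:lb:streaming}. For the lower bound, I pick one representative point from each cluster $C_i$ with $i\neq i^*$, together with $p^*$, the $2d$ newly inserted points, and the $z$ outliers; these $k+z+1$ points have pairwise distance at least $(h+r)\cdot 2^{m^*}$, using the intra-cluster separation $(h+r)\cdot 2^{m^*}$ between $p^*$ and the new points, and the inter-cluster/outlier separation $2^{g+2}(h+r)\geq (h+r)\cdot 2^{m^*}$ (which holds because $m^*\leq g$). Hence $\optkz(P(t_2))\geq (h+r)\cdot 2^{m^*}/2$. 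For the upper bound, I exhibit $k$ balls of radius $r\cdot 2^{m^*}$ leaving at most $z$ points of $P^*(t_2)$ uncovered: $2d$ balls cover $(G_{i^*}^{m^*}\setminus\{p^*\})\cup P^+\cup P^-$ via the $2^{m^*}$-scaled version of the construction used for Lemma~\ref{lem:OPT:upper:bound}, and one ball per remaining cluster $C_i$ covers the entire nested structure $G_i^1\cup\cdots\cup G_i^{m^*}$, which fits inside a hypercube of side $\lambda\cdot 2^{m^*}$ whose circumradius is $(\sqrt{d}\lambda/2)\cdot 2^{m^*}\leq r\cdot 2^{m^*}$ by the definition $r=\sqrt{h^2-2h+d}$.

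Combining these bounds with Lemma~\ref{lem:r:bound}, which gives $r<(1-\eps)(r+h)/2$, yields $\optkz(P^*(t_2))\leq r\cdot 2^{m^*}<(1-\eps)\cdot\optkz(P(t_2))$, contradicting the $(\eps,k,z)$-coreset property. The hardest part of the plan is the Step~4 covering check: verifying carefully that a remaining cluster (with all its nested smaller-scale groups placed inside the lex-smallest octant of $G_i^{m^*}$) actually fits in a ball of radius $r\cdot 2^{m^*}$, which reduces to the elementary inequality $d\lambda^2/4\leq h^2-2h+d$ for the chosen $h=d(\lambda+2)/2$. A secondary subtlety is to verify that the inserted points $P^+\cup P^-$ lie inside $[\Delta]^d$ (which follows from the hypothesis $\Delta\geq((2k+z)(\tfrac{1}{4\eps}+d))^2$) and are disjoint from $P(t_1)$ (which is why the larger-scale groups are deleted \emph{before} the insertions).
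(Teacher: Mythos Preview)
Your approach is essentially identical to the paper's: delete the larger-scale groups, insert the $2d$ weight-$2$ points $P^\pm$ at distance $2^{m^*}(h+r)$ from $p^*$, and reduce to the $2^{m^*}$-scaled insertion-only analysis via Claims~\ref{lem:OPT:lower:bound} and~\ref{claim:lb:streaming} together with Lemma~\ref{lem:r:bound}.

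There is one genuine gap in your upper-bound step. You claim that the $2d$ balls of radius $r\cdot 2^{m^*}$ cover $(G_{i^*}^{m^*}\setminus\{p^*\})\cup P^+\cup P^-$, and that $k-2d$ further balls cover the remaining clusters $C_i$ with $i\neq i^*$. But after your deletions, cluster $C_{i^*}$ still contains the smaller-scale groups $G_{i^*}^{1}\cup\cdots\cup G_{i^*}^{m^*-1}$ (you only deleted groups with index $m>m^*$), and these points appear nowhere in your covering. Since they may be present in $P^*(t_2)$, the bound $\optkz(P^*(t_2))\leq r\cdot 2^{m^*}$ does not follow as stated: you have potentially many more than $z$ uncovered points.

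The fix is that the same $2d$ balls already cover these points. The smaller-scale groups sit inside the removed lex-smallest octant of the scale-$m^*$ grid, whereas $p^*\in G_{i^*}^{m^*}$ lies outside that octant; hence for every $q\in G_{i^*}^{1}\cup\cdots\cup G_{i^*}^{m^*-1}$ the maximum coordinate gap $\mu_q=\max_\ell|q_\ell-p^*_\ell|$ satisfies $2^{m^*}\leq\mu_q\leq\lambda\cdot 2^{m^*}$ (the upper bound because the whole cluster fits in a hypercube of side $\lambda\cdot 2^{m^*}$). The algebra of Claim~\ref{clm:arbitrary:covered} only needs $1\leq\mu_q/2^{m^*}\leq\lambda$, not integrality, so it yields $\dist(q,c_{j_q}^{\pm})\leq r\cdot 2^{m^*}$ for these points as well. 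You should make this explicit; the paper's version defers the entire upper bound to the insertion-only argument, which implicitly treats all of $C_{i^*}\setminus\{p^*\}$ at once rather than only $G_{i^*}^{m^*}\setminus\{p^*\}$.
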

\begin{proof}
To prove the claim, assume for a contradiction that
there is a point~$p^* \in G_{i^*}^{m^*}$ that is not explicitly stored in~$P^*(t)$, where $p^*=(p_1^*,\ldots,p_d^*)$. 
First we delete all points of $G_i^m$ for all $m \geq m^*$ and all $i$.
Then the next $2d$ points that we insert 
are the points from $P^+ := \{p_1^+,\ldots,p_d^+ \}$ and $P^- := \{p_1^-,\ldots,p_d^- \}$.
Here $p_j^+ = (p_{j,1}^+,\ldots,p_{j,d}^+)$, where
$p_{j,j}^+:= p_j^*+2^{m^*}(h+r)$ and $p_{j,\ell}^+:=p_\ell^*$ for all $\ell\neq j$.
Similarly, $p_j^- = (p_{j,1}^-,\ldots,p_{j,d}^-)$ where $p_{j,j}^-:= p_j^*-2^{m^*}(h+r)$ and $p_{j,\ell}^-:=p^*_\ell$ 
for all $\ell\neq j$.
It will be convenient to assume that each point in $P^+\cup P^-$ has weight~2;
of course we could also insert two points at the same location (or,
almost at the same location). 
Note that this is similar to the construction used in the insertion-only lower bound, which was illustrated in Figure~\ref{fig:new:lb:streaming:missing}.

Let $P(t') := P(t)\cup P^- \cup P^+ \setminus \left(\bigcup_{m > m^*} G_{i}^{m}\right)$ and let $P^*(t')$ be the coreset of $P(t')$. 
Since $P^*(t)$ did not store $p^*$, we have $p^*\not\in P^*(t')$. We will show that this
implies  $P^*(t')$ underestimates the optimal radius by too much. We first give a
lower bound on $\optkz(P(t'))$. 
Using the same argument as in the proof of Claim \ref{lem:OPT:lower:bound} we can conclude
\begin{claiminproof}
$\optkz(P(t')) \geq  2^{m^*}\cdot (h+r)/2$.
\end{claiminproof}

Next we show that, because $P^*(t')$ does not contain the point~$p^*$, it
must underestimate $\optkz(P^*(t'))$ by too much. To this end, we first have the following claim,
which can be proved in the same way as Claim~\ref{claim:lb:streaming}.
\begin{claiminproof} 
$\optkz(P^*(t')) \leq 2^{m^*}\cdot r$.
\end{claiminproof}
Lemma~\ref{lem:r:bound} in the appendix
gives us that $r<(1-\eps)(r+h)/2$. Putting everything together, we have
\[
(1-\eps)\cdot\optkz(P(t')) \ \ \geq \ \  (1-\eps)\cdot 2^{m^*}(r+h)/2 \ \ > \ \ 2^{m^*}\cdot r \ \ \geq \ \   \optkz(P^*(t')) \enspace.
\]
However, this is a contradiction to our assumption that $P^*(t')$ is an $(\eps,k,z)$-coreset of $P(t')$.
Hence, if $P^*(t)$ does not store all points from each of the clusters~$C_i$,
then it will not be able to maintain an $(\eps,k,z)$-coreset.

\end{proof}

It remains to verify that the points of our construction are from
a $d$-dimensional discrete Euclidean space $[\Delta]^d=\{1,2,3,\cdots,\Delta\}^d$.
Note that all points in our construction can have integer coordinates. 
Thus, it is enough to show that $\Delta' \leq \Delta$,
where $\Delta'$ is the maximum of the value $\max_{1\leq i\leq d} |p_i-q_i|$ over all pairs of points $p,q$ used in the construction.
$P(t)$ consists of  $z$ outlier points and $k-2d+1$ clusters of side length $2^g\cdot \lambda$, where the distance between any two consecutive outliers or clusters is $2^{g+2}(h+r)$.
In the construction, we then also add sets $P^+$ and $P^-$, whose points are at distance at most $2^g(h+r)$ from some point $p^*$ in one of the clusters.
Therefore, $\Delta' \leq (k+z)\cdot 2^{g+2}(h+r) + k\cdot 2^g  \lambda$.
Recall that $\lambda = 1/(4d\eps)$ and $h = d(\lambda+2)/2$.
Thus, $\lambda/2 \leq h$ and then $\Delta' \leq (2k+z)\cdot 2^{g+2}(h+r)$.
Besides, $r \leq h$ since $r =\sqrt{h^2-2h+d}$. Therefore,
$$
\Delta' \leq
(2k+z)\cdot 2^{g+2}(h+r) \leq
(2k+z)\cdot 2^{g+2}(2h) = (2k+z)\cdot 2^{g+2} d(\lambda+2) =
$$
$$
(2k+z)\cdot 2^{g+2}\cdot d\left(\frac{1}{4d\eps}+2\right) =
(2k+z)\cdot 2^{g+2}\left(\frac{1}{4\eps}+2d\right) \enspace .
$$
Hence, $\log{\Delta'}  \leq  2 + g + \log{((2k+z)(\frac{1}{4\eps}+d))}$.
Recall that $g= \frac{1}{2}\log{\Delta}-2$ and we assume $\Delta \geq ((2k+z)(\frac{1}{4\eps}+d))^2$, therefore, $\log{((2k+z)(\frac{1}{4\eps}+d))}\leq \frac{1}{2}\log{\Delta}$.
Thus $\log{\Delta'} \leq \log{\Delta}$, which means $\Delta' \leq \Delta$. 
This finishes the proof of Theorem 27.

\section{A lower bound for the sliding-window model}
\label{app:lb:sliding}
In this section, we show that
any deterministic algorithm in the sliding-window model that guarantees a $(1\pm\eps)$-approximation 
for the $k$-center problem with outliers in~$\Reals^d$ must use $\Omega((kz/\eps^d)\log \sigma)$ space, 
where $\sigma$ is the ratio of the largest and smallest distance between any two points in the stream. 
Recently De~Berg, Monemizadeh, and Zhong~\cite{DBLP:conf/esa/BergMZ21} developed 
a sliding-window algorithm that uses $O((kz/\eps^d)\log \sigma)$ space. 
Our lower bound shows the optimality of their algorithm 
and gives a (negative) answer to a question posed by De~Berg~\etal~\cite{DBLP:conf/esa/BergMZ21}, 
who asked whether there is a sketch for 
this problem whose storage is polynomial in~$d$. 




\mypara{Lower-bound setting.} 
Let $P:= \langle p_1,p_2,\ldots \rangle$ be a possibly infinite stream of points from
a metric space~$X$ of doubling dimension~$d$ and spread ratio~$\sigma$, where~$d$ is considered to
be a fixed constant. We denote the arrival time
of a point~$p_i$ by $\at(p_i)$. We say that $p_i$ \emph{expires} at 
time~$\et(p_i) := \at(p_i) + W$, where $W$ is the given length of the time window.
To simplify the exposition, we consider the $L_\infty$-distance instead of the Euclidean distance,  
where the $L_{\infty}$-distance between two points $p,q\in \Reals^d$  is defined as $L_\infty(p,q) = \max_{i=1}^d |p_i-q_i|$. 
Note that the doubling dimension of $\Reals^d$ under the $L_{\infty}$-metric is~$d$.

The constructions we presented earlier for the insertion-only and the fully-dynamic streaming model, gave lower bounds on the size of an $(\eps,k,z)$-coreset maintained by the algorithm. For the sliding-window model, we will use the lower-bound model introduced by De~Berg, Monemizadeh, and Zhong~\cite{DBLP:conf/esa/BergMZ21}. This model gives lower bounds on \emph{any} algorithm that maintains a $(1\pm\eps)$-approximation of the radius of an optimal $k$-center clustering with $z$ outliers. Such an algorithm may do so by maintaining an $(\eps,k,z)$-coreset, but it may also do it in some other (unknown) way. The main restriction is that the algorithm can only change its answer when either a new point arrives or at some explicitly stored expiration time. More precisely, their lower-bound model is as follows \cite{DBLP:conf/esa/BergMZ21}.

Let $S(t)$ be the collection of objects being stored at time~$t$. These objects may be points, weighted points, balls, or anything else that the algorithm needs to store to be able to approximate the optimal radius. The only conditions on $S(t)$ are as follows.
\begin{itemize}
\item Each object in $S(t)$ is accompanied by an expiration time, which is equal to the expiration 
      time of some point $p_i\in P(t)$. 
\item Let $p_i\in P(t)$. If no object in $S(t)$ uses $\et(p_i)$
      as its expiration time, then no object in $S(t')$ with $t'>t$ can use $\et(p_i)$
      as its expiration time. (Once an expiration time has been discarded, it cannot be recovered.)
\item The solution reported by the algorithm is uniquely determined by $S(t)$, and
      the algorithm only modifies $S(t)$ when a new point arrives or when an object in~$S(t)$ expires.
\item The algorithm is deterministic and oblivious of future arrivals. In other words, the set
      $S(t)$ is uniquely determined by the sequence of arrivals up to time~$t$, and the
      solution reported for $P(t)$ is uniquely determined by~$S(t)$.
\end{itemize}
The storage used by the algorithm is defined as the number of objects in~$S(t)$.
The algorithm can decide which objects to keep in $S(t)$ in any way it wants; it may even 
keep an unbounded amount of extra information in order to make its decisions. The algorithm can 
also derive a solution for $P(t)$ in any way it wants, as long as the solution is valid and uniquely
determined by~$S(t)$. 

\begin{theorem}[Lower bound for sliding window]
\label{thm:lower:bound:sliding}
Let $k \geq 2d$, $0 < \eps \leq 1/24$ and $\sigma \geq {(kz/\eps)}^2$. 
Any deterministic $(1\pm\eps)$-approximation algorithm in the sliding-window model
that adheres to the model described above
and solves the $k$-center problem with $z$ outliers in the  metric space $(\Reals^d, L_\infty)$
must use $\Omega((kz/\eps^d)\log \sigma)$ space, where 
$\sigma$ is the ratio of the largest and smallest distance between any two points in the stream.
\end{theorem}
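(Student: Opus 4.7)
The plan is to adapt the sliding-window lower-bound framework of De~Berg, Monemizadeh, and Zhong~\cite{DBLP:conf/esa/BergMZ21} from $\Reals^1$ to $(\Reals^d, L_\infty)$, replacing their per-scale factor of $1/\eps$ by $1/\eps^d$ via the $d$-dimensional rigidity gadget used in Lemma~\ref{lem:lb:z:small}. I organize the stream into $J=\Theta(\log\sigma)$ \emph{scales}, indexed by $j=0,\ldots,J-1$, where scale~$j$ operates at a distance magnitude of~$2^j$. The exponential separation between scales, combined with the assumption $\sigma\geq (kz/\eps)^2$, will guarantee that at any time the $k$ clusters at scale $j$ sit entirely inside a single cluster at every larger scale and entirely separate the clusters at every smaller scale; this is what will make the scales behave independently with respect to $\optkz$.

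Within a single scale~$j$, I would place the $d$-dimensional grid configuration of Lemma~\ref{lem:lb:z:small}: $k-2d+1$ clusters, each a $d$-dimensional grid of $(\lambda+1)^d=\Theta(1/\eps^d)$ points on the lattice $\{0,1,\ldots,\lambda\}^d$ scaled by~$2^j$, where $\lambda := 1/(4d\eps)$, and with the inter-cluster distance proportional to $2^j(h+r)$. This yields $\Theta(k/\eps^d)$ ``significant positions'' at scale~$j$. To multiply this count by~$z$, I would replace each significant position by a \emph{stack} of $z+1$ arrivals at (essentially) the same spatial location but at pairwise distinct arrival times, yielding pairwise distinct expiration times. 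Since, by the lower-bound model of Section~\ref{app:lb:sliding}, any expiration time that is not carried by some object in~$S(t)$ is lost forever, this will force~$S(t)$ to carry at least one distinct tagged object per stack per expiration epoch of interest.

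The main technical step is to show that ``forgetting'' any one of the $\Omega(kz/\eps^d)$ stacked points at any one scale is detectable by the adversary, i.e.\ that there is a legal continuation of the stream after which any algorithm adhering to the model must err on $\optkz$ by more than a $(1\pm\eps)$ factor. For a single missing grid position this reduces to the missing-point argument of Lemma~\ref{lem:lb:z:small}: the adversary would expire the other stacks at the scale in question and then insert the $2d$ auxiliary points $P^+\cup P^-$ around the missing position, so that the true optimum is $\geq 2^j(h+r)/2$ while the algorithm sees an optimum of $\leq 2^j\cdot r$, with the gap exceeding the approximation slack by Lemma~\ref{lem:r:bound}. Staggering expiration times within each stack will let me re-invoke this one-missing-point argument $\Omega(z)$ times per grid position, and the scale-independence from Step~1 will let me invoke it $\Omega(\log\sigma)$ times in~$j$; thus the total count of distinct tags that~$S(t)$ must carry at a carefully chosen reference time~$t$ is $\Omega\bigl((kz/\eps^d)\log\sigma\bigr)$.

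The hard part will be orchestrating the global arrival/expiration schedule so that, at a single reference time~$t$, all $\Omega((kz/\eps^d)\log\sigma)$ stacked points across all $J$ scales are simultaneously inside the window and simultaneously ``critical'' for some legal continuation of the stream; this requires interleaving the $J$ scales so that points destined to expire later arrive earlier, and placing the adversary's scale-$j$ continuation gadgets at time offsets that fit within the length-$W$ sub-window of scale~$j$. A subtlety is verifying that, when the adversary extends the stream to expose a forgotten point at scale~$j$, the ``noise'' contributed by active points at other scales does not mask the $(r+h)/2$ vs.\ $r$ gap; here the spread-ratio hypothesis $\sigma\geq(kz/\eps)^2$ together with the $L_\infty$-metric (whose balls are axis-aligned cubes, matching our grid) will ensure that the $2d$ auxiliary points at scale~$j$ are simultaneously much closer to each other than to any active point at a different scale, so the analysis of Lemma~\ref{lem:lb:z:small} transfers with only cosmetic changes.
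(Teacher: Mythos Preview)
Your plan has a genuine gap in how it handles the outlier budget. You propose re-using the $2d$ auxiliary points $P^+\cup P^-$ from Lemma~\ref{lem:lb:z:small} (each of weight~$2$, so $4d$ points in total). But in the sliding-window setting the argument is ``$\optkz$ before $\et(p^*)$ versus $\optkz$ after $\et(p^*)$'', and for the \emph{before} bound you need every region that must be hit by a ball to contain at least $z{+}1$ points; otherwise the solver can simply declare all $4d$ auxiliary points to be outliers whenever $z\geq 4d$, cover your single surviving stack with one tiny ball, and drive $\optkz$ down to whatever the other scales dictate. Lemma~\ref{lem:lb:z:small} avoids this by carrying $z$ explicit outlier points $o_1,\ldots,o_z$ that absorb the entire budget, but those have no analogue in your scheme once you let ``the other stacks at the scale expire''. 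The paper's proof solves this by making each of the $2d$ auxiliary gadgets a \emph{set} $P_\alpha^{\pm}$ of $z{+}1$ points, and likewise placing the $z{+}1$ points of each subgroup at \emph{distinct} nearby locations rather than stacking them at one spot; the before/after gap then becomes the clean $L_\infty$ ratio $(2\lambda{-}1)/(2\lambda)=1-4\eps$ instead of the Euclidean $(h{+}r)/2$ versus $r$.

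A second issue is that you invoke the Euclidean gadget of Lemma~\ref{lem:lb:z:small} (with $h$, $r$, and Lemma~\ref{lem:r:bound}) for a statement about $(\Reals^d,L_\infty)$ and assert it transfers ``with only cosmetic changes''. It does not: the covering claim that $2d$ balls of radius $r$ centered at $c_j^{\pm}$ cover $C_{i^*}\setminus\{p^*\}$ relies on a Euclidean Pythagorean calculation, and under $L_\infty$ (where balls are cubes) the analogous statement and the resulting gap constants are different. The paper sidesteps this entirely with an $L_\infty$-native construction: subgroups of $L_\infty$-diameter $2^j\zeta$ sit in the odd cells of a $(2\lambda{-}1)^d$ grid of cell side $2^j\zeta$, and the nested ``missing octant'' realizes your scales inside a single cluster. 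Finally, note that the required contradiction is not ``the algorithm \emph{sees} an optimum $\leq 2^j r$'' (there is no coreset here), but rather that the algorithm reports the same value at $\et(p^*)^-$ and $\et(p^*)^+$ while the true $\optkz$ changes by a factor exceeding $(1{-}\eps)/(1{+}\eps)$.
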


\begin{proof}
Consider a deterministic $(1\pm\eps)$-approximation algorithm for the $k$-center 
clustering with $z$ outliers in the sliding-window model. With a slight abuse of notation, we let $S(t)$ be the set of expiration times that the algorithm maintains at time $t$.
In the following, we present a set of points $P(t)$ 
such that the algorithm needs to store $\Omega((kz/\eps^d)\log{\sigma})$ expiration times.

Let $\lambda := 1/(8\eps)$, and assume without loss of generality that~$\lambda$ is an odd integer. 
Let $g:=\frac{1}{2}\log{\sigma} - 1$ and $s:=\lambda^d-(\frac{\lambda+1}{2})^d$. 
Let $\zeta := \floor{\sqrt[d]{z}}$, and observe that $\zeta^d < z+1 \leq (\zeta+1)^d$.
Instance $P(t)$ consists of $k-2d+1$ clusters $C_1,\ldots,C_{k-2d+1}$ at distance $3\cdot 2^g\zeta\cdot 2\lambda$ from each other.
Each cluster $C_i$ consists of $g$ groups $G^1_i,\ldots,G_i^g$, 
and each group $G_i^j$ consists of $s$ subgroups $G_i^{j,1},\ldots,G_i^{j,s}$. 
Finally, each subgroup consists of $z+1$ points. 
Figure ~\ref{fig:lb:sliding:config} shows an overview of the construction, which we describe in more detail next.
Consider a grid $\mathcal{G}^j$ whose cells have side length $2^j$ and which has $(\zeta+1)^d$ grid points.  
The points of each subgroup $G_i^{j,\ell}$ are the lexicographically smallest $z+1$ points of this grid $\mathcal{G}^j$.
(That is, the first $z+1$ points in the lexicographical order of the coordinates). 
Recall that we consider the $L_\infty$-distance instead of the Euclidean distance. 
Therefore, the diameter of the subgroup $G_i^{j,\ell}$ is $2^j\zeta$.

Now we describe the relative position of the subgroups in a group $G_i^{j}$.
Let $\Pi^j$ be a $d$-dimensional grid consisting of $(2\lambda-1)^d$ cells that have side length $2^j\zeta$.
We label the cells in $\Pi$ as $\pi=(\pi_1,\cdots,\pi_d)$, where $1 \leq \pi_i \leq 2\lambda-1$ for all $i\in[d]$. 
For instance, for $d=2$ the bottom-left cell would be labeled $(1,1)$.
We say the cell $\pi=(\pi_1,\cdots,\pi_d)$ is an \textit{odd cell}, if $\pi_i$ is odd for all $i\in [d]$. Hence, there are $\lambda^d$ odd cells in $\Pi^j$. 
Let the set $\Gamma^j$ be equal to $\Pi^j$ except that the lexicographically smallest ``octant''. More formally, 
$\Gamma^j = \Pi^j \setminus \{(\pi_1,\cdots,\pi_d) \in \Pi^j\ \text{ : } \forall_{i\in[d]} \pi_i \leq \lambda  \}$. Then $\Gamma^j$ is of size $\lambda^d-(\frac{\lambda+1}{2})^d=s$. 
The subgroups $G_i^{j,1},\cdots,G_i^{j,s}$ are placed in the cells of $\Gamma^j$, and groups $G_i^{j-1},\cdots,G_i^{1}$ are recursively placed in the omitted octant. See Figure \ref{fig:lb:sliding:config}.
Therefore, the diameter of group $G_i^j$ is $2^j\zeta\cdot \lambda + 2^j\zeta\cdot(\lambda -1 ) = 2^j\zeta\cdot(2\lambda - 1)$.


\begin{figure}
\begin{center}
\includegraphics{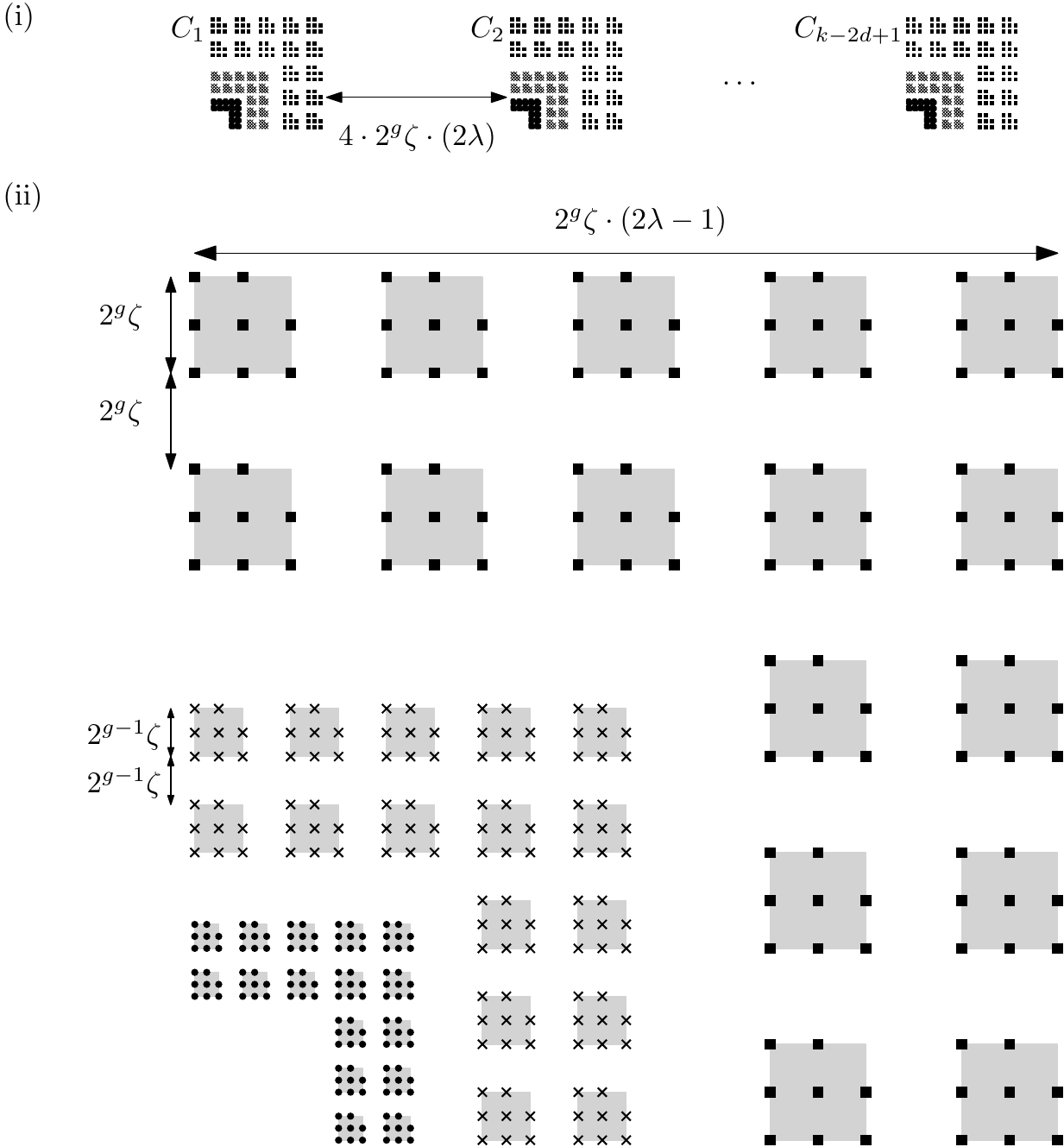}
\end{center}
\caption{Illustration of the lower bound in
Theorem \ref{thm:lower:bound:sliding}.
Recall that $\lambda=\Theta(1/\eps)$ and $g=\Theta(\log{\sigma})$.
Part~(i) shows the global construction, part~(ii) shows an example of a cluster $C_i$, where $z=7$, $\lambda=3$ and $g=3$. The points in groups $G_i^{1}$, $G_i^{2}$ and $G_i^{3}$ are showed by  disks, crosses and squares respectively.}
\label{fig:lb:sliding:config}
\end{figure}

Next we explain the order of arrivals.
First, the subgroups $G_{k-2d+1}^{g,s},\ldots,G_{1}^{g,s}$ arrive. 
Then the subgroups $G_{k-2d+1}^{g,s-1},\ldots,G_{1}^{g,s-1}$ arrive, 
and so on.
More formally, $G_{i}^{j,\ell}$ arrives before $G_{i'}^{j',\ell'}$ if and only if $j > j'$ 
or ($j=j'$ and $\ell > \ell'$) or ($j=j'$ and $\ell=\ell'$ and $i>i'$).

Now, we claim that the size of $S(t)$ must be $\Omega((kzg)/\eps^d) = \Omega((kz\cdot \log{\sigma})/\eps^d)$.

\begin{claim}
\label{claim:sliding}
Let $p \in G_{i}^{j,\ell}$ be an arbitrary point in $P(t)$ such that $j>1$ or $\ell>1$, and $\et(p) > t + (2d(z+1)+z)$.
Then, $\et(p)$ must be in $S(t)$.

\end{claim}
\begin{proofinproof}
For the sake of contradiction, assume there is a point $p^* \in G_{i^*}^{j^*, \ell^*}$, where $j^*>1$ or $\ell^*>1$, and $\et(p^*) > t + (2d(z+1)+z)$, while $\et(p^*)$ is not explicitly stored in $S(t)$.
Let $t^-_{p^*}$ and $t^+_{p^*}$ be the time just before and just after the expiration of the point $p^*$ respectively.
As $\et(p^*)\notin S(t)$, then the sketch that the deterministic algorithm maintains at time $t^-_{p^*}$ and $t^+_{p^*}$ is the same, 
and so, it reports the same clustering for both $P(t^-_{p^*})$ and $P(t^+_{p^*})$. 
However, we show it is possible to insert a point set after the points of $P(t)$ have been inserted such that ${\optkz(P(t^+_{p^*}))}/{\optkz(P(t^-_{p^*}))} > 1-3\eps$.
Thus either at time $t^+_{p^*}$ or at time $t^-_{p^*}$, the answer of the algorithm cannot be a $(1\pm\eps)$-approximation.

Recall that the group $G_{i^*}^{j^*}$ consists of $s$ subgroups of diameter $2^{j^*}\zeta$ 
in a $\lambda^d$ grid-like fashion, and the diameter of 
$G_{i^*}^{j^*}$ is $2^{j^*}\zeta\cdot(2\lambda-1)$. 
Observe that we consider the $L_\infty$-distance instead of the Euclidean distance. 
First, we define $x_\mathrm{min}^*(\alpha)$ and $x_\mathrm{max}^*(\alpha)$.
For $\alpha \in [d]$, we define
\[
x_\mathrm{min}^*(\alpha) := \min\{ x_\alpha \ | \ (x_1,\dots,x_\alpha,\dots,x_d) \in G_{i^*}^{j^*,\ell^*} \} \enspace ,
\]
\[
x_\mathrm{max}^*(\alpha) := \max\{ x_\alpha \ | \ (x_1,\dots,x_\alpha,\dots,x_d) \in G_{i^*}^{j^*,\ell^*}  \} \enspace .
\]

\begin{figure}
\begin{center}
\includegraphics{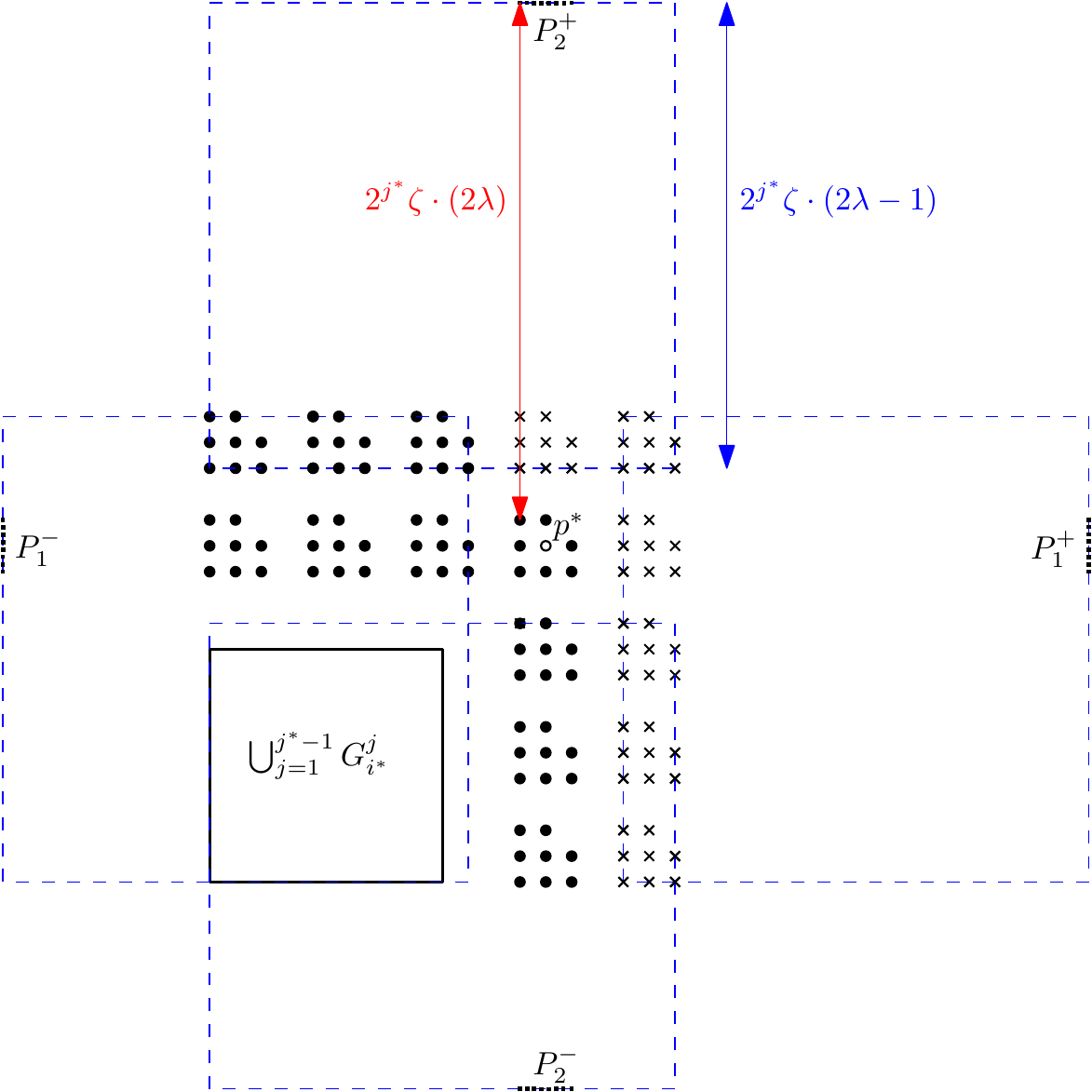}
\end{center}
\caption{
If the expiration time of the point $p^*\in G_{i^*}^{j^*,\ell^*}$ is not stored,
 we insert the $2d$ point sets $P^+_1,\ldots,P^+_d$ and $P^-_1,\ldots,P^-_d$.
The points that have expired (and are not re-inserted) before the expiration of $p^*$ are shown by crosses.
The optimal radius just before the expiration of $p^*$ is $2^{j^*}\zeta\cdot(2\lambda)$.
However, since we can consider all points in $G_{i^*}^{j^*,\ell^*}\setminus\{p^*\}$ 
as outliers after the expiration of $p^*$, the optimal radius 
just after the expiration of $p^*$ is $2^{j^*}\zeta(2\lambda-1)$, (dashed balls).}
\label{fig:lb:sliding:missing}
\end{figure}

Now, we define the point sets $P^+_1,\ldots,P^-_d$ and $P^-_1,\ldots,P^-_d$ as follows (also see Figure \ref{fig:lb:sliding:missing}).
For every $\alpha\in [d]$, $P^+_\alpha=\{p^{+,0}_{\alpha},\dots,p^{+,z}_{\alpha}\}$ 
and for every $0\leq \iota \leq z$, we have $p^{+,\iota}_{\alpha} = (p^{+,\iota}_{\alpha,1},\dots,p^{+,\iota}_{\alpha,d})$ where
\[ p^{+,\iota}_{\alpha,\alpha} = x_\mathrm{max}^*(\alpha)+2^{j^*}\zeta\cdot(2\lambda),
\text{ and } p^{+,\iota}_{\alpha,\beta} = x_{min}^*(\beta)+ \frac{\iota(x_\mathrm{max}^*(\beta)-x_{min}^*(\beta))}{z} \text{ for all } \beta \neq \alpha \enspace .
\]

Similarly, for all $\alpha\in [d]$, $P^-_\alpha=\{p^{-,0}_{\alpha},\dots,p^{-,\iota}_{\alpha},\dots,p^{-,z}_{\alpha}\}$, where for each point $p^{-,\iota}_{\alpha} = (p^{-,\iota}_{\alpha,1},\dots,p^{-,\iota}_{\alpha,d})$ we have
\[ p^{-,\iota}_{\alpha,\alpha} = x_\mathrm{min}^*(\alpha)-2^{j^*}\zeta\cdot(2\lambda),
\text{ and } p^{-,\iota}_{\alpha,\beta} = x_\mathrm{min}^*(\beta)+ \frac{\iota(x_\mathrm{max}^*(\beta)-x_\mathrm{min}^*(\beta))}{z} \text{ for all } \beta \neq \alpha \enspace .
\]
Hence, $P^+_\alpha$ (and $P^-_\alpha$) consists of $z+1$ points
at distance $2^{j^*}\zeta\cdot(2\lambda)$ of $G_{i^*}^{j^*,\ell^*}$. 
Moreover, $x_\mathrm{min}^*(\beta) \leq p^{+,\iota}_{\alpha,\beta},\ p^{-,\iota}_{\alpha,\beta} \leq x_\mathrm{max}^*(\beta)$ 
if $\beta \neq \alpha$.
We insert all points of the sets $P^+_1,\ldots,P^-_d$ and $P^-_1,\ldots,P^-_d$.
Moreover, for each point in $G_{i^*}^{j^*,\ell^*}\setminus \{p^*\}$, we re-insert it after its expiration.
Note that as we assume $\et(p^*) > t + (2d(z+1)+z)$, we have enough time from $t$ to $\et(p^*)$ to insert all these points.

As we assume $j^*>1$ or $\ell^*>1$ then each cluster $C_i$ contains at least $z+1$ points at time $t^-_{p^*}$ that are not expired.
In addition, each point set $G_{i^*}^{j^*,\ell^*}$,
$P^+_1,\ldots,P^+_d$, and
$P^-_1,\ldots,P^-_d$ consists of $z+1$ points at time $t^-_{p^*}$ that are not expired. 
As any pairwise distance between these $2d+1$ point sets is at least $2^{j^*}\zeta\cdot (2\lambda)$, then $\optkz(P(t^-_{p^*})) \geq 2^{j^*} \zeta \cdot\lambda$.
On the other hand, since $p^*$ is expired at time $t^+_{p^*}$,  
we consider the points of the set $G_{i^*}^{j^*,\ell^*}$ 
that are not expired (note that there are $z$ such points) 
as the outliers at time $t^+_{p^*}$ (see Figure \ref{fig:lb:sliding:missing}), thus
$\optkz(P(t^+_{p^*})) \leq 2^{j^*}\zeta\cdot(2\lambda-1)/2$.
Putting everything together we have
\[
\frac{\optkz(P(t^+_{p^*}))}{\optkz(P(t^-_{p^*}))}
\ \leq \ \frac{2^{j^*}\zeta\cdot(2\lambda-1)/2}{2^{j^*}\zeta\cdot\lambda}
\ = \ \frac{2\lambda - 1}{2\lambda}
\ = \ 1 -4\eps
\ < \ 1 - 3\eps \enspace .
\]
Which is a contradiction.
\end{proofinproof}

It remains to show that the spread ratio of our construction is not more than $\sigma$.
Let $\sigma'$ be the ratio of the largest and smallest distance between any two points in our construction.
We show $\sigma' \leq \sigma$.
The diameter of each cluster $C_1,\ldots,C_{k-2d+1}$ is $2^g\zeta\cdot(2\lambda-1)$, and 
every two consecutive clusters are at distance $3\cdot2^g\zeta\cdot(2\lambda)$ from each other. 
Hence, the largest distance between any two points in the stream is less than $k\cdot 4\cdot 2^g\zeta\cdot(2\lambda)$. 
Besides, the points in sets $P^+_1,\ldots,P^-_d$ and $P^-_1,\ldots,P^-_d$ that we defined in Claim \ref{claim:sliding} are at distance at least $(x_\mathrm{max}^*(\alpha)-x_\mathrm{min}^*(\alpha)))/z = 2^{j^*}\zeta/z$ from each other. 
Therefore, the smallest distance between any two points in $P^+_1\cup\ldots\cup P^-_d \cup P^-_1\cup\ldots\cup P^-_d$ is at least $2\zeta/z$.
Moreover,  the smallest distance between any two points in  $C_1\cup\ldots\cup C_{k-2d+1}$ is $2^1$, and $2 
\geq 2\zeta/z$ since $\zeta=\sqrt[d]{z}$. 
Then we have $\sigma'  \leq   \frac{k\cdot 4\cdot2^g\zeta\cdot(2\lambda)}{2\zeta/z}  =  4\cdot 2^g kz\cdot \delta = 2\cdot 2^g kz/\eps$.
Hence, $\log{\sigma'}  \leq  1 + g + \log{(kz/\eps)}$.
Recall $g= \frac{1}{2}\log{\sigma}-1$.
Since we assume $\sigma \geq {(kz/\eps)}^2$, we therefore have $\log{(kz/\eps)}\leq \frac{1}{2}\log{\sigma} $.
Thus $\log{\sigma'} \leq \log{\sigma}$, which means $\sigma' \leq \sigma$. This finishes the proof of the theorem.
\end{proof}

\section{More MPC algorithms}
\label{sec:more:MPC}
In this section, we present two more MPC algorithms: a randomized $1$-round algorithm and a multi-round deterministic algorithm. The former algorithm is quite similar to an algorithm of Ceccarello~\etal~\cite{DBLP:journals/pvldb/CeccarelloPP19}, but by \nw{a more clever coreset construction,} we obtain an improved bound. The latter algorithm provides a trade-off between the number of rounds and the space usage.
\subsection{A randomized $1$-round MPC algorithm}
\label{appendix:subsec:1:round}
In this section, we present our $1$-round randomized algorithm. The algorithm itself
does not make any random choices; the randomization is only in the assumption that the distribution
of the set~$P$ over the machines~$M_i$ is random. More precisely, we assume
each point $p\in P$ is initially assigned 
uniformly at random to one of the $m$ machines~$M_i$.
The main observation is Lemma~\ref{lem:bound:outliers:per:machine} that with high 
probability\footnote{We say an event occurs with high probability
if it occurs with a probability of at least $1-1/n^2$.},
the number of outliers assigned to an arbitrary worker machine $M_i$ is at most 
$z' =\min(\outliersBound, z)$. 
As shown in Algorithm~\ref{alg:1:round},
each machine~$M_i$ therefore computes an $(\eps, k, z')$-mini-ball covering of $P_i$, and sends it to the coordinator.
By Lemma~\ref{lem:union:coreset} the union of the received mini-ball coverings will be an $(\eps, k, z)$-mini-ball covering of $P$,
with high probability.
The coordinator then reports an $(\eps, k, z)$-mini-ball covering of this union as the final coreset.

\begin{algorithm}[h] 
\caption{{\sc{RandomizedMPC}}: A randomized $1$-round algorithm to compute an $(\eps,k,z)$-coreset} 
\label{alg:1:round}
\textbf{Round 1, executed by each machine $M_i$:} \\
\emph{Computation:}
\begin{algorithmic}[1]
\State $z' \gets \min(\outliersBound, z)$.
    \State $P_i^* \gets $ \computeCoreset$(P_i, k, z', \eps)$.
\end{algorithmic}

\emph{Communication:}
\begin{algorithmic}[1]
\State Send $P_i^*$ to the coordinator.
\end{algorithmic}

\vspace*{2mm}
\textbf{At the coordinator:}  Collect all mini-ball coverings~$P_i^*$ and report \computeCoreset$(\bigcup_i P_i^*,k,z,\eps)$ as the final mini-ball covering.
\end{algorithm}
Consider an optimal solution for the $k$-center problem with $z$ outliers on~$P$.
Let $\Bopt$ be the set of $k$~balls in this optimal solution and let $\Pout\subset P$ 
be the outliers, that is, the points not covered by the balls in~$\Bopt$.
Lemma \ref{lem:bound:outliers:per:machine} states the number of outliers that are assigned to 
each machine
is concentrated around its expectation.
The lemma was already observed by Ceccarello~\etal~\cite[Lemma~7]{DBLP:journals/pvldb/CeccarelloPP19}, 
but we present the proof for completeness in the appendix.
\begin{lemma}[\cite{DBLP:journals/pvldb/CeccarelloPP19}]
\label{lem:bound:outliers:per:machine}
$\Pr{\forall_{1\leq i \leq m} |P_i \cap \Pout|  \leq \outliersBound} \geq 1-1/n^2$.
\end{lemma}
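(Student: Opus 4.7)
\medskip

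The plan is to treat the number of outliers on each machine as a sum of independent Bernoulli indicators, apply a Chernoff-type concentration inequality for each machine, and then union-bound over the $m \le n$ machines.

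More concretely, fix a machine $M_i$ and for each outlier $p \in \Pout$ let $Y_{i,p}$ be the indicator of the event that $p$ is assigned to~$M_i$. Since points are placed independently and uniformly at random among the $m$ machines, the $Y_{i,p}$ are independent $\{0,1\}$ variables with $\Pr{Y_{i,p}=1}=1/m$, so $X_i := |P_i \cap \Pout| = \sum_{p\in\Pout} Y_{i,p}$ has expectation $\mu_i = |\Pout|/m \le z/m$. The quantity we want to upper bound, $t := 6z/m + 3\log n$, satisfies $t \ge 6\mu_i$, which is the regime where the ``high-deviation'' form of the Chernoff bound is most convenient.

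Using the standard bound $\Pr{X_i \ge t} \le (e\mu_i/t)^{t}$, valid for $t > \mu_i$, and substituting $e\mu_i/t \le e/6 < 1/2$, I would get
\[
\Pr{X_i \ge 6z/m + 3\log n} \;\le\; \left(\tfrac{1}{2}\right)^{3\log n} \;=\; \tfrac{1}{n^{3}}.
\]
A union bound over the $m \le n$ machines then yields
\[
\Pr{\exists\, i: X_i > 6z/m + 3\log n} \;\le\; \tfrac{m}{n^{3}} \;\le\; \tfrac{1}{n^{2}},
\]
which is the required tail bound.

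The only mildly delicate point is simply making sure that the constants line up: the factor $6$ in $6z/m$ has to be large enough (relative to $e$) to push $e\mu_i/t$ safely below $1/2$, and the factor $3$ in $3\log n$ has to be large enough so that after taking the $t$-th power and applying the union bound over $m \le n$ machines we still land at $1/n^{2}$. Neither step is an obstacle; the proof is essentially a direct Chernoff-plus-union-bound calculation, which is why Ceccarello~\etal\ state the lemma without elaboration.
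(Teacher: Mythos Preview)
Your proposal is correct and follows essentially the same approach as the paper: both set up $X_i$ as a sum of independent Bernoulli indicators with mean at most $z/m$, apply a Chernoff bound in the regime $t \ge 6\mu_i$ to get a per-machine failure probability of at most $1/n^{3}$, and finish with a union bound over the machines. The only cosmetic difference is the form of the Chernoff inequality used---the paper cites the version $\Pr{X \ge \tau} \le 2^{-\tau}$ for $\tau \ge 6\,\Ex{X}$, whereas you use the equivalent $(e\mu/t)^{t}$ form---but the constants and the resulting $1/n^{2}$ bound line up identically.
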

Now we can prove that Algorithm \ref{alg:1:round} computes a coreset for $k$-center with $z$ outliers in a single round. 
\begin{theorem}[Randomized $1$-Round  Algorithm]
\label{thm:1:round}
Let $P \subseteq X$ be a point set of size $n$ in a metric space $(X,\dist)$ of doubling dimension~$d$. 
Let $k,z \in \mathbb{N}$ be two natural numbers, and let $0 < \eps \le 1$ be an error parameter. 
Assuming~$P$ is initially distributed randomly over the machines,
there exists a randomized algorithm that computes an $(\eps,k,z)$-coreset of $P$ in the MPC model in one round of communication,
using $m= O(\sqrt{n\eps^d/k})$ worker machines with $O(\sqrt{nk/\eps^d})$ local memory,
and a coordinator with 
$O(\sqrt{nk/\eps^d} + \sqrt{n\eps^d/k}\cdot\min(\log{n}, z)+z)$ local memory.
\end{theorem}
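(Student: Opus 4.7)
The plan is to combine the mini-ball covering machinery of Section~\ref{sec:mini-balls} with the concentration bound on the number of outliers per machine. Because $P$ is initially distributed uniformly at random, Lemma~\ref{lem:bound:outliers:per:machine} guarantees that, with probability at least $1-1/n^2$, every worker $M_i$ receives at most $\outliersBound$ outliers from a fixed optimal solution for the $k$-center problem with $z$ outliers on $P$. Hence, setting $z' := \min(\outliersBound, z)$, we obtain $\opt_{k,z'}(P_i) \leq \optkz(P)$ for all $i$ with high probability, which is exactly the hypothesis required by the Union Property (Lemma~\ref{lem:union:coreset}).

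For correctness, I would condition on this event and argue as follows. By Lemma~\ref{lem:compress}, each set $P_i^* = \computeCoreset(P_i, k, z', \eps)$ is an $(\eps, k, z')$-mini-ball covering of $P_i$. Lemma~\ref{lem:union:coreset} then yields that $\bigcup_{i=1}^m P_i^*$ is an $(\eps, k, z)$-mini-ball covering of $P$. The coordinator compresses this union by invoking \computeCoreset once more. By Lemma~\ref{lem:MBC:is:coreset}, $\optkz(\bigcup_i P_i^*)\leq (1+\eps)\optkz(P)$, so the second call produces an $(\eps, k, z)$-mini-ball covering of $\bigcup_i P_i^*$, which by the Transitive Property (Lemma~\ref{lem:merge:coreset}) is a $(2\eps+\eps^2, k, z)$-mini-ball covering of $P$. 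Rescaling the input parameter $\eps$ by a constant factor and invoking Lemma~\ref{lem:MBC:is:coreset} finally yields an $(\eps, k, z)$-coreset of $P$.

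For the storage analysis, choosing $m = \Theta(\sqrt{n\eps^d/k})$ workers gives each worker $|P_i| = n/m = \Theta(\sqrt{nk/\eps^d})$ points, which suffices to run \greedy and \computeCoreset locally. The coordinator receives $m$ mini-ball coverings, each of size at most $k(12/\eps)^d + z' = O(k/\eps^d + z')$ by Lemma~\ref{lem:compress}. Summing over all machines gives
\[
m \cdot O\!\left(\tfrac{k}{\eps^d} + z'\right) \;=\; O\!\left(\sqrt{\tfrac{nk}{\eps^d}}\right) + O(m z').
\]
A short case distinction on whether $\outliersBound \leq z$ shows that $m z' \leq 6z + 3m\log n$ in one branch and $m z' \leq mz$ in the other; combining these bounds gives $mz' = O(z + m\cdot \min(\log n, z))$, which matches the claimed coordinator storage.

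The main subtlety, to my mind, is the choice of the per-machine outlier parameter~$z'$. Taking $z'=\outliersBound$ alone is incorrect because $\outliersBound$ may exceed $z$, which would unnecessarily blow up the mini-ball covering size; taking $z'=z$ alone is too wasteful when outliers are naturally concentrated. The capped choice $z'=\min(\outliersBound, z)$ handles both regimes, and crucially the hypothesis $\opt_{k,z'}(P_i) \leq \optkz(P)$ needed by Lemma~\ref{lem:union:coreset} holds in either branch: when $z'=\outliersBound$ it is precisely the high-probability event of Lemma~\ref{lem:bound:outliers:per:machine}, and when $z'=z$ it holds trivially since $P_i \subseteq P$.
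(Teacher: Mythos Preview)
Your proposal is correct and follows essentially the same approach as the paper's proof: both condition on the high-probability event of Lemma~\ref{lem:bound:outliers:per:machine}, apply Lemma~\ref{lem:union:coreset} to the per-machine mini-ball coverings, compress once more at the coordinator via Lemma~\ref{lem:merge:coreset}, and bound the coordinator storage by the same case analysis on $z'$. One small imprecision: under random distribution the worker load $|P_i|$ is not exactly $n/m$ but only $O(n/m)$ with high probability (by a Chernoff argument analogous to Lemma~\ref{lem:bound:outliers:per:machine}); the paper makes this explicit, and you should too.
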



\begin{proof}
In Algorithm \ref{alg:1:round}, each machine $M_i$ sends the coordinator a weighted point 
set~$P_i^*$, which is an $(\eps, k, z')$-mini-ball covering of $P_i$.
Recall that $z'=\min(\outliersBound, z)$.
Lemma \ref{lem:bound:outliers:per:machine} shows that with high probability, 
at most $\outliersBound$ outliers are assigned to each machine. Trivially, 
at most $z$ outliers can be assigned to a single machine, so with high probability at most
$z'$ outliers are assigned to each machine.
Hence, with high probability, $\opt_{k,z'}(P_i) \leq \optkz(P)$ for each $i \in [m]$. 
Lemma~\ref{lem:union:coreset} then implies that $\cup^m_{i=1} P_i^*$ is an $(\eps, k, z)$-mini-ball covering of $P$.
To report the final coreset, the coordinator computes an $(\eps, k, z)$-mini-ball covering of $\cup^{m}_{i=1} P_i^*$, which is an $(\eps', k, z)$-mini-ball covering of $P$ by Lemma \ref{lem:merge:coreset},
and therefore an $(\eps', k, z)$-coreset of $P$ by Lemma \ref{lem:MBC:is:coreset} , where $\eps'=3\eps$.

Next we discuss storage usage. The points are distributed randomly among $m= O(\sqrt{n\eps^d/k})$
machines. Applying the Chernoff bound and the union bound in the same way as in the proof of
Lemma~\ref{lem:bound:outliers:per:machine}, it follows that at most $\frac{6n}{m}+3\log{n} = O(\frac{n}{m})$ 
points are allocated to each machine with high probability.
Thus, each worker machine needs $O(\frac{n}{m}) = O(\sqrt{nk/\eps^d})$ local memory to store
the points and compute a mini-ball covering for them.
The coordinator receives $m$~mini-ball coverings, and
according to Lemma \ref{lem:compress}, each mini-ball covering is of size at most
$k(\frac{12}{\eps})^d + z' = O(k/\eps^d +z')$. 
(Recall that we consider $d$ to be a constant.)
Therefore, the storage required by the coordinator is
\[
\begin{array}{lll}
m \cdot O(k/\eps^d +z')
& = & O\left(\sqrt{n\eps^d/k} \cdot \frac{k}{\eps^d}\right) + m \cdot \min(\outliersBound, z)
\\
& = & O\left(\sqrt{nk/\eps^d} + \sqrt{n\eps^d/k}\cdot\min(\log{n}, z) + z\right)
\enspace . \qedhere
\end{array}
\]
\end{proof}
\subsection{A deterministic $\rrounds$-round MPC algorithm}
\label{subsec:R:round}
We present a deterministic multi-round algorithm in the MPC model for the $k$-center problem
with $z$ outliers. 
It shows how to obtain a trade-off between the number of rounds
and the local storage.
Our algorithm is parameterized by $\rrounds$, the number of rounds of 
communication we are willing to use; the larger $\rrounds$, the smaller amount of storage per machine.
Initially, the input point set~$P$ is distributed arbitrarily (but evenly) over the machines. 

All machines are active in the first round. In every subsequent round, the number 
of active machines reduces by a factor $\beta$, where $\beta = \ceil{m^{1/R}}$.
Note that this implies that after $R$ rounds, we are left with a single active machine $M_1$, which is the coordinator.

As shown in Algorithm \ref{alg:R:round}, in each round, every active machine $M_i$ computes an
$(\eps, k, z)$-mini-ball covering on the union of sets that is sent to it in the previous round, 
and then sends it to machine $M_{\ceil{i/\beta}}$.

\begin{algorithm}[h] 
\caption{A deterministic multi-round algorithm to compute $((1+\eps)^{\rrounds}-1,k,z)$-coreset} 
\label{alg:R:round}

\textbf{Round $t$, executed by each active machine $M_i$ $(1\leq i \leq{\ceil{m/\beta^{t-1}}})$:} \\
\emph{Computation:}
\begin{algorithmic}[1]
\State Let $Q_i$ be the union of sets that $M_i$ received. 
\State $Q_i^* \gets$ \computeCoreset$(Q_i, k, z,\eps)$.
\end{algorithmic}

\emph{Communication:}
\begin{algorithmic}[1]
\State Send $Q_i^*$ to $M_{\ceil{i/\beta}}$.
\end{algorithmic}
\end{algorithm}

\medskip
We first prove that machine $M_1$ receives a $((1+\eps)^{R}-1, k, z)$-coreset after $R$ rounds.

\begin{lemma}
The union of sets that machine $M_1$ receives after executing algorithm \ref{alg:R:round}
is a $((1+\eps)^R-1, k, z)$-coreset of $P$.

\end{lemma}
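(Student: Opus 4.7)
The plan is to proceed by induction on the round number $t$, tracking an invariant that states precisely what each active machine holds at the end of round $t$. The key tools will be Lemma~\ref{lem:compress} (a single call to \computeCoreset produces an $(\eps,k,z)$-mini-ball covering), Lemma~\ref{lem:union:coreset} (unions of mini-ball coverings on a partition yield a mini-ball covering of the whole), and Lemma~\ref{lem:merge:coreset} (compressing a mini-ball covering again compounds the error multiplicatively as $(1+\eps)(1+\peps)-1$).

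More concretely, for each active machine $M_i$ in round $t$, let $P(i,t) \subseteq P$ denote the union of those original input sets $P_j$ whose data has been aggregated into $M_i$ by round~$t$ through the tree-like routing of the algorithm. I would prove the invariant:
\emph{At the end of round $t$, each active machine $M_i$ holds a set $Q_i^*$ that is a $((1+\eps)^t-1, k, z)$-mini-ball covering of $P(i,t)$.}
For the base case $t=1$, machine $M_i$ receives only $P_i$ itself, so $Q_i = P_i$, and Lemma~\ref{lem:compress} gives that $Q_i^* = \text{\computeCoreset}(P_i,k,z,\eps)$ is an $(\eps,k,z)$-mini-ball covering of $P_i$; since $(1+\eps)^1-1 = \eps$, the invariant holds.

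For the inductive step, assume the invariant for round~$t-1$. In round~$t$, each active machine $M_j$ receives the sets $Q_{i_1}^*, \ldots, Q_{i_{\beta'}}^*$ (with $\beta'\le\beta$) sent to it in round $t-1$, each being a $((1+\eps)^{t-1}-1, k, z)$-mini-ball covering of the corresponding $P(i_\ell, t-1)$. The sets $P(i_1,t-1),\ldots,P(i_{\beta'},t-1)$ partition $P(j,t)$; moreover, since each $P(i_\ell,t-1) \subseteq P \subseteq P$ we have $\opt_{k,z}(P(i_\ell,t-1)) \leq \optkz(P)$, so the hypothesis of Lemma~\ref{lem:union:coreset} (with $z_\ell = z$ for all $\ell$) is satisfied. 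Thus $Q_j = \bigcup_\ell Q_{i_\ell}^*$ is a $((1+\eps)^{t-1}-1, k, z)$-mini-ball covering of $P(j,t)$. Machine $M_j$ then sets $Q_j^* = \text{\computeCoreset}(Q_j,k,z,\eps)$, which by Lemma~\ref{lem:compress} is an $(\eps, k, z)$-mini-ball covering of $Q_j$. Applying Lemma~\ref{lem:merge:coreset} with $\peps = (1+\eps)^{t-1}-1$ yields that $Q_j^*$ is an $(\eps',k,z)$-mini-ball covering of $P(j,t)$ with
\[
\eps' \;=\; \eps + \peps + \eps\cdot\peps \;=\; (1+\eps)(1+\peps) - 1 \;=\; (1+\eps)^t - 1,
\]
completing the inductive step.

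After round $R$, the only active machine is $M_1$, and $P(1,R) = P$, so $M_1$ holds (equivalently, receives from itself in the final send) a $((1+\eps)^R - 1, k, z)$-mini-ball covering of $P$. By Lemma~\ref{lem:MBC:is:coreset}, this is a $((1+\eps)^R-1, k, z)$-coreset of $P$, as required. The proof is essentially mechanical; the only point that requires a moment of care is the verification that the hypothesis $\opt_{k,z}(P(i_\ell,t-1)) \le \optkz(P)$ of the union property indeed holds when we use the same parameter $z$ for every sub-aggregate, which follows simply from $P(i_\ell,t-1) \subseteq P$ and monotonicity of the optimal radius under subset inclusion (for fixed outlier budget).
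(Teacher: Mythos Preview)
Your proof is correct and follows essentially the same inductive approach as the paper: induction on the round number, using Lemma~\ref{lem:union:coreset} to merge the pieces and Lemma~\ref{lem:merge:coreset} to account for the extra compression, with the error compounding as $(1+\eps)^t-1$. The paper phrases its invariant in terms of what each machine \emph{receives} after round~$t$ (starting the induction at $t=0$, where each $M_i$ trivially ``receives'' $P_i$), whereas you phrase it in terms of what each machine \emph{holds} after computing in round~$t$; this is only a cosmetic shift.

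One small bookkeeping slip at the end: with your indexing, in round~$R$ there are $m/\beta^{R-1}=\beta$ active machines, not just $M_1$, so $P(1,R)\neq P$ in general. What $M_1$ actually receives after round~$R$ is the union $\bigcup_{i=1}^{\beta} Q_i^*$, where each $Q_i^*$ is (by your invariant) a $((1+\eps)^R-1,k,z)$-mini-ball covering of $P(i,R)$; you then need one more invocation of Lemma~\ref{lem:union:coreset} (using $\opt_{k,z}(P(i,R))\le\optkz(P)$) to conclude that this union is a $((1+\eps)^R-1,k,z)$-mini-ball covering of all of~$P$. The paper's ``received'' invariant avoids this extra step because at $t=R$ only $M_1$ receives, and it already covers all of~$P$.
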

\begin{proof}
We prove by induction that
for each $0 \leq t \leq \rrounds$, and for each $i \in [\ceil{m/{\beta^t}}]$,
the union of sets that machine $M_i$ receives after round $t$ 
is a $((1+\eps)^t-1, k, z)$-mini-ball covering of $P_{\beta^{t}(i-1)+1}\cup\cdots\cup P_{\beta^{t} i}$.
Recall that $P_i\subset P$ denotes the set of points initially stored in machine~$M_i$.

We prove this lemma by induction. The base case is $t=0$. 
As $P_i$ is a $(0, k, z)$-mini-ball covering for $P_i$, the lemma trivially holds for $t=0$.
The induction hypothesis is that the lemma 
holds for $t-1$. We show that then it holds for $t$ too. 

Let $i \in [\ceil{m/{\beta^t}}]$, and $j$ be an arbitrary integer such that 
$\beta(i-1) + 1 \leq j \leq \min(\beta i, m)$, so, $\ceil{j/\beta} = i$.
Let $S_j$ be the union of sets that machine $M_j$ receives after round $t-1$. 
The induction hypothesis says that $S_j$ is a $((1+\eps)^{t-1}-1, k, z)$-mini-ball covering of 
$P_{\beta^{t-1}(j-1)+1}\cup \cdots \cup P_{\beta^{t-1} j}$. In round $t$, machine $M_j$ computes
an $(\eps, k, z)$-mini-ball covering of $S_j$ and send it to $M_i$. 
We refer to this mini-ball covering as $S^*_j$.
Using Lemma \ref{lem:merge:coreset} with set $\peps=(1+\eps)^{t-1}-1$ implies 
$S^*_j$ is an $(\eps+\peps+\eps\peps, k, z)$-mini-ball covering 
of $P_{\beta^{t-1}(j-1)+1}\cup \cdots \cup P_{\beta^{t-1} j}$.
We have
$$
\eps +\peps + \eps\peps = 
\eps + (1+\eps)^{t-1}-1 + \eps\cdot((1+\eps)^{t-1}-1)
=(1+\eps)^{t-1}(1+\eps)-1 
= (1+\eps)^{t}-1
\enspace .
$$
Therefore, $S^*_j$ is a $((1+\eps)^{t}-1, k, z)$-mini-ball covering
of $P_{\beta^{t-1}(j-1)+1}\cup \cdots \cup P_{\beta^{t-1} j}$. After round $t$, $M_i$ receives 
$\cup_{j=\beta(i-1)+1}^{\beta i} S^*_j$. As each $S_j^*$ is a $((1+\eps)^{t}-1, k, z)$-mini-ball covering, 
Lemma \ref{lem:union:coreset} implies that set $\cup_{j=\beta(i-1)+1}^{\beta i} S^*_j$
that $M_i$ receives after round $t$, is a $((1+\eps)^{t}-1, k, z)$-mini-ball covering 
for $P_{\beta^{t}(i-1)+1}\cup\cdots\cup P_{\beta^{t} i}$.

Thus, the union of sets that $M_1$ receives after $\rrounds$ rounds is a $((1+\eps)^{\rrounds}-1, k, z)$-mini-ball covering of $P$, and then a $((1+\eps)^{\rrounds}-1, k, z)$-coreset of $P$ by Lemma \ref{lem:MBC:is:coreset}.
\end{proof}

Now, we state our result for $\rrounds$ rounds.
\begin{theorem}[Deterministic $\rrounds$-round Algorithm]
\label{thm:R:round}
Let $P \subseteq X$ be a point set of size $n$ in a metric space $(X,\dist)$ of doubling dimension $d$. 
Let $k,z \in \mathbb{N}$ be two natural numbers, and let $0 < \eps \leq 1$ be an error parameter. 
Then there exists a deterministic algorithm that computes a $((1+\eps)^{\rrounds}-1, k, z)$-coreset of $P$
in the MPC model in $\rrounds$ rounds of communication using
$m=O\left((\frac{n}{k/\eps^d +z})^{\rrounds/(\rrounds+1)}\right)$ machines
and $O(n^{1/(\rrounds+1)}(k/\eps^d +z)^{\rrounds/(\rrounds+1)})$
storage per machine. 
\end{theorem}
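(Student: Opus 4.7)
Correctness follows immediately from the preceding lemma, which shows that after $\rrounds$ rounds the coordinator $M_1$ holds a $((1+\eps)^{\rrounds}-1, k, z)$-mini-ball covering of $P$, and hence a $((1+\eps)^{\rrounds}-1, k, z)$-coreset of $P$ by Lemma~\ref{lem:MBC:is:coreset}. The round count is immediate as well: with $\beta = \lceil m^{1/\rrounds}\rceil$, the number of active machines after $t$ rounds is $\lceil m/\beta^t \rceil$, so $\beta^{\rrounds}\geq m$ guarantees that exactly one machine (the coordinator) is active after $\rrounds$ rounds. The entire content of the theorem is therefore the balancing of per-machine storage against the number of machines.

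The plan for the storage analysis is to compare the two sources of memory pressure. In round~$1$, each of the $m$ initial machines stores its assigned piece of $P$, which (since $P$ is distributed evenly) has size $O(n/m)$; running \computeCoreset on this set requires $O(n/m)$ local memory. In every later round $t\geq 2$, an active machine $M_i$ receives up to $\beta$ mini-ball coverings from the machines whose indices map to $i$ under $\lceil\cdot/\beta\rceil$. Each such incoming set was produced by \computeCoreset with parameters $(k,z,\eps)$, so by Lemma~\ref{lem:compress} it has size $O(k/\eps^d+z)$. Thus the memory an active machine uses in rounds $t\geq 2$ is $O(\beta\,(k/\eps^d+z))$, and the output it computes (again by Lemma~\ref{lem:compress}) only shrinks this bound.

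To minimize the maximum of the two per-machine memory requirements, I would choose $m$ so that the round-$1$ and round-$\geq 2$ bounds are balanced, i.e.,
\[
\frac{n}{m} \;=\; \Theta\!\left(\beta\left(\frac{k}{\eps^d}+z\right)\right) \;=\; \Theta\!\left(m^{1/\rrounds}\left(\frac{k}{\eps^d}+z\right)\right).
\]
Solving for $m$ yields $m^{(\rrounds+1)/\rrounds}=\Theta(n/(k/\eps^d+z))$, hence
\[
m \;=\; O\!\left(\left(\frac{n}{k/\eps^d+z}\right)^{\rrounds/(\rrounds+1)}\right),
\]
and then the per-machine storage evaluates to
\[
O\!\left(\frac{n}{m}\right) \;=\; O\!\left(n^{1/(\rrounds+1)}\left(\frac{k}{\eps^d}+z\right)^{\rrounds/(\rrounds+1)}\right),
\]
exactly as claimed.

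The only real subtlety (and the one place I would be careful) is checking that this choice of $m$ is compatible in both directions: on the one hand, it must be that $m\geq 1$ and $\beta\geq 2$ so that the recursion strictly contracts each round, which is guaranteed provided $n$ is large compared to $k/\eps^d+z$ (the interesting regime). On the other hand, one should verify that rounding $\beta$ up to an integer and passing from $m/\beta^t$ to $\lceil m/\beta^t\rceil$ introduces only constant factors absorbed by the $O(\cdot)$. No further estimates are needed, because the error guarantee $(1+\eps)^{\rrounds}-1$ was already established round-by-round in the preceding lemma via Lemma~\ref{lem:merge:coreset} and Lemma~\ref{lem:union:coreset}.
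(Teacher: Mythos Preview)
Your proposal is correct and follows essentially the same approach as the paper: invoke the preceding lemma for correctness, bound the round-1 storage by $O(n/m)$ and the storage in later rounds by $\beta\cdot O(k/\eps^d+z)$ via Lemma~\ref{lem:compress}, then balance the two by the choice of $m$. Your write-up is in fact a bit more explicit about the balancing step and the rounding caveats than the paper's own proof.
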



\begin{proof}
By Lemma \ref{lem:2:round:alg}, invoking Algorithm \ref{alg:R:round} results in a 
$((1+\eps)^{\rrounds}-1, k, z)$-coreset of $P$ on $M_1$.
Next we discuss the required storage. In the first round, each machine~$M_i$ needs 
$O(\frac{n}{m}) = O(n^{1/(\rrounds+1)}(k/\eps^d +z)^{\rrounds/(\rrounds+1)})$ storage
for $P_i$ (and to compute a coreset for it).
In each subsequent round, every active machine receives $\beta$ coresets. 
By Lemma~\ref{lem:compress} each coreset is of size at most $k(\frac{12}{\eps})^d+z = O(k/\eps^d +z)$. 
Since $\beta= \ceil{m^{1/\rrounds}}$, the storage per machine is 
\[
\beta \cdot O\left(k/\eps^d+z\right) 
= m^{1/\rrounds} \cdot O\left( k/\eps^d+z\right)
= O\left( n^{1/(\rrounds+1)} (k/\eps^d +z)^{\rrounds/(\rrounds+1)} \right) \enspace .
\qedhere
\]
\end{proof}



\newcommand{\Proc}{Proceedings of the~}
\newcommand{\AMS}{Annals of Mathematical Statistics}

\bibliography{references}


\appendix

\section{Omitted proofs}

\subsection*{Proof of Lemma \ref{lem:MBC:is:coreset}}

\newtheorem*{recall:lem:MBC:is:coreset}{Lemma~\ref{lem:MBC:is:coreset}}

\begin{recall:lem:MBC:is:coreset}
Let $P$ be a weighted point set in a metric space $(X, \dist)$ and
let $P^*$ be an $(\eps, k, z)$-mini-ball covering of $P$. Then, $P^*$ is an $(\eps, k, z)$-coreset of $P$.
\end{recall:lem:MBC:is:coreset}

\begin{proof}
Let $P^*$ be an $(\eps, k, z)$-mini-ball covering of $P$.
First, we prove the second condition of coreset holds for $P^*$.
Let $B = \{\ball(c_1,r),\cdots,\ball(c_k,r)\}$ be any set of congruent balls 
in the space $(X,\dist)$ such that the sum 
of weights of points in $P^*$ that are not covered by $B$ is at most $z$.  
Let $r' := r + \eps \cdot \optkz(P)$, and $B' = \{ \ball(c_1,r'),\cdots ,\ball (c_k,r')\}$.
We show that the total weight of points of $P$ that are not covered by $B'$ is at most $z$.
Let $p\in P$ be an arbitrary point.
Note that if $p$ is not covered by a ball from~$B'$,
then its representative $q \in P^*$ cannot be covered by any ball from~$B$;
this follows from $\dist(p,q))\leq \eps \cdot \optkz(P)$ and the triangle inequality.
Thus the total weight of the point $p\in P$ not covered by $B'$ is at most
the total weight of the points $p\in P^*$ not covered by $B$, which is at most~$z$.

Next, we prove the first condition of the coreset also holds for mini-ball covering $P^*$.
Let $\mathcal{B^*}$ be an optimal set of balls for $P^*$, that is, a set of $k$ congruent
balls of minimum radius covering all points from $P^*$ except for some outliers of total weight at most~$z$.
It follows from the second condition of coreset which we just proved that holds for mini-ball covering $P^*$ that if we expand the radius of the balls in~$\mathcal{B^*}$
by $\eps\cdot\optkz(P)$, then the expanded balls are a feasible solution for~$P$.
Hence, $(1-\eps)\cdot \optkz(P) \leq \optkz(P^*)$.
To prove that $\optkz(P^*)\leq (1+\eps)\cdot \optkz(P)$, let $\B$ be an optimal set
of balls for $P$. Expand the radius of these balls by $\eps\cdot\optkz(P)$. It suffices
to show that the set $\B^*$ of expanded balls forms a feasible solution for~$P^*$. 
This is true by a similar argument as above: 
if $p^*\in P^*$ is not covered by~$\B^*$,
then it follows from triangle inequality and the fact that the distance between $p^*$ and each point represented by~$p^*$ is at most $\eps \cdot \optkz(P)$
that none of the points represented by~$p^*$ can be covered by $\B$, 
and so the total
weight of the points $p^*\in P^*$ not covered by~$\B^*$ is at most~$z$.

This means that $P^*$ is an $(\eps, k, z)$-coreset of $P$.
\end{proof}


\subsection*{Proof of Lemma \ref{lem:union:coreset}}

\newtheorem*{recall:lem:union:coreset}{Lemma~\ref{lem:union:coreset}}

\begin{recall:lem:union:coreset}[Union Property]
Let $P$ be a set of points in a metric space $(X,\dist)$. 
Let $k, z\in \Nats$ and $\eps \geq 0$ be parameters. 
Let $P$ be partitioned into disjoint subsets $P_1,\cdots,P_s$, and
let $Z = \{z_1,\cdots, z_s\}$ be a set of numbers such that 
$\opt_{k,z_i}(P_i) \leq \optkz(P)$ for each~$P_i$. 
If $P_i^*$ is an $(\eps, k, z_i)$-mini-ball covering of $P_i$ for each $1 \leq i \leq s$,
then $\cup_{i=1}^s P_i^*$ is an $(\eps,k,z)$-mini-ball covering of $P$. 
\end{recall:lem:union:coreset}

\begin{proof}
First of all, observe that the weight of the point set $P$ is preserved by 
the union of the mini-ball coverings. Indeed, $\sum_{p \in P} \weight(p)
= \sum_{i=1}^s \sum_{p \in P_i} \weight(p)
= \sum_{i=1}^s \sum_{q \in P^*_i} \weight(q)$. 

Next, consider an arbitrary point $p \in P$, and $P_i$ be the subset containing~$p$.
Then $p$ has a representative point $q$ in the $(\eps, k, z_i)$-mini-ball covering $P_i^*$ of~$P_i$. 
By Definition~\ref{def:miniball:covering}, $\dist(p, q) \leq \eps \cdot \opt_{k,z_i}(P_i) \leq \eps \cdot \optkz(P)$,
which proves that $\cup_{i=1}^s P_i^*$ is an $(\eps,k,z)$-mini-ball covering of $P$.
\end{proof}


\subsection*{Proof of Lemma \ref{lem:merge:coreset}}

\newtheorem*{recall:lem:merge:coreset}{Lemma~\ref{lem:merge:coreset}}

\begin{recall:lem:merge:coreset}[Transitive Property]
Let $P$ be a set of $n$ points in a metric space $(X,\dist)$. 
Let $k, z \in \Nats$ and $\eps, \peps \geq 0$ be four parameters. 
Let $P^*$ be a $(\peps,k,z)$-mini-ball covering of $P$,
and let $Q^*$ be an $(\eps, k, z)$-mini-ball covering of $P^*$. 
Then, $Q^*$ is an $(\eps+\peps+\eps\peps, k,z)$-mini-ball covering of $P$.
\end{recall:lem:merge:coreset}

\begin{proof}
Note that the weight-preservation property of mini-ball covering implies that
$\sum_{p \in P} \weight(p) = \sum_{p^* \in P^*} \weight(p^*) = \sum_{q^* \in Q^*} \weight(q^*)$. 
It remains to show that any point $p \in P$ has a representative point 
$q^* \in Q^*$ so that $\dist(p, q^*) \leq (\eps+\peps+\eps\peps)\cdot \optkz(P)$. 

Since $P^*$ is an $(\peps,k,z)$-mini-ball covering for $P$, there is a representative point $p^* \in P^*$ 
for $p$ for which $\dist(p, p^*) \leq \peps\cdot \optkz(P)$. 
Similarly, since $Q^*$ is an $(\eps,k,z)$-mini-ball covering for $P^*$, there is a representative point $q^* \in Q^*$ 
for $p^*$ such that $\dist(p^*, q^*) \leq \eps\cdot \optkz(P^*)$.
Hence,
\[
\begin{array}{lll}
    \dist(p, q^*) 
    & \leq & \dist(p, p^*) + \dist(p^*, q^*)  \\
    & \leq & \peps\cdot \optkz(P) + \eps \cdot \optkz(P^*) \\
    & \leq & \peps\cdot \optkz(P) + \eps \cdot (1+\peps) \cdot \optkz(P) \hspace*{10mm} \mbox{ (by Definition~\ref{def:coreset})} \\
    & = & (\eps+\peps+\eps\peps) \cdot \optkz(P) \enspace .
\end{array}
\]
We conclude that $q^*$ is a valid representative for $p$, thus finishing the proof.
\end{proof}

\subsection*{Proof of Lemma \ref{lem:coreset:size}}

\newtheorem*{recall:lem:coreset:size}{Lemma~\ref{lem:coreset:size}}

\begin{recall:lem:coreset:size}
Let $P$ be a finite set of points in a metric space $(X,\dist)$ of doubling dimension $d$.
Let $0 < \delta \leq \optkz(P)$, and let $Q \subseteq P$ be a subset of $P$ such that for 
any two distinct points $q_1, q_2 \in Q$, $\dist(q_1, q_2) > \delta$. 
Then $|Q| \leq k\left( \frac{4\cdot\optkz(P)}{\delta} \right)^d+z$.
\end{recall:lem:coreset:size}

\begin{proof}
Consider an optimal solution for the $k$-center problem with $z$ outliers on $P$.
Let $\Bopt$ be the set of $k$~balls in this optimal solution.
Since $(X,\dist)$ is a metric space of doubling dimension $d$, every ball in $\Bopt$ can be covered by at most $(2\cdot\frac{\optkz(P)}{\delta/2})^d = (4\cdot\frac{\optkz(P)}{\delta})^d$ mini-balls of radius $\delta/2$. 
Besides, as the distance between points of $Q$ is more than $\delta$, each mini-ball of radius $\delta/2$ contains at most one point of $Q$. Therefore, the number of points in $Q$ covered by $\Bopt$ is at most $(4\cdot\frac{\optkz(P)}{\delta})^d$. Besides, as $\Bopt$ is an optimal solution, at most $z$ points of $Q$ are not covered by $\Bopt$. Thus, 
$|Q| \leq k( 4\cdot\frac{\optkz(P)}{\delta})^d+z$.

\end{proof}

\subsection*{Proof of Lemma \ref{lem:rho}}

\newtheorem*{recall:lem:rho}{Lemma~\ref{lem:rho}}

\begin{recall:lem:rho}
After the point $p_t$ arriving at time~$t$ has been handled, we have:
for each point $p \in P(t)$
there is a representative point $q\in P^*$ such that $\dist(p, q) \leq \eps \cdot r$. 
\end{recall:lem:rho}


\begin{proof}
We may assume by induction that the lemma holds after the previous point has been 
handled. (Note that the lemma trivially holds before the arrival of the first point.)
We now show that the lemma also holds after processing~$p_t$. 

It is easily checked that after lines~\ref{line:streaming:rep:itself}
of the algorithm, there is indeed a representative in $P^*$ within distance $\eps\cdot r$, namely the point~$q$ in line~\ref{line:streaming:rep:q} or $p$ itself in line \ref{line:streaming:rep:itself}.

In each iteration of the while-loop in lines \ref{line:streaming:threshold}, the value of~$r$ is doubled
and \UpdateCoreset is called. We will show that the lemma remains true after each iteration. 
Let $r^-$ and $r^+$ denote the value of $r$ just before
and after updating it in line~\ref{line:streaming:doubling:r}, respectively, so $r^+ = 2\cdot r^-$. Let $p$ be an arbitrary point in $P(t)$. 
Let $q^-$ be the representative point of $p$ before the call to \UpdateCoreset.
Because the statement of the lemma holds before the call, 
we have $\dist(p, q^-) \leq \eps\cdot r^-$.
Let $q^+$ denote the representative point of $q^-$ just after the call to \UpdateCoreset.
(Possibly $q^+=q^-$.)
Since the distance parameter~$\delta$ of \UpdateCoreset
in the call is set to $(\eps/2)\cdot r^+$, we know that
$\dist(q^-, q^+) \leq (\eps/2)\cdot r^+$.
Hence,
\[
\dist(p, q^+) 
\leq \dist(p, q^-)+\dist(q^-, q^+)
\leq \eps\cdot r^{-} + \frac{\eps}{2}\cdot r^+
\leq \eps \cdot \frac{r^+}{2} + \frac{\eps}{2}\cdot r^+
= \eps \cdot r^+ \enspace ,
\]
which finishes the proof of the lemma.
\end{proof}


\subsection*{Proof of Lemma \ref{lem:bound:outliers:per:machine}}

\newtheorem*{recall:lem:bound:outliers:per:machine}{Lemma~\ref{lem:bound:outliers:per:machine}}

\begin{recall:lem:bound:outliers:per:machine}[\cite{DBLP:journals/pvldb/CeccarelloPP19}]
$\Pr{\forall_{1\leq i \leq m} |P_i \cap \Pout|  \leq \outliersBound} \geq 1-1/n^2$.
\end{recall:lem:bound:outliers:per:machine}

\begin{proof}
Consider an optimal solution for the $k$-center problem with $z$ outliers on $P$.
Let $\Bopt$ be the set of $k$~balls in this optimal solution and let $\Pout = \{q_1,\cdots,q_z\}\subset P$ 
be the outliers, that is, the points not covered by the balls in~$\Bopt$.
Let us consider a random variable $X_i$ that corresponds to the number of outliers that 
are assigned to an machine $M_i$ for $i \in [m]$. 
First of all, observe that $\Ex{X_i} = \frac{z}{m}$. 
Next, we consider $X_i = \sum_{j=1}^z Y_{ij}$ where every random variable $Y_{ij}$ is an indicator 
random variable which is one of the outlier $q_j$ is assigned to the machine $M_i$ and zero otherwise. 
Now, we use the Chernoff bound to show that $X_i$ is concentrated around its expectation. 

\begin{lemma}[Multiplicative Chernoff bound]\cite{Cher,DBLP:journals/rsa/CzumajS07}
Let $X_1,\cdots,X_N$ be independent random variables, with $\Pr{X_i=1} = p$ 
and $\Pr{X_i=0}=1-p$ for each $i$ and for certain $0 \le p \le 1$.
Let $X = \sum_{i=1}^N X_i$. Then,
\begin{itemize}
\item for any $\tau \ge 6\cdot 	\Ex{X}$, we have
$
		\Pr{X \ge \tau} \le 2^{-\tau} \enspace .
$
\end{itemize}
\end{lemma}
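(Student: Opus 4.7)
The plan is to follow the standard moment-generating-function argument and then simplify the resulting bound using the hypothesis $\tau \geq 6\Ex{X}$. Let $\mu := \Ex{X} = Np$. First, I would fix an arbitrary parameter $t > 0$ and apply Markov's inequality to the random variable $e^{tX}$:
\[
\Pr{X \geq \tau} \;=\; \Pr{e^{tX} \geq e^{t\tau}} \;\leq\; \frac{\Ex{e^{tX}}}{e^{t\tau}}.
\]

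Next, I would exploit independence of the $X_i$ to factor the moment generating function as $\Ex{e^{tX}} = \prod_{i=1}^N \Ex{e^{tX_i}}$. For each factor, a direct calculation gives $\Ex{e^{tX_i}} = 1 - p + p e^t = 1 + p(e^t-1)$, and then the elementary inequality $1 + x \leq e^x$ yields $\Ex{e^{tX_i}} \leq e^{p(e^t-1)}$. Multiplying over $i$ gives $\Ex{e^{tX}} \leq e^{\mu(e^t-1)}$, and hence
\[
\Pr{X \geq \tau} \;\leq\; e^{\mu(e^t - 1) - t\tau}.
\]

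Now I would optimize over $t$ by plugging in the natural choice $t = \ln(\tau/\mu)$ (which is positive since $\tau \geq 6\mu > \mu$). A short calculation gives $\mu(e^t-1) - t\tau = \tau - \mu - \tau \ln(\tau/\mu)$, so
\[
\Pr{X \geq \tau} \;\leq\; e^{\tau - \mu - \tau \ln(\tau/\mu)} \;\leq\; e^\tau \cdot \left(\frac{\mu}{\tau}\right)^{\tau} \;=\; \left(\frac{e\mu}{\tau}\right)^{\tau},
\]
where I used $e^{-\mu} \leq 1$.

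Finally, I would use the assumption $\tau \geq 6\mu$, which gives $e\mu/\tau \leq e/6 < 1/2$, so
\[
\Pr{X \geq \tau} \;\leq\; \left(\frac{e}{6}\right)^{\tau} \;\leq\; \left(\frac{1}{2}\right)^{\tau} \;=\; 2^{-\tau},
\]
as required. The only minor subtlety to check is the degenerate case $\mu = 0$ (i.e.\ $p=0$), in which case $X = 0$ almost surely and the claim is vacuous; this can be handled with a one-line remark before the main argument. There is no real obstacle here: the proof is the textbook Chernoff derivation, and the constant $6$ is chosen precisely so that $e/6 < 1/2$ makes the final simplification clean.
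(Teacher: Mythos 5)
Your derivation is correct. The paper itself does not prove this Chernoff bound --- it invokes it as a cited black-box (the references \cite{Cher,DBLP:journals/rsa/CzumajS07}), so there is no ``paper proof'' to compare against. What you have written is the standard moment-generating-function argument (Markov on $e^{tX}$, factor by independence, bound $1+x\le e^x$, optimize at $t=\ln(\tau/\mu)$, drop $e^{-\mu}\le 1$), arriving at $\Pr{X\ge\tau}\le (e\mu/\tau)^\tau$, and then the hypothesis $\tau\ge 6\mu$ gives $e\mu/\tau\le e/6 < 1/2$, so the bound collapses to $2^{-\tau}$. All steps check out. One trivial point worth a half-sentence in addition to the $\mu=0$ remark you already made: the choice $t=\ln(\tau/\mu)$ and the final inequality also implicitly assume $\tau>0$; for $\tau=0$ the claim reads $\Pr{X\ge 0}\le 1$ and is vacuous, so this too is harmless.
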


We let $\tau = \frac{6z}{m}+3\log{n}$, then, $\tau \geq 6\cdot \Ex{X_i}$.
Thus, using the Chernoff bound we have 
\[
    \Pr{X_i \ge \frac{6z}{m}+3\log{n}} 
    = \Pr{X_i \ge \tau} 
    \le 2^{-\tau} \le 2^{-(\frac{6z}{m}+3\log{n})} 
    \le 2^{-3\log{n}} 
    = 1/n^3 \enspace .
\]

Now, since we have $m = \sqrt{n}$ machines, we use a union bound to obtain  
$$ 
\Pr{\exists_{i \in [m]} X_i \ge \outliersBound} \le \sum_{i+1}^m \Pr{X_i \ge \outliersBound} \le m/n^3\le 1/n^2 \enspace . \qedhere$$ 
\end{proof}

\section{Omitted proofs of the $\Omega(k/\eps^d)$ lower bound for the streaming model}
\label{app:lb:streaming}

This section provides the missing proofs for our lower bounds.
Recall that $P^*(t') \subseteq P^*(t) \cup P^- \cup P^+$ is the coreset maintained by the algorithm, which is supposed to be an $(\eps,k,z)$-coreset of the point set $P(t') = P(t) \cup P^- \cup P^+$.

\begin{lemma}
\label{lem:OPT:upper:bound}
The optimal $k$-center radius with $z$ outliers of the $(\eps,k,z)$-coreset $P^*(t')$ is 
at most $r$, i.e., $\optkz(P^*(t')) \leq r$. 
\end{lemma}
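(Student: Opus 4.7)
The plan is to exhibit $k$ congruent balls of radius $r$ whose union covers all of $P^*(t')$ except for points of total weight at most $z$, which directly gives $\optkz(P^*(t')) \leq r$.

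The construction uses $2d$ ``small'' balls aimed at the neighborhood of $p^*$, plus one ball per remaining cluster. For each $j \in [d]$ and each $\sigma \in \{+,-\}$ I would place a ball $B_j^\sigma := \ball(c_j^\sigma, r)$ with center $c_j^\sigma := p^* + \sigma\, h\, e_j$, where $e_j$ is the $j$-th standard basis vector. By construction $\|p_j^\sigma - c_j^\sigma\| = r$, so the $2d$ points of $P^+ \cup P^-$ are all covered. For each $i \neq i^*$ I would place one further ball centered at the geometric center of $C_i$; since $C_i$ is an axis-aligned integer grid of side $\lambda$, it is contained in a ball of radius $\lambda\sqrt{d}/2$, and a direct algebraic check from $h = d(\lambda+2)/2$ and $r^2 = h^2 - 2h + d$ gives $\lambda\sqrt{d}/2 \leq r$.

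The key technical step is to verify that $B_1^+,\ldots,B_d^-$ together cover $C_{i^*} \setminus \{p^*\}$. Given $q \in C_{i^*}$ with $q \neq p^*$, let $v := q - p^*$ and pick $\alpha := \arg\max_{j}|v_j|$; because $C_{i^*}$ is an integer grid of side $\lambda$, we have $1 \leq |v_\alpha| \leq \lambda$ and $|v_i| \leq |v_\alpha|$ for every $i$. Assigning $q$ to $B_\alpha^{\mathrm{sign}(v_\alpha)}$, I would compute
\[
\|q - c_\alpha^{\mathrm{sign}(v_\alpha)}\|^2 \ = \ (|v_\alpha| - h)^2 + \sum_{i \neq \alpha} v_i^2 \ \leq \ d\,|v_\alpha|^2 - 2h\,|v_\alpha| + h^2
\]
and compare with $r^2 = h^2 - 2h + d$. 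Writing $m := |v_\alpha|$, the desired inequality $dm^2 - 2hm + h^2 \leq r^2$ rearranges to $d(m^2-1) \leq 2h(m-1)$, i.e.\ $d(m+1) \leq 2h$ for $m \geq 2$, which holds since $m \leq \lambda$ and $2h = d(\lambda+2)$; for $m = 1$ we get equality.

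Combining everything, the $k$ balls cover $(C_{i^*}\setminus\{p^*\}) \cup P^+ \cup P^- \cup \bigcup_{i \neq i^*} C_i$. Since $p^* \notin P^*(t')$ by hypothesis, the only points of $P^*(t')$ possibly left uncovered lie at the outlier positions $o_1,\ldots,o_z$, whose combined weight is at most $z$ under the weight-restricted assumption (each outlier contributes weight~$1$ to $P(t')$, so the weight that the coreset can place on outlier positions cannot exceed~$z$ without violating the coreset guarantees). The main obstacle is the coordinate-wise estimate showing that the same $2d$ balls simultaneously cover every grid point of $C_{i^*} \setminus \{p^*\}$; this hinges on the precise calibration of $h$ and $r$, tuned so that the bound is tight exactly at $m=1$ and the slack $r^2 - (dm^2 - 2hm + h^2)$ remains non-negative for every $m \in [1,\lambda]$.
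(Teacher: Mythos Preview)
Your geometric construction and the key technical verification are correct and essentially identical to the paper's approach: the same $2d$ balls $\ball(p^* \pm h e_j, r)$ plus one ball per remaining cluster, and the same coordinate-wise estimate showing that the $2d$ balls cover $C_{i^*}\setminus\{p^*\}$ (the paper carries out exactly your computation with $m=\mu_q$).

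There is, however, a genuine gap in the last step. You assert that the total weight the coreset places on the outlier positions $o_1,\ldots,o_z$ is at most~$z$, citing the weight-restricted assumption. But the weight restriction only bounds the \emph{total} weight of $P^*(t')$; nothing in Definition~\ref{def:coreset} prevents the coreset from assigning, say, weight $z+1$ to a single outlier position while compensating elsewhere. Your parenthetical ``without violating the coreset guarantees'' is the right instinct, but it is not a proof: one must actually derive a contradiction from the coreset properties.

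The paper handles this as a separate claim with a non-trivial argument. It first pins down $\optkz(P(t'))=(h+r)/2$ exactly. Then, assuming the outlier weight in $P^*(t')$ exceeds~$z$, any optimal set $\B^*$ of $k$ balls for $P^*(t')$ (whose radius is at most $(1+\eps)(h+r)/2$ by condition~(1) of the coreset definition) must spend at least one ball on an isolated outlier. That leaves at most $k-1$ balls; after expanding by $\eps\cdot\optkz(P(t'))$ (condition~(2)), these cannot cover all of the $k$ pairwise far sets $\{p_1^+\},\ldots,\{p_d^+\},\{p_1^-\},\ldots,\{p_d^-\}, C_i\ (i\neq i^*)$, each of weight at least~$2$, forcing more than $z$ uncovered weight in $P(t')$---a contradiction. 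This argument is where the weight-$2$ choice for the points in $P^+\cup P^-$ is actually used, and it does not follow from the weight restriction alone.
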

\begin{proof}
To this end, we show that there exist $k$ balls centered at $k$ centers of radius $r$ 
that can cover all points in $(C_1\cup\ldots\cup C_{k-2d+1} \cup P^+ \cup P^- )\setminus \{p^*\}$. 
Recall that $p^*=(p_1^*,\ldots,p_d^*)$ is the point that is not explicitly stored in~$P^*(t)$ and 
we assume that $p^*$ belongs to a cluster $C_{i^*}$ for $i^* \in [k-2d+1]$. 

Observe that for every $i\neq i^*$, since the diameter of $C_i$ is $\sqrt{d}\lambda$, 
all points of $C_i$ can be covered by a ball of radius $\sqrt{d}\lambda/2$. 
We assumed that $\lambda := 1/(4d\eps)$ is an integer, $h := d(\lambda+2)/2$ and $r :=\sqrt{h^2-2h+d}$.
As $d \geq 1$, we  observe that $\sqrt{d}\lambda/2 \leq r$.

\begin{figure}[h]
\begin{center}
\hspace*{-3mm}
\includegraphics{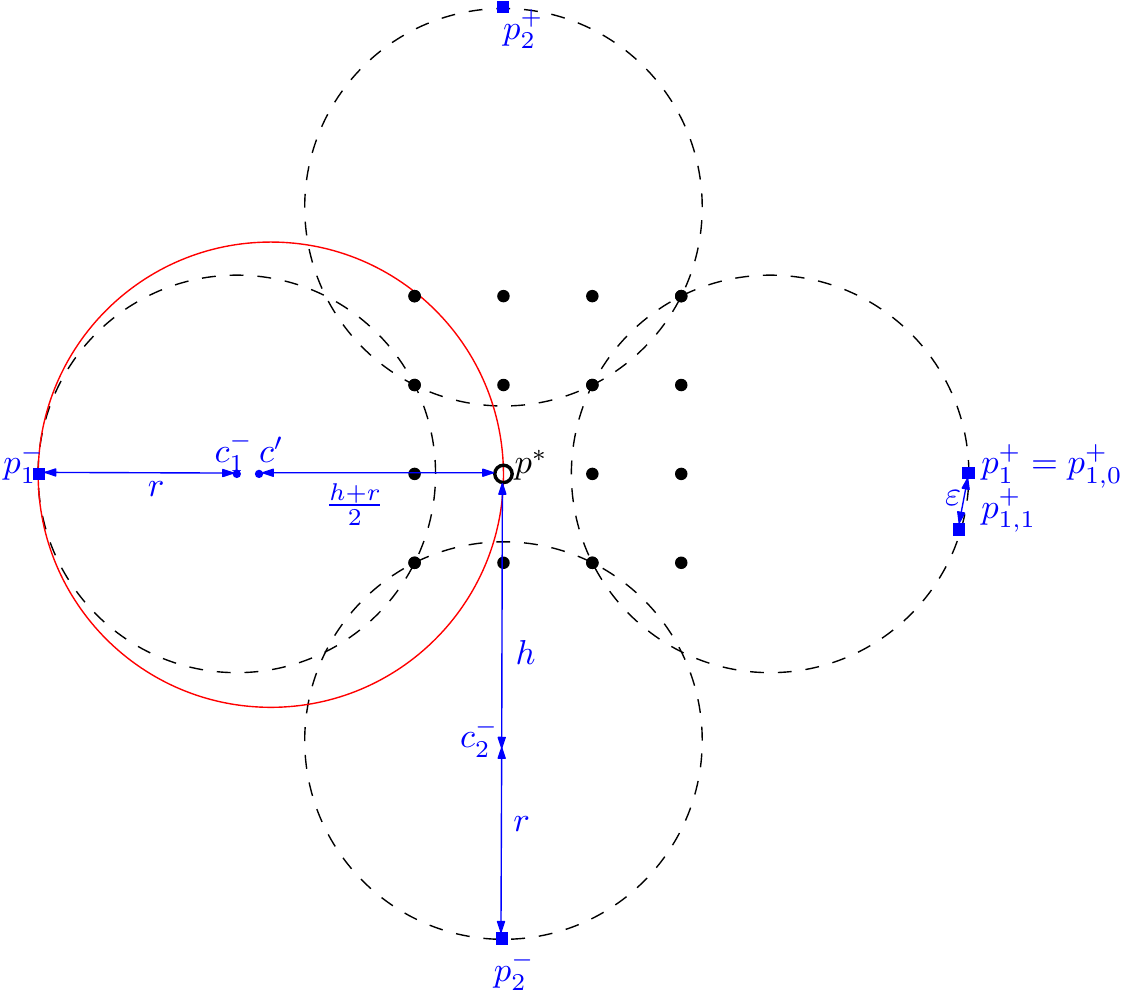}
\end{center}
\caption{Illustration of the lower bound for the streaming model. 
Here, $P^*(t')$ underestimates $\optkz(P(t'))$ since $2d$ balls of radius $r$ can cover
$P^+\cup P^-\cup C_{i^*}\setminus\{p^*\}$ (dashed balls), 
and then $\optkz(P^*(t')) \leq r$.
However, $\optkz(P(t')) = (r+h)/2$ (the red ball).
} 
\label{fig:lb:streaming:missing}
\end{figure}

To cover $C_{i^*} \cup P^- \cup P^+$, we define $2d$ centers $c_1^+,\ldots,c_d^+$ and $c_1^-,\ldots,c_d^-$, 
where $c_j^+ = (c_{j,1}^+,\ldots,c_{j,d}^+)$ such that $c_{j,j}^+:= p_j^*+h$ and $c_{j,\ell}^+:=p_\ell^*$ for all $\ell\neq j$.
Similarly, $c_j^- = (c_{j,1}^-,\ldots,c_{j,d}^-)$ such that $c_{j,j}^-:= p_j^*-h$ and $c_{j,\ell}^-:=p^*_\ell$ for all $\ell\neq j$; see Figure \ref{fig:lb:streaming:missing}.
We claim that $2d$ balls centered at these $2d$ centers of radius $r$ cover all points in $P^+\cup P^- \cup C_{i^*} \setminus \{p^*\}$. 

For the moment suppose this claim is correct. 
The number of clusters $C_i$ where $i\neq i^*$ is $k-2d$. 
We just showed that all points of $C_i$ can be covered by a ball of radius $r$. 
We also claimed (which needs to be proved) 
that there exist $2d$ balls centered at these $2d$ centers of radius $r$ cover all points in $P^+\cup P^- \cup C_{i^*} \setminus \{p^*\}$. 
In addition, 
in Claim~\ref{claim:outlier_weight:<=z} (below) we prove that 
the total weight of the outlier points $o_1,\ldots,o_z$ in~$P^*(t')$ is at most~$z$.
Thus, $\optkz(P^*(t')) \leq r$ as we want to prove. 

Next, we prove the claim. 
Indeed, for each $p_j^+\in P^+$, we have $\dist(p_j^+, c_j^+) = r$. 
Similarly, for each $p_j^-\in P^-$, we have $\dist(p_j^-, c_j^-) = r$.
Let $q=(q_1,\ldots,q_d)$ be an arbitrary point in $C_{i^*}\setminus\{p^*\}$. 
We define $j_q := \arg\max_{\ell\in[d]}{|q_\ell-p^*_{\ell}|}$ to be the dimension along which 
$p^*$ and $q$ have the maximum distance from each other, and let $\mu_q:=|q_{j_q}-p_{j_q}^*|$ be 
the distance along the dimension $j_q$. Observe that $\mu_q \geq 1$. 


\begin{claim}
For an arbitrary point $q=(q_1,\ldots,q_d) \in C_{i^*}\setminus\{p^*\}$, we have the following bounds: 
\begin{itemize}
    \item If $q_{j_q}-p_{j_q}^* > 0$, then for the center $c_{j_q}^+ \in \{c_1^+,\ldots,c_d^+\}$ 
    we have $\dist(q, c_{j_q}^+) \leq r$. 
    \item If $q_{j_q}-p_{j_q}^* < 0$, then for the center $c_{j_q}^- \in  \{c_1^-,\ldots,c_d^-\}$ 
    we have $\dist(q, c_{j_q}^-) \leq r$.
\end{itemize}
\label{clm:arbitrary:covered}
\end{claim}


\begin{proofinproof}
First assume that $q_{j_q}-p_{j_q}^* > 0$. The other case is proven similarly. 
We let $c_{j_q}^+=(c_{j_q,1}^+,\ldots,c_{j_q,d}^+)$. 
Recall that $c_{j_q,j_q}^+=p_{j_q}^*+h$, and $c_{j_q,\ell}^+=p_{\ell}^*$ for all $\ell\neq j_q$.
Therefore, $|q_{j_q}-c_{j_q,j_q}^+|=|q_{j_q}-(p^*_{j_q}+h)| = |\mu_q-h|$ 
and for all $\ell\neq j_q$ we have $|q_\ell-c_{j_q,\ell}^+| = |q_\ell-p^*_{\ell}| \leq \mu_q$. 
Thus, 
\[
        \dist(q,c_{j_q}^+)
        =\sqrt{\sum_{\ell=1}^d |q_\ell-c_{j_q,\ell}^+|^2}
        \leq \sqrt{(d-1)\mu_q^2+(\mu_q-h)^2} 
        = \sqrt{h^2+d\mu_q^2-2\mu_q h} \enspace .
\]

Since $0<\mu_q\leq\lambda$ we have 
$h = \frac{d}{2}(\lambda+2) \geq \frac{d}{2}(\mu_q+1)$. 
As $\mu_q\geq 1$, we can multiply both sides of this inequality by $2(\mu_q-1)$ to obtain 
$h\cdot(2\mu_q-2) \geq \frac{d}{2}(\mu_q+1)\cdot2(\mu_q-1) = d(\mu_q^2-1)$ 
what yields $-2h+d \geq d\mu_q^2-2\mu_q h$. 
Finally, by adding $h^2$ to both sides we have $h^2-2h+d \geq h^2+d\mu_q^2-2\mu_q h$.
Recall that $r=\sqrt{h^2-2h+d}$ and $dist(q, c_{j_q}^+) \leq \sqrt{h^2+d\mu_q^2-2\mu_q h}$. 
Thus $dist(q, c_{j_q}^+) \leq r$ that proves this claim.
\end{proofinproof}

Next, we prove that $\optkz(P(t')) = (h+r)/2$. 

\begin{claim}
\label{clm:opt:bound:equality}
$\optkz(P(t')) = (h+r)/2$. 
\end{claim}

\begin{proofinproof}
We proved that $\optkz(P(t')) \ge (h+r)/2$. 
Now, we prove that $\optkz(P(t')) \le (h+r)/2$. 
In fact, in Claim~\ref{clm:arbitrary:covered}, we proved that 
the balls in $\{\ball(c_1^-, r),\ldots,\ball(c_d^-, r)\} \cup \{\ball(c_1^+, r),\ldots,\ball(c_d^+, r)\}$
cover all points in $P^+\cup P^- \cup C^{i^*}\setminus \{p^*\}$. 
We define a center $c' = (c'_1,\ldots,c'_d)$ such that $c'_1:= p_1^*-(h+r)/2$ and 
$c'_{\ell}:=p^*_\ell$ for all $\ell\neq 1$. 
Note that $h\geq r$ and $c_1^- = (c_{1,1}^-,\ldots,c_{1,d}^-)$ such that $c_{1,1}^-:= p_1^*-h$ and $c_{j,\ell}^+:=p^*_\ell$ for all $\ell\neq 1$. 
Thus, $\dist(c_1^-,c') = |c_{1,1}^- - c'_1| = |p_1^*-h - (p_1^*-(h+r)/2)| = |(h-r)/2| = (h-r)/2$. 
Now, we observe that using the triangle inequality,  $\ball(c_1^-, r) \subseteq \ball(c', (h+r)/2)$.
(In Figure~\ref{fig:lb:streaming:missing}, the ball $\ball(c', (h+r)/2)$ is shown in red.) 
Indeed, let $q$ be an arbitrary point in $\ball(c_1^-,r)$.
Using the triangle inequality, we have $\dist(q,c') \le \dist(q,c_1^-) + \dist(c_1^-, c') 
\leq r + (h-r)/2 = (h+r)/2$, which means $q \in \ball(c', (h+r)/2)$. 
Therefore, $\ball(c_1^-,r) \subseteq \ball(c', (h+r)/2)$.
In addition, observe that the ball $\ball(c', (h+r)/2)$ covers the point $p^*$. 
Now, if we replace the ball $\ball(c_1^-,r)$ by $\ball(c', (h+r)/2)$, 
the union of balls $\{\ball(c',(h+r)/2)\} \cup \{\ball(c_2^-, r),\ldots,\ball(c_d^-, r)\} \cup \{\ball(c_1^+, r),\ldots,\ball(c_d^+, r)\}$ will 
cover $C_{i^*}\cup P^+ \cup P^-$. 
This essentially means that $\optkz(P+2d) \leq (h+r)/2$.
\end{proofinproof}
It remains to argue that the total weight of the outlier points $o_1,\ldots,o_z$ in~$P^*(t')$ is at most~$z$.

\begin{claim}
\label{claim:outlier_weight:<=z}
The total weight of the outlier points $o_1,\ldots,o_z$ in~$P^*(t')$ is at most~$z$.
\end{claim}

\begin{proofinproof}
Suppose this is not the case. 
We consider an optimal set~$\B^*$ of $k$ balls that covers the weighted points in $P^*(t')$ except 
a total weight of at most $z$.   
Since $P^*(t')$ is an $(\eps,k,z)$-coreset for $P(t')$, 
the radius of balls in the set $\B^*$ is at most $(1+\eps)\cdot \optkz(P(t')) = (1+\eps)\cdot (h+r)/2$, 
where we use Claim~\ref{clm:opt:bound:equality} that shows $\optkz(P(t')) = (h+r)/2$. 

Suppose for the sake of contradiction, 
the total weight of the outlier points $o_1,\ldots,o_z$ in~$P^*(t')$ is more than~$z$, 
thus, at least one outlier point, say $o_1$, must be covered by a ball from~$\B^*$.
On the other hand, the nearest non-outlier point to $o_1$ is at distance at least~$4(h+r)$ from $o_1$. 
Thus, since the radius of balls in the optimal set $\B^*$ is at most $(1+\eps)\cdot (h+r)/2$, 
such an outlier $o_1$ must be a singleton in its ball.
Let $k'\geq 1$ be the number of outliers that are covered by singleton balls from~$\B^*$. 

As $P^*(t')$ is an $(\eps,k,z)$-coreset of $P(t')$, 
if we expand the radius of the balls in~$\B^*$ 
by~$\eps\cdot\optkz(P(t')) = \eps(h+r)/2$, then the total weight of points in $P(t')$ that
are not covered by these expanded balls is at most $z$. 
Therefore, the points in $C_1\cup\ldots\cup C_{k-2d+1} \cup P^+\cup P^-$  need to  
be covered by the remaining $k-k'$ expanded balls.

Consider the following $k$ sets:
$2d$ sets $\{p^+_1\},\ldots,\{p^+_d\}$ and $\{p^-_1\},\ldots,\{p^-_d\}$, 
as well as $k-2d$ sets $C_i$, where $i\neq i^*$.
Since the pairwise distances between these $2d+(k-2d)=k$ sets are at least $\sqrt{2}(h+r)$.
Besides, the radius of the expanded balls is at most $(1+2\eps)(h+r)/2$, 
where $1+2\eps < \sqrt{2}$ since we assume $\eps \leq \frac{1}{8d}$.
Hence, each of the remaining $k-k'$ expanded balls can cover at most one of these $k$ sets.
As $k-k'$ balls of $\B^*$ remained for these $k$ sets, then at least $k'$ sets cannot be covered by the expanded balls.

We assumed the weight of every point $p^+_i$ (or $p^-_i$) is two\footnote{
However, we can in fact change these weighted points to unweighted points 
by replacing each weighted point $p^+_i$ (similarly $p^-_i$) by two unweighted points $p^+_{i,0}$ and $p^+_{i,1}$, 
where $p^+_{i,0}$ is at the same place of $p^+_i$, and $p^+_{i,1}$ is on the boundary of the ball $\ball(c^+_i,r)$ 
at distance $\eps$ of $p^+_{i,0}$ ;see Figure \ref{fig:lb:streaming:missing}. 
It is simple to see that our arguments still hold for this unweighted case as 
$\dist(p^-_{i,0},c^+_i)=\dist(p^-_{i,1},c^+_i) = r$, and $\dist(p^-_{i,0},p^+_i) = \eps$.}.
In addition, the number of points in every  $C_i$ is at least $2$ since $\lambda \geq 2$ and $|C_i|=(\lambda+1)^d$.
Recall that the union of balls in $\B^*$ does not cover $z-k'$ points in $\{o_1,\ldots,o_z\}$.
Therefore, the total weight of the points that the expanded balls do not cover
is at least $(z-k')+2k' = z+k'$. As the total weight of outliers must be at most $z$,
we must have $k'=0$, which is a contradiction to the assumption that 
at least one outlier point must be covered by a ball from~$\B^*$. 
That is, the total weight of the outlier points $o_1,\ldots,o_z$ in~$P^*(t')$ is at most~$z$ 
as we want.
\end{proofinproof}




\end{proof}


\begin{lemma}
\label{lem:r:bound}
Let $0 < \eps \leq \frac{1}{8d}$. 
Let $\lambda := 1/(4d\eps)$ be an integer, 
$h := d(\lambda+2)/2$ and $r :=\sqrt{h^2-2h+d}$. 
Then, $r < (1-\eps)(r+h)/2$.
\end{lemma}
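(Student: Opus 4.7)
The plan is to rewrite the target inequality in a form that separates the two scales at play ($h$ versus $1$) and then estimate each piece using the hypothesis $\eps \le 1/(8d)$. Note that $r < (1-\eps)(r+h)/2$ is equivalent to $2r < (1-\eps)(r+h)$, i.e. to
\[
h - r \ > \ \eps(r + h).
\]
So I will establish the stronger chain $h - r \ge 3/4 > 1/2 \ge \eps(r+h)$.

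For the left-hand bound I will use the elementary estimate $\sqrt{h^2 - a} \le h - a/(2h)$, valid for $0 \le a \le h^2$ (it follows from expanding $(h - a/(2h))^2 \ge h^2 - a$). Applied with $a := 2h - d$, which is nonnegative because $h \ge d$, this gives
\[
r \ = \ \sqrt{h^2 - (2h - d)} \ \le \ h - 1 + \frac{d}{2h},
\]
so $h - r \ge 1 - d/(2h)$. Now from $\lambda = 1/(4d\eps)$ and $h = d(\lambda+2)/2$ one computes $h = \tfrac{1}{8\eps} + d$, and the hypothesis $\eps \le 1/(8d)$ gives $1/(8\eps) \ge d$, hence $h \ge 2d$. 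Therefore $d/(2h) \le 1/4$ and $h - r \ge 3/4$.

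For the right-hand bound, since $r \le h$ (because $r^2 = h^2 - 2h + d \le h^2$ once $h \ge d/2$, which is ensured by $h \ge 2d$), we have $\eps(r+h) \le 2\eps h$, and
\[
2\eps h \ = \ 2\eps\!\left(\tfrac{1}{8\eps} + d\right) \ = \ \tfrac{1}{4} + 2\eps d \ \le \ \tfrac{1}{4} + \tfrac{1}{4} \ = \ \tfrac{1}{2},
\]
again using $\eps \le 1/(8d)$. Combining these two bounds yields $h - r \ge 3/4 > 1/2 \ge \eps(r+h)$, which is the desired inequality.

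The proof is essentially a calculation; the only thing that requires care is the correct rewriting of $h$ in the form $1/(8\eps) + d$ so that the dependence on $\eps$ is transparent, and then the two uses of $\eps \le 1/(8d)$ (one to make $2\eps d \le 1/4$, the other to make $h \ge 2d$). There is no substantive difficulty beyond this bookkeeping.
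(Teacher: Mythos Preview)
Your proof is correct. Both your argument and the paper's begin by rewriting the target inequality as $h - r > \eps(r+h)$ (equivalently $r(1+\eps) < h(1-\eps)$), but then diverge. The paper squares both sides, expands $(h^2-2h+d)(1+\eps)^2 < h^2(1-\eps)^2$, and regroups the difference into four terms $h(-4h\eps+1) + (h-d) + \eps^2(2h-d) + 2\eps(2h-d)$, each shown nonnegative. You instead estimate the two sides of $h-r > \eps(r+h)$ directly: the concavity bound $\sqrt{h^2-a}\le h - a/(2h)$ gives $h-r \ge 1 - d/(2h) \ge 3/4$, while $r\le h$ gives $\eps(r+h)\le 2\eps h = 1/4 + 2\eps d \le 1/2$. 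The key observation in your version, writing $h = 1/(8\eps) + d$, makes the dependence on $\eps$ explicit and lets the hypothesis $\eps\le 1/(8d)$ enter exactly twice in a transparent way; the paper's expansion hides this structure in the term $h(-4h\eps+1)$. Your route is slightly cleaner and avoids the squaring step entirely, at the cost of one extra elementary inequality for the square root.
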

\begin{proof}
We start with the statement that we want to prove, and then derive a sequence of equivalent statements, 
until we arrive at a statement that is easily seen to be true. 
Indeed, assume that $r < (1-\eps)(r+h)/2$ is correct. 
Then, we have $2r  <  r+h-\eps r - \eps h $ which means that 
$ r(1+\eps)  <  h(1-\eps) $. 
Since both sides of the inequality are non-negative, 
we can raise them to the power of two to obtain 
$r^2(1+\eps)^2  <  h^2(1-\eps)^2$. 
Since $r=\sqrt{h^2-2h+d}$ and $h^2-2h+d \geq 0$, 
we have $(h^2-2h+d)(1+2\eps+\eps^2)  < h^2(1-2\eps+\eps^2)$ 
which means that 
$ h(-4h\eps + 1) + (h-d) + \eps^2 (2h-d) + 2\eps(2h-d) > 0 $.

We observe that all these four terms of this inequality are non-negative which proves the claim. 
First of all, since $\eps \leq \frac{1}{8d}$, we observe that $-4h\eps + 1 \geq 0$, 
and so, the first term is at least zero. 
Next, as $\lambda \geq 1$, we have $h=d(\lambda+2)/2 > d$ that implies that 
$2h-d > h - d > 0$. Hence, the second, third, and forth terms are greater than zero.
Thus the left-hand side of the above inequality is greater than zero and this finishes the proof.
\end{proof}

\end{document}